\newcolumntype{L}{>{$}l<{$}} 
\newcolumntype{C}{>{$}c<{$}} 
\newcolumntype{R}{>{$}r<{$}} 
\newtheorem{theorem}{Theorem}
\newtheorem{proposition}{Proposition}
\newtheorem{corollary}{Corollary}
\newtheorem{definition}{Definition}
\newtheorem{lemma}{Lemma}
\title{Bivariate Simplex Distribution}
\author{
  Emerson A. Alves\textsuperscript{+}, 
  Lucas S. Vieira\textsuperscript{+}, 
  Lizandra C. Fabio\textsuperscript{+}, \\ 
  Vanessa Barros\textsuperscript{+}, and 
  Jalmar M. F. Carrasco\textsuperscript{+}\thanks{CONTACT Author. Email: carrascojalmar@gmail.com} \\
  \textsuperscript{+}Federal University of Bahia, Institute of Mathematics and Statistics, \\ 
  Bahia, Brazil
}
\date{}
\begin{document}

\maketitle

\begin{abstract}
This article proposes a bivariate Simplex distribution for modeling continuous outcomes constrained to the interval $(0,1)$, which can represent proportions, rates, or indices. We derive analytical expressions to calculate the dependence between the variables and examine its relationship with the association parameter. Parameters are estimated using the maximum likelihood method, and their performance is assessed through Monte Carlo simulations. The simulations explore various aspects of the bivariate distribution, including different surfaces and contour graphs. To illustrate the proposed model's methodology and properties, we present an application in the Jurimetric area. 
\\
\vspace{0.5cm} 
\noindent \textbf{Keyword}: Simplex distribution, Copula, Monte Carlo, Jurimetric.
\end{abstract}

\section{Introduction}
\label{sec:int}

Relevant scientific studies have provided data describing intrinsic phenomena regarding rates, fractions, proportions, or indices. For instance, the distribution supported on the interval (0,1) plays a crucial role in research and application in finance \citet{gomez2014log} and \citet{biswas2021estimation}, public health \cite{mazucheli2019one} and \cite{biswas2019r} and demographics \citet{andreopoulos2019mortality}. In this context, the Beta and Simplex distributions are particularly prominent, with their density functions capable of assuming different shapes depending on parameter values. 

Bivariate distributions are essential in practice because they allow simultaneous analysis and decision-making regarding two related or dependent variables. Methods for constructing joint distributions for random variables can be found in \citet{lai2009continuous} and \citet{kotz2004continuous}. Specifically, for the Beta distribution, \citet{barros2015estimaccao} proposed parameter estimation methods for the bivariate Beta distribution introduced by \citet{nadarajah2005some}. Further studies \cite{arnold2011flexible}, who explored the bivariate Beta distributions for correlated data, and \cite{gupta1985three}, who examined two bivariate Beta distributions derived from the Morgenstern curves and the bivariate Dirichlet distribution, respectively. Other notable contributions include \cite{sarabia2006bivariate}, who studied various bivariate extensions under a Bayesian framework, and \cite{olkin2003bivariate} who demonstrated a positively dependent bivariate Beta distribution via the likelihood ratio. The extension of the Beta distribution to the multivariate case $(0,1)^s$ was investigated by \cite{jones2002multivariate}, while  \cite{moschen2023bivariate} analyzed the Beta distribution proposed by \cite{olkin2015constructions} using both classic and Bayesian approach. 

Despite these advances, the bivariate Simplex distribution remains relatively unexplored for modeling the distribution of two proportions, such as the proportion of budget allocated to different sectors. Bivariate distributions are often constructed using copula functions, which allow the analysis of dependence structures between two random variables independently of their marginal distributions. This approach offers flexibility in combining different types of marginal distributions. Therefore, this article proposes deriving the bivariate Simplex distribution via copulas\footnote{A copula describes a joint distribution function in terms of its marginals and is widely used in empirical analysis across various fields, including survival analysis, actuarial sciences, marketing, medical statistics, and econometrics.} as an alternative method for analyzing bivariate data constrained to the standard unit interval. This contribution is significant for the Simplex distribution framework. 

The article is structured as follows: Section \ref{sec:pre} reviews the properties and inferential processes associated with the Simplex distribution. Section \ref{sec:bSimplex} introduces the bivariate Simplex distribution via copulas, develops analytical expressions for calculating dependence between variables, and provides estimators using the maximum likelihood method. A Monte Carlo simulation study is conducted to investigate the asymptotic behavior of these estimators. Section \ref{sec:apli} applies the methodology to a real dataset from the Jurimetry field to validate the results. Finally, Section \ref{sec:conclu} summarizes the conclusions. 

\section{Preliminaries}
\label{sec:pre}
The Simplex distribution was proposed by \citet{barndorff1991some}; later introduced into a class of dispersion models by \citet{jorgensen1997theory}, which extended the generalized linear models (GLMs) \citep{nelder1972generalized}. The Simplex distribution is very convenient and flexible regarding data restricted to the continuous unit interval (0,1), which can be interpreted as proportions, rates, or indices. Let $y$ be a random variable that follows a Simplex distribution, with parameters $\mu \in (0,1)$ and $\sigma^2>0$. The probability density function (pdf) of this distribution is given by 
\begin{eqnarray}
\label{fdensid2}
f(y;\mu;\sigma^2)=\{2\pi\sigma^2[y(1-y)]^3\}^{-1/2}\exp\Big\{-\frac{1}{2\sigma^2}d(y;\mu)\Big\}, 
\end{eqnarray}
where $0<y<1$ and $d(y;\mu)=(y-\mu)^2/y(1-y)\mu^2(1-\mu)^2$ is the unit deviation. The expected value and variance of $Y$ are given by ${\rm E}(Y)=\mu$  and 
\begin{eqnarray*}
{\rm Var}(Y)=\mu(1-\mu)-\sqrt{\frac{1}{2\sigma^2}}\exp\Big\{\frac{1}{2\sigma^2\mu^2(1-\mu)^2}\Big\}\Gamma\Big\{\frac{1}{2},\frac{1}{2\sigma^2\mu^2(1-\mu)^2}\Big\},
\end{eqnarray*}
where $\Gamma(a,b)=\int_b^\infty x^{a-1} e^{-x} dx$ is the incomplete gamma function. In addition, the variance function is given by $V(\mu)=\mu^3(1-\mu)^3$. 
 
The Simplex distribution can take several shapes by assuming different values for the parameters ($\mu$, $\sigma^2$), as shown in Figure~(\ref{fig:dSimplex}).
\begin{figure}[h!]
    \centering
    \begin{minipage}[!]{0.32\linewidth}
        \includegraphics[width = 5cm]{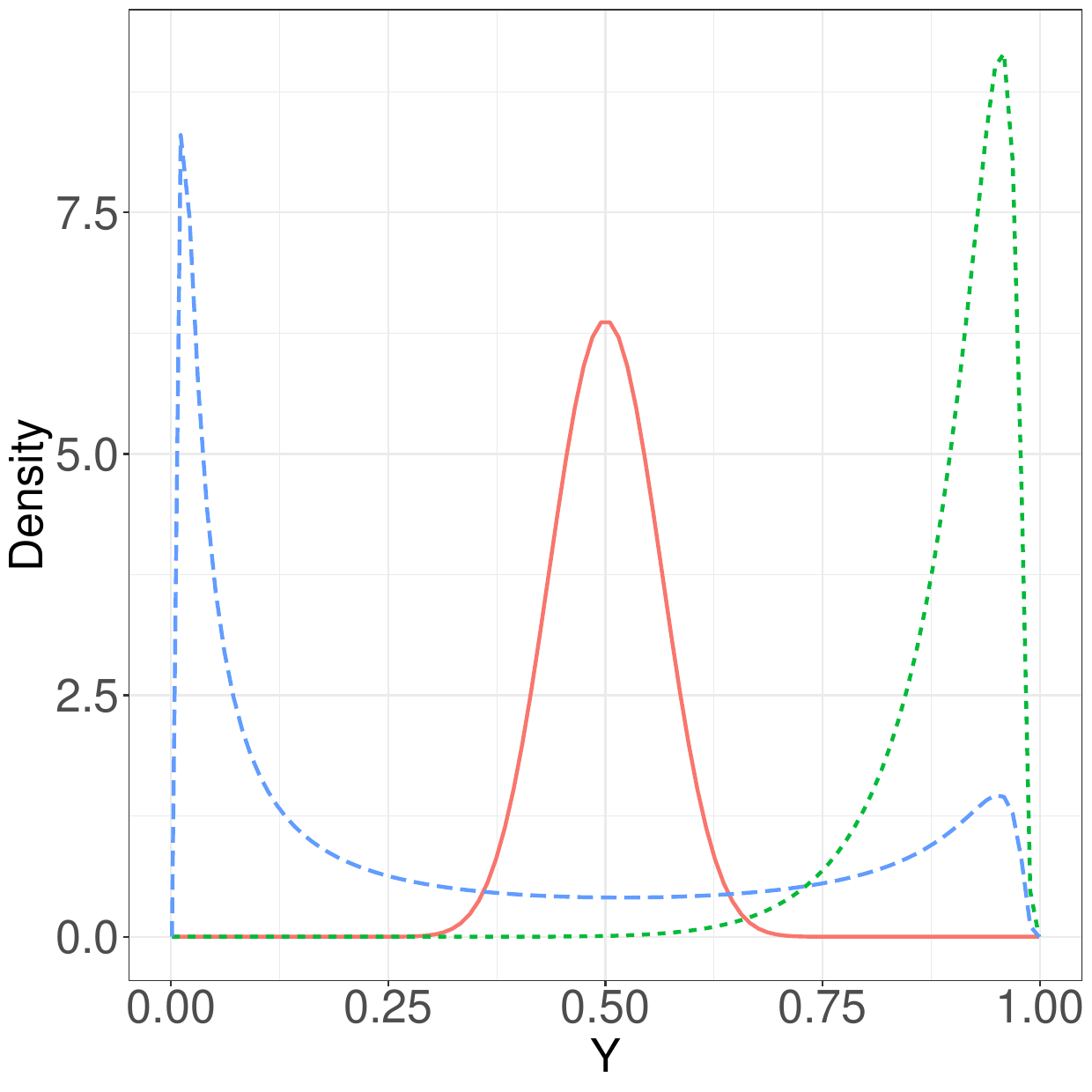}
    \end{minipage}
    \begin{minipage}[!]{0.32\linewidth}
        \includegraphics[width = 5cm]{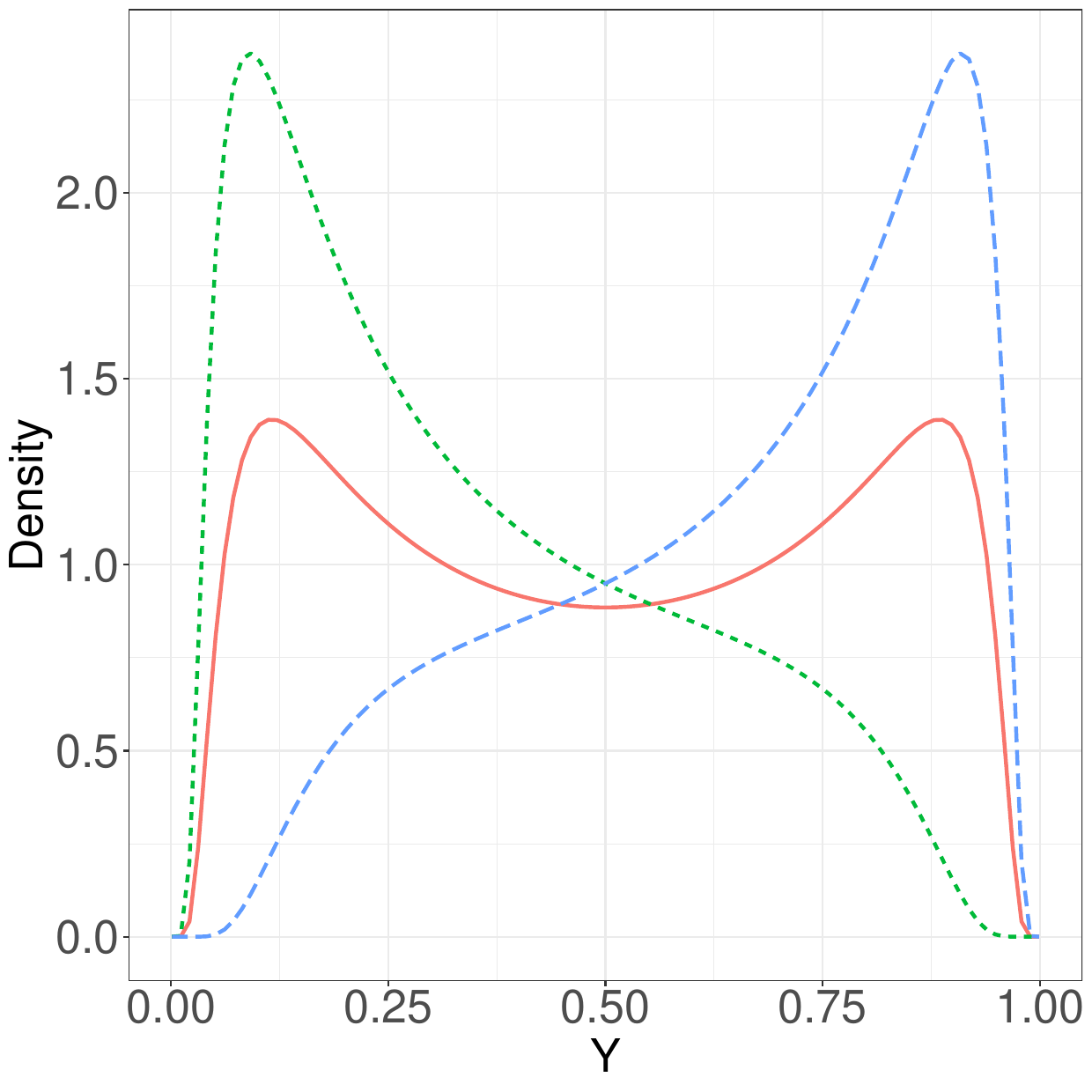}
    \end{minipage}
    \begin{minipage}[!]{0.32\linewidth}
        \includegraphics[width = 5cm]{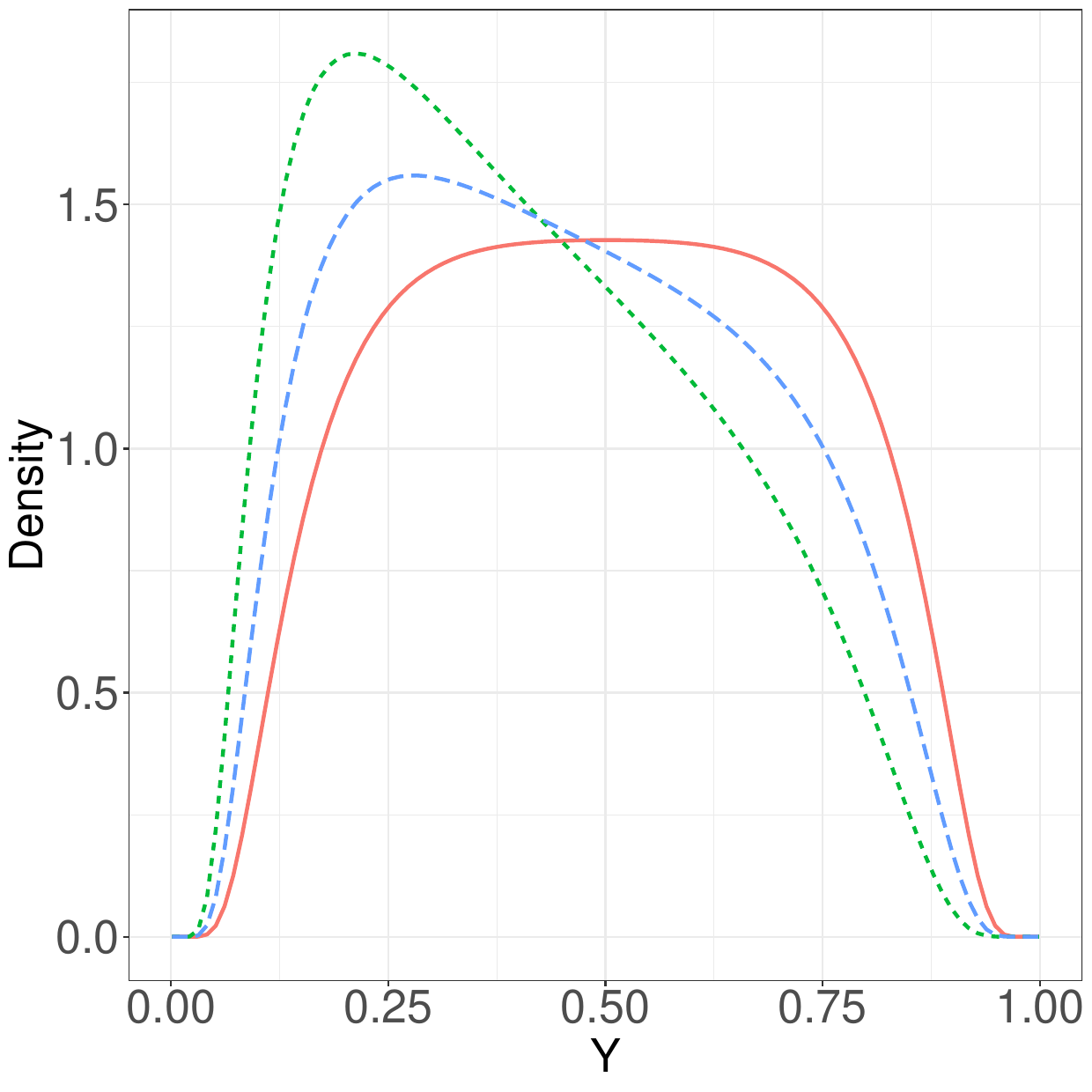}
    \end{minipage}
    \caption{Density graph of the Simplex Distribution for different parameter values.}
    \label{fig:dSimplex}
\end{figure}
If $y$ is a random variable that follows the Simplex distribution with mean $\mu$ and dispersion parameter $\sigma^2$, it frequently is denoted by $y \sim S(\mu,\sigma^2)$. It was shown that $(i)~{\rm E}[d'(y;\mu)]=0$, $(ii)~{\rm Var}[d(y;\mu)]=2(\sigma^2)^2$, $(iii)~{\rm E}[d(y;\mu)]=\sigma^2$,
$(iv)~{\rm E}[(y-\mu)d(y;\mu)]=0$, $(v)~{\rm E}[(y-\mu)d^2(y;\mu)]=0$, $(vi)~{\rm E}[(y-\mu)d''(y;\mu)]=-2\sigma^2$ and
$(vii)~\frac{1}{2}{\rm E}[(d''(y;\mu)]=\frac{3\sigma^2}{\mu(1-\mu)}+\frac{1}{\mu^3(1-\mu)^3}$,
where $d'(y;\mu)=\partial d(y;\mu)/\partial\mu$ e $d''(y;\mu)=\partial^2d(y;\mu)/\partial\mu^2$; for more details see \citet{song2000marginal}. 

Let $y_1, y_2, \ldots, y_n$ be a random sample, such that $y_i$ follows the Simplex distribution, given in (\ref{fdensid2}) for all $i=1,\ldots,n$. The likelihood function for the independent observations is defined as $\text{L}(\bm{\theta};\bm{y}) = \prod_{i=1}^{n}f(y_i;\bm{\theta}),$ where $\bm{\theta}=(\mu,\sigma^2)^{\top}$. The logarithm of the likelihood function is expressed in the form $\ell(\bm{\theta};\bm{y}) = \sum_{i=1}^n \ell_{i} (\bm{\theta};y_i)$, where
$\ell_{i}(\bm{\theta}, y_i)= -\log(2\pi)/2-\log(\sigma^2)/2-3\log[y_i(1-y_i)]/2-d(y_i;\mu)/2\sigma^2$. 
The maximum likelihood estimators for $\mu$ and $\sigma^2$ are found by simultaneously solving the estimation equations, i.e., $\partial \ell(\bm{\theta};\bm{y})/\partial \mu = 0$ e $\partial \ell(\bm{\theta};\bm{y})/\partial \sigma^2 = 0$, where 
\begin{align*}
\frac{\partial\ell(\bm{\theta}; \bm{y})}{\partial\mu} &= 
-\frac{1}{2\sigma^2} \sum^n_{i=1} 
\Bigg(
-\frac{2(y_i - \mu)}{\mu(1-\mu)}
\Big[
d(y_i, \mu) + \frac{1}{\mu^2(1-\mu)^2}
\Big]
\Bigg), \\
\frac{\partial\ell(\bm{\theta}; \bm{y})}{\partial\sigma^2} &= 
-\frac{n}{2\sigma^2} 
+ \sum^n_{i=1} \frac{d(y_i, \mu)}{2(\sigma^2)^2}.
\end{align*}   
It is easy to find that $\widehat{\sigma}^2=\sum_{i=1}^n d(y_i,\mu)/n$.
The second derivatives of $\ell(\bm{\theta};\bm y)$ related to the parameter vector  
are given by $\partial^2 \ell(\bm{\theta};\bm y)/\partial \mu^2=\sum_{i=1}^n -d''(y_i,\mu)/2\sigma^2$,
$\partial^2\ell(\bm{\theta};\bm y)/\partial(\sigma^2)^2=n/2\sigma^4-\sum_{i=1}^n d(y_i,\mu)/(\sigma^2)^{3}$ and
$\partial^2\ell(\bm{\theta};\bm y)/\partial\mu\partial\sigma^2=- \sum_{i=1}^n d'(y_i,\mu)/(\sigma^2)^2$ 
where $d^{\prime \prime}(y_i,\mu)= \partial^2 d(y_i,\mu)/\partial \mu^2$. 
Using properties $(i)$, $(iii)$, and $(vii)$ above, the Fisher’s information matrix is given by
\begin{eqnarray*}
  K(\bm{\theta})=K(\mu,\sigma^2)=  
  \left(
  \begin{tabular}{cc}
         $K_{\mu\mu}$ & $K_{\mu\sigma^2}$ \\
         $K_{\sigma^2 \mu}$ & $K_{\sigma^2\sigma^2}$\\
  \end{tabular}
  \right),
\end{eqnarray*}
where $K_{\mu \mu}=-{\rm E}[\partial^2 \ell(\bm{\theta};\bm y)/\partial \mu^2]=3n/\mu(1-\mu)+n/\sigma^2\mu^3(1-\mu)^3$, 
$K_{\mu\sigma^2}=K_{\sigma^2\mu}=-{\rm E}[\partial^2\ell(\bm{\theta};\bm y)/\partial\mu\partial\sigma^2]=0$ and
$K_{\sigma^{2}\sigma^{2}}=-{\rm E}[\partial^2\ell(\bm{\theta};\bm y)/\partial (\sigma^2)^2]=-n/2\sigma^4$
respectively. Under general regularity conditions, the maximum likelihood estimators are consistent, and the asymptotic distribution of $\sqrt{n}(\widehat{\bm{\theta}}-\bm{\theta})$  is normal with mean zero and covariance matrix $K^{-1}(\bm{\theta})$. 

\section{Bivariate Simplex distribution}
\label{sec:bSimplex}

In this section, we deduced the bivariate Simplex distribution via copulas because it is known that Bivariate distributions using copulas are powerful tools for modeling complex dependence structures between two variables. By separating the marginal distributions from the dependency, copulas offer more flexibility and accuracy, especially in fields where non-linear relationships and tail dependencies are essential. Expression analytics are deduced for the expected dependence between variables. 

The copula was introduced by \citet{sklar1959fonctions}, although similar ideas and results can be found in \cite{hoeffding1940masstabinvariante}. The copula function is one of several ways of generating multivariate distributions. A copula is defined as a joint distribution function of the form
\begin{eqnarray}
  C(u_1, u_2, \ldots, u_k)=P(U_1 \leq u_1, U_2 \leq u_2, \ldots, U_k \leq u_k ), 
\end{eqnarray}
where $0 \leq u_j \leq 1$, $U_j \sim U(0,1)$, for every $j = 1, 2,\ldots,k$. Suppose $H(\cdot)$ is a $k$-dimensional cumulative distribution function with marginals $F_1(\cdot),\ldots,F_k(\cdot)$. Then, according to \cite{sklar1959fonctions}, there is a $k$-dimensional copula $C$ such that, for every $(y_1,\ldots, y_k)\in (-\infty,\infty)^k$, $H(y_1,\ldots,y_k)=C(F_1(y_1),\ldots,F_k(y_k))$. In this sense, $C$ is unique if  $F_1,\ldots, F_k$ are all continuous. The above result guarantees that we can find the joint distribution of $k$ random variables $y_1,\ldots,y_k$, i.e., given a set of continuous random variables with marginal distribution functions $F_1(y_1),\ldots, F_k(y_k)$ and a distribution function $H(y_1,\ldots,y_k)$; the joint density function is given by
\begin{eqnarray*}
h(y_1, ..., y_k)&=&\frac{\partial^k H(y_1,...,y_k)}{\partial y_1...\partial y_k}
 = \frac{\partial^k C(F_1(y_1),...,F_k(y_k))}{\partial F_1(y_1)...\partial F_k(y_k)}\frac{\partial F_1(y_1)}{\partial y_1}\times ...\times \frac{\partial F_k(y_k)}{\partial y_k}
 \\ &=& c(F_1(y_1)...F_k(y_k))\prod^k_{i=1} f_i(y_i),
\end{eqnarray*}
where
$c(F_1(y_1),...,F_k(y_k))= \dfrac{\partial^k C(F_1(y_1),...,F_k(y_k))}{\partial F_1(y_1) \ldots \partial F_k(y_k)}$ and $f_i(y_i)=\dfrac{\partial F_i(y_i)}{\partial y_i}$. In particular, if $y_1$ and $y_2$ are two random variables with a joint distribution function $F(y_1,y_2)$ and continuous marginal distribution functions $F_1(y_1)$ and $F_2(y_2)$, respectively, then there exists a single copula $C: [0,1]^2 \rightarrow [0,1]$ such that for all $(y_1,y_2) \in R^2$ and $F(y_1,y_2)=C(F_1(y_1),F_2(y_2))$, expressing a bivariate density function with marginals $F_1$ and $F_2$. 

The FGM copula (Farlie-Gumbel-Morgenstern copula) is a relatively simple and classical copula used to model the dependence between two random variables. While it is not as flexible as other copulas, it provides a basic structure to capture weak positive or negative dependence. The FGM copula is defined as
\begin{eqnarray*}
    C(F_1(y_1),F_2(y_2); \lambda) = F_1(y_1)F_2(y_2) + \lambda F_1(y_1)F_2(y_2)(1 - F_1(y_1))(1 - F_2(y_2)),
\end{eqnarray*}
where $\lambda \in [-1, 1]$ is the dependence parameter. If $\lambda=0$, the copula reduces to the independence copula, meaning  $C(F_1(y_1), F_2(y_2)) = F_1(y_1) \times F_2(y_2)$, which indicates that the two random variables are independent; for $\lambda>0$ (Positive dependence), meaning the two variables tend to increase together (though weakly); for $\lambda<0$ (Negative dependence), meaning that when one variable increases, the other tends to decrease (weak negative dependence).

The Farlie-Gumbel-Morgenstern (FGM) copula is a valuable tool in the realm of dependence modeling due to its simplicity and interpretability. Its structure offers an intuitive way to capture weak dependence between variables, making it particularly useful in scenarios where more complex copulas may be unnecessary, or overfitting may be a concern. The FGM copula is symmetric, which allows it to treat both variables equally, and its dependence parameter $\lambda$ is easy to interpret, making it accessible for analysts and practitioners alike. While it does not capture extreme tail dependence, this feature can be advantageous in applications where only moderate associations are needed, such as in educational settings, initial exploratory analyses, or simple bivariate models. The FGM copula’s tractability and low computational demands make it a practical choice when the goal is to understand basic dependence structures without the overhead of more complex copulas, allowing it to strike a balance between simplicity and effectiveness. Thus, in this particular paper, we assume that the variables $y_1$ and $y_2$ follow univariate Simplex distributions, and the density function $f(y_1,y_2)$, is as follows:
\begin{eqnarray}
    \label{eq:bsimplex}
    f(y_1, y_2, \bm{\theta})=\{2\pi\sigma^2_{1}[y_{1}(1-y_{1})]^3\}^{-1/2}\exp\left\{-\frac{1}{2\sigma^2_{1}}d(y_{1};\mu_{1})\right\}\times\nonumber \\ 
    \{2\pi\sigma^2_{2}[y_{2}(1-y_{2})]^3\}^{-1/2}\exp\left\{-\frac{1}{2\sigma^2_{2}}d(y_{2};\mu_{2})\right\}\times\nonumber \\
    \{1+\lambda[1-2F_1(y_{1})][1-2F_2(y_{2})]\},
\end{eqnarray}
where $F_1(y_1)$ and $F_2(y_2)$ are the distribution functions of $y_1$ and $y_2$, respectively, and $\bm{\theta} = (\mu_1, \mu_2, \sigma^2_1, \sigma^2_2, \lambda)^{\top}$ is the vector of unknown parameters, were $\mu_1$, $\mu_2$  and $\sigma_1$, $\sigma_2$ represent the location and dispersion parameters, respectively and $\lambda$ is the dependency parameter. We denote $\bm{y} \sim S^{2}(\bm{\mu},\bm{\sigma}^2,\lambda)$ the random variable vector $\bm{y}=(y_1,y_2)^{\top}$  which following the bivariate Simplex distribution with  $\bm{\mu}=(\mu_1,\mu_2)^{\top}$  and $\bm{\sigma}=(\sigma_1,\sigma_2)^{\top}$ parameters.

\subsection{Moments}
\label{subsec:mom}

In this sub-section, we derive the joint expectation ${\rm E}(y_1y_2)$ in the theorem below. For better algebraic manipulation, we will consider the variables $y_1$ and $y_2$ as $x$ and $y$, respectively.  

\begin{theorem}
Let \( (x,y)^{\top} \) be a random vector following the bivariate Simplex distribution, where \( (\mu_x, \mu_y, \sigma^2_x, \sigma^2_y, \lambda)^{\top} \) is the vector of parameters. The joint moment of \( x \) and \( y \) is given by:
\begin{equation}
\label{EXY1}
{\rm E}[xy]=\mu_x\mu_y+\lambda
\Big[\frac{r_x^2\pi}{2}\Big(\frac{1}{a_x\xi_x}+\frac{1}{a_x}+A_x\Big)-\mu_x\Big]
\Big[\frac{r_y^2\pi}{2}\Big(\frac{1}{a_y\xi_y}+\frac{1}{a_y}+A_y\Big)-\mu_y\Big],
\end{equation}
where $A_m=1-2\Big(K_{0}(2a_m)L_{-1}(2a_m)+K_{1}(2a_m)L_{0}(2a_m)\Big)$, $a_m=(\xi_m+1)^2/\sigma_m^2\xi_m$, $\xi_m=(1/\mu_m)-1$ and $r_m=1/(\sigma_m\sqrt{2\pi})$ for $m=\{x,y\}$,  $K_{\nu}(\cdot)$ represents the modified Bessel function of the second kind and $L_{\nu}(\cdot)$  denotes the modified Struve function.
\end{theorem}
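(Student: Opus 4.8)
The starting point is the product form of the FGM density in~\eqref{eq:bsimplex}. Writing $f_m$ and $F_m$ for the univariate Simplex pdf and cdf with parameters $(\mu_m,\sigma_m^2)$, $m\in\{x,y\}$, the double integral defining ${\rm E}[xy]$ separates, and since $\int_0^1 t\,f_m(t)\,dt=\mu_m$ one gets
\begin{equation*}
{\rm E}[xy]=\mu_x\mu_y+\lambda\,I_x\,I_y,\qquad I_m:=\int_0^1 t\,f_m(t)\bigl[\,1-2F_m(t)\,\bigr]\,dt .
\end{equation*}
Hence the whole theorem reduces to the \emph{one-dimensional} identity $I_m=\tfrac{r_m^2\pi}{2}\bigl(\tfrac{1}{a_m\xi_m}+\tfrac{1}{a_m}+A_m\bigr)-\mu_m$, applied with $m=x$ and $m=y$.

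The next step is to simplify $I_m$ before touching any special function. Since $\tfrac{d}{dt}\bigl[F_m(t)(1-F_m(t))\bigr]=f_m(t)\bigl(1-2F_m(t)\bigr)$ and $F_m(0^{+})=0$, $F_m(1^{-})=1$, integration by parts kills the boundary term and gives $I_m=-\int_0^1 F_m(t)(1-F_m(t))\,dt$. Using the mean identity $\int_0^1 F_m(t)\,dt=1-\mu_m$ for a law supported on $(0,1)$, this becomes $I_m=\mu_m-1+\int_0^1 F_m(t)^2\,dt$, or, probabilistically, $I_m={\rm E}(Y_m)-{\rm E}\bigl[\max(Y_{m,1},Y_{m,2})\bigr]=\mu_m-2\int_0^1 t\,f_m(t)F_m(t)\,dt$ for two independent copies of $Y_m\sim S(\mu_m,\sigma_m^2)$. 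So everything hinges on evaluating $\int_0^1 t\,f_m(t)F_m(t)\,dt$ (equivalently $\int_0^1 F_m^2$) in closed form.

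To make that integral tractable I would pass to the inverse-Gaussian world through the substitution $u=t/(1-t)$. Under it the unit deviance $d(t;\mu_m)$ becomes a constant multiple of $(u-\xi_m^{-1})^2/u$ and the factor $[t(1-t)]^{-3/2}\,dt$ becomes $(1+u)\,u^{-3/2}\,du$; splitting $1+u$ exhibits the image law as a mixture of an inverse-Gaussian and a reciprocal inverse-Gaussian distribution, so $F_m$ acquires its classical closed form as a combination of standard normal cdfs $\Phi$. Substituting that closed form for $F_m$ into $\int_0^1 t\,f_m(t)F_m(t)\,dt$ and re-using the same change of variables reduces the computation to integrals of two shapes: elementary ones $\int_0^\infty u^{\pm1/2}e^{-\alpha u-\beta/u}\,du$, which deliver the rational terms $\tfrac{1}{a_m\xi_m}+\tfrac{1}{a_m}$; and $\Phi$-weighted ones $\int_0^\infty u^{\pm1/2}e^{-\alpha u-\beta/u}\,\Phi(\cdot)\,du$. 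Writing $\Phi$ through its own integral representation, exchanging the order of integration and carrying out the inner Gaussian integral, the $\Phi$-weighted family collapses, via the standard integral representations of the modified Bessel function of the second kind $K_\nu$ and of the modified Struve function $L_\nu$, to $K_0(2a_m)L_{-1}(2a_m)+K_1(2a_m)L_0(2a_m)=\tfrac12(1-A_m)$. Collecting the constant prefactors ($r_m=1/(\sigma_m\sqrt{2\pi})$, $\xi_m=1/\mu_m-1$, $a_m=(\xi_m+1)^2/(\sigma_m^2\xi_m)$) then produces the claimed expression for $I_m$, and substituting $I_x$ and $I_y$ into ${\rm E}[xy]=\mu_x\mu_y+\lambda I_xI_y$ finishes the proof.

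The main obstacle is precisely that last identification: recognising that the $\Phi$-weighted Bessel-type integral coming out of $\int_0^1 t\,f_m(t)F_m(t)\,dt$ is exactly $K_0(2a_m)L_{-1}(2a_m)+K_1(2a_m)L_0(2a_m)$; this is where the modified Struve function enters and where care with the integral representations is needed. The preceding steps — the FGM factorisation, the integration by parts, and the $u=t/(1-t)$ reduction to inverse-Gaussian integrals — are routine, and the final collection of constants, though lengthy, is mechanical. A convenient consistency check is that $I_m\le 0$ (since $\max(Y_{m,1},Y_{m,2})\ge Y_m$), so $I_xI_y\ge 0$ and $\mathrm{Cov}(x,y)=\lambda I_xI_y$ has the sign of $\lambda\in[-1,1]$, in agreement with the qualitative behaviour of the FGM copula described above.
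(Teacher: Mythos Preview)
Your FGM factorisation and the reduction to $I_m=\mu_m-2\int_0^1 t f_m(t)F_m(t)\,dt$ are correct and coincide with the paper (its bracket is $-I_m$). The integration by parts and the order-statistic reading $I_m={\rm E}(Y_m)-{\rm E}[\max(Y_{m,1},Y_{m,2})]$ are pleasant extras the paper does not give; they justify your sign check at the end but do not shorten the computation, since you still have to evaluate $\int_0^1 t f_m F_m$.

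From that point your route diverges. You propose to insert the closed form of $F_m$ as a combination of normal cdfs $\Phi$ (via the IG/RIG mixture under $u=t/(1-t)$) and then evaluate the resulting $\Phi$-weighted integrals directly. The paper never writes $F_m$ explicitly: it keeps the double integral $\int_0^1 t f_m(t)\bigl[\int_0^t f_m(s)\,ds\bigr]dt$, transforms both variables by $s=1/(z+1)$, $t=1/(\omega+1)$, passes to the ratio $p=z/\omega$, swaps the order, and applies $\int_0^\infty x^{\nu-1}e^{-(\beta/x+\gamma x)/2}\,dx=2(\beta/\gamma)^{\nu/2}K_\nu(\sqrt{\beta\gamma})$ in $\omega$. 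This lands on $J_1=\int_1^\infty q^{-1}K_1\bigl(a(q^2+1)/q\bigr)\,dq$ and $J_0=\int_1^\infty K_0\bigl(a(q^2+1)/q\bigr)\,dq$. The substitution $q=e^t$ turns $J_1$ into $\int_0^\infty K_1(2a\cosh t)\,dt=\tfrac12 K_{1/2}^2(a)=\tfrac{\pi}{4a}e^{-2a}$, and splits $J_0$ via $e^t=\cosh t+\sinh t$ into another $K_{1/2}^2$ piece plus $\tfrac{1}{2a}\int_{2a}^\infty K_0(y)\,dy$. The Struve functions enter \emph{only} through the antiderivative identity $\int K_0(z)\,dz=\tfrac{\pi}{2}z\bigl(K_0 L_{-1}+K_1 L_0\bigr)(z)$, combined with the limit $z\bigl(K_0 L_{-1}+K_1 L_0\bigr)(z)\to 1$ as $z\to\infty$.

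So your plan is not wrong, but the step you yourself flag as the ``main obstacle'' is not actually carried out: you assert that the $\Phi$-weighted integrals ``collapse, via the standard integral representations of $K_\nu$ and $L_\nu$'', to $K_0(2a_m)L_{-1}(2a_m)+K_1(2a_m)L_0(2a_m)$, yet no standard representation produces such a \emph{product} directly. In the paper this combination arises only as the antiderivative of $K_0$, after a chain of substitutions designed to isolate $\int_{2a}^\infty K_0$. If you follow your line you will still need to show that, after unpacking $\Phi$ and swapping the order of integration, the outer integral reduces to an indefinite integral of $K_0$; until that reduction is written down there is a gap precisely at the hardest point of the argument.
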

\begin{proof}

Since ${\rm E}[xy]=\int_0^1 \int_0^1 xyf(x,y)dxdy$, where $f(x,y)$ is given in Equation \eqref{eq:bsimplex}, it is clear that to calculate ${\rm E}[xy]$, it is sufficient to evaluate the following integral:
\begin{eqnarray*}
  I=\int_0^1 xg(x;\mu,\sigma^2)G(x;\mu,\sigma^2)dx,  
\end{eqnarray*}
where $g(\cdot; \mu,\sigma^2)$ denotes the density function of the univariate Simplex distribution with parameters $\mu$ and $\sigma^2$, and $G(\cdot,\mu,\sigma^{2})$ represents its cumulative distribution function. Then
\begin{eqnarray*}
I=\int_0^1 x\{2\pi \sigma^2 [x(1-x)]^3\}^{-1/2}e^{-\frac{1}{2\sigma^2}d(x,\mu)}
\Big(\int_0^x \{2\pi \sigma^2 [t(1-t)]^3\}^{-1/2}e^{-\frac{1}{2\sigma^2}d(t,\mu)}dt\Big)dx.   
\end{eqnarray*}
Letting $r=1/\sigma\sqrt{2\pi}$, the previous expression can be simplified by
\begin{eqnarray*}
I=r^2\int_0^1 \frac{1}{x^{1/2}(1-x)^{3/2}}e^{-\frac{1}{2\sigma^2}d(x,\mu)}
\int_0^x \frac{1}{t^{3/2}(1-t)^{3/2}}e^{-\frac{1}{2\sigma^2}d(t,\mu)}dt dx.   
\end{eqnarray*}
Applying the change of variable $t=1/(z+1)$ we find
\begin{eqnarray*}
I=r^2\int_0^1 \frac{1}{x^{1/2}(1-x)^{3/2}}e^{-\frac{1}{2\sigma^2}d(x,\mu)}
\int_{\frac{1}{x}-1}^\infty \frac{z+1}{z^{3/2}}e^{-\frac{1}{2\sigma^2}\frac{(\xi-z)^2(\xi+1)^2}{z\xi^2}}dzdx.    
\end{eqnarray*}
Setting \( \mu = 1/(\xi+1) \) and performing the change of variable \( x = 1/(\omega+1) \) in the outer integral results in:
\begin{eqnarray*}
I=r^2\int_0^\infty \frac{1}{\omega^{3/2}}e^{-\frac{1}{2\sigma^2}\frac{(\xi-\omega)^2(\xi+1)^2}{\omega\xi^2}}
\int_{\omega}^\infty \frac{z+1}{z^{3/2}}e^{-\frac{1}{2\sigma^2}\frac{(\xi-z)^2(\xi+1)^2}{z\xi^2}}dz d\omega.    
\end{eqnarray*}
We observe that setting \( \beta = (\xi+1)^2 / \sigma^2 \) and \( \gamma = (\xi+1)^2 / (\sigma^2 \xi^2) \), the exponential term in \( z \) becomes \( \exp\{-(\beta z^{-1} + \gamma z)/2\}\exp\{a\} \), where \( a = (\xi+1)^2 / (\sigma^2 \xi) \). Similarly, for the exponential term in \( \omega \). We then obtain the following simplified expression:
\begin{eqnarray*}
I=r^2 e^{2a} \int_0^\infty \frac{1}{\omega^{3/2}} e^{-\frac{1}{2}(\beta \omega^{-1} + \gamma \omega)}
\int_{\omega}^\infty \frac{z+1}{z^{3/2}} e^{-\frac{1}{2}(\beta z^{-1} + \gamma z)} dz d\omega.
\end{eqnarray*}
Once again, we use a change of variable in the inner integral. By letting $p=z/\omega$, then changing the order of the integrals and simplifying the terms, we get:
\begin{eqnarray}
\label{expressao}
I=r^2e^{2a}\int_{1}^\infty \int_0^\infty 
\frac{1}{p^{3/2}}\left(\frac{1}{\omega^2}+\frac{p}{\omega}\right)e^{-\frac{1}{2}((\beta+\beta/p) \omega^{-1}+(\gamma+\gamma p) \omega)}d\omega dp.
\end{eqnarray}
We intend to apply Proposition \ref{teobessel} to equation \ref{expressao}. To do that, we observe that the above integral can be separated into a sum of two integrals $I=I_1+I_2$, where
\begin{eqnarray*}
I_1=r^2e^{2a}\int_{1}^\infty \frac{1}{p^{3/2}}\int_0^\infty 
\omega^{\nu_1 -1}e^{-\frac{1}{2}((\beta+\beta/p) \omega^{-1}+(\gamma+\gamma p) \omega)}d\omega dp, \text{ where } \nu_1=-1,   
\end{eqnarray*}
and
\begin{eqnarray*}
I_2=r^2e^{2a}\int_{1}^\infty \frac{1}{p^{1/2}}\int_0^\infty 
\omega^{\nu_2 -1}e^{-\frac{1}{2}((\beta+\beta/p) \omega^{-1}+(\gamma+\gamma p) \omega)}d\omega dp,\text{ where } \nu_2=0.     
\end{eqnarray*}
Now, applying Proposition \ref{teobessel} in each of the integrals, the expression \eqref{expressao} reduces to the following sum of integrals of Bessel functions:
\begin{equation*}
I=2r^2e^{2a} \Bigg\{ \frac{1}{\xi} \int_{1}^\infty \frac{1}{p} K_1\Big(a \frac{p+1}{p^{1/2}}\Big) dp +\int_{1}^\infty \frac{1}{p^{3/2}} K_0\Big(a \frac{p+1}{p^{1/2}}\Big) \Bigg\}.
\end{equation*}
Applying a change of variables, let $q=\sqrt{p}$. We then obtain:
\begin{equation}
\label{expressao3}
I=r^2e^{2a} \{ \frac{1}{\xi} \int_{1}^\infty \frac{1}{q} K_1\Big(a \frac{q^2+1}{q}\Big)dq +\int_{1}^\infty K_0\Big(a \frac{q^2+1}{q}\Big) dq\}.
\end{equation}
For simplicity, we denote $J_0$ and $J_1$ by the following integrals
\begin{eqnarray*}
J_0=\int_{1}^\infty K_0\Big(a \frac{q^2+1}{q}\Big)dq, \quad {\rm and} \quad J_1=\int_{1}^\infty \frac{1}{q} K_1\Big(a \frac{q^2+1}{q}\Big)dq. 
\end{eqnarray*}
We recall that the expression \eqref{expressao3} represents our integral \( \int_0^1 x g(x;\mu,\sigma^2) G(x;\mu,\sigma^2) \, dx \) from the beginning of this proof. Therefore, substituting \( J_0 \) and \( J_1 \) into the expression \eqref{expressao3}, we obtain:
\begin{equation}
\label{integralgG}
I=\int_0^1 xg(x;\mu,\sigma^2)G(x;\mu,\sigma^2)dx=r^2e^{2a}\{J_1/\xi+J_0\}.
\end{equation}
Note that the integrals \( J_0 \) and \( J_1 \) were calculated in Lemma \ref{lema0} and Corollary \ref{lema1} respectively:

\begin{equation}
\label{J1}
    J_1=\frac{\pi}{4a}e^{-2a},
\end{equation}
and

\begin{equation}\label{J0}
    J_0=\frac{\pi}{4a}e^{-2a}+\Big[\frac{\pi}{4}-\frac{\pi}{2}(K_{0}(2a)L_{-1}(2a)+K_{1}(2a)L_{0}(2a))\Big].
\end{equation}

To compute ${\rm E}[xy]$, we use identity  \eqref{integralgG} in the following expression
\begin{eqnarray*}
{\rm E}[xy]=\mu_x\mu_y +&\lambda&\Bigg[2\int_0^1 xg(x;\mu_x,\sigma_{x}^{2})G(x;\mu_x,\sigma_{x}^{2})dx-\mu_x\Bigg] \times \\ &&\Bigg[2\int_0^1 yh(y;\mu_y,\sigma_{y}^{2})H(y;\mu_y,\sigma_{y}^{2})dy-\mu_y\Bigg],    
\end{eqnarray*}
to obtain
\begin{equation}
\label{EXY}
{\rm E}[xy]=\mu_x\mu_y+\lambda\Big[2r_x^{2}e^{2a_x}(J_1^{x}/\xi_{x}+J_0^x)-\mu_1\Big]\Big[2r_y^{2}e^{2a_y}(J_1^{y}/\xi_{y}+J_0^{y})-\mu_2\Big],
\end{equation}
where $r_m=1/\sigma_m\sqrt{2\pi}$, $a_m=(\xi_x+1)^2/\sigma_m^2\xi_m$ , $\xi_m=(1/\mu_m)-1$, $J_0^m=\int_{1}^\infty K_0(a_m \frac{m^2+1}{m})dm$, $J_1^m=\int_{1}^\infty \frac{1}{m} K_1(a_m \frac{m^2+1}{m})dm$, $m\in \{x,y\}.$

Finally, substituting the identities \eqref{J1} and \eqref{J0} in \eqref{EXY} yields:

\begin{equation}
\label{EXY1}
{\rm E}[xy]=\mu_x\mu_y+\lambda
\Big[r_x^2\frac{\pi}{2}\Big(\frac{1}{a_x\xi_x}+\frac{1}{a_x}+A_x\Big)-\mu_x\Big]
\Big[r_y^2\frac{\pi}{2}\Big(\frac{1}{a_y\xi_y}+\frac{1}{a_y}+A_y\Big)-\mu_y\Big],
\end{equation}
where $A_x=1-2(K_{0}(2a_x)L_{-1}(2a_x)+K_{1}(2a_x)L_{0}(2a_x))$ and $A_y=1-2(K_{0}(2a_y)L_{-1}(2a_y)+K_{1}(2a_y)L_{0}(2a_y))$.
\end{proof}

\subsection{Maximum likelihood estimation}
\label{subsec:Esti}

Let $\bm{y}=(\bm{y}_1, \bm{y}_2, \ldots, \bm{y}_n)^{\top}$ represent a vector, where $\bm{y}_i=(y_{i1}, y_{i2})^{\top}$ follow the bivariate Simplex distribution given in (\ref{eq:bsimplex}) , for $i=1,\ldots, n$. The likelihood function of  all pairs of observations is defined as $\text{L}(\bm{\theta}, \bm{y})=\prod_{i=1}^{n}f(\bm{y}_{i};\bm{\theta} ),$  where $\bm{\theta} = (\mu_1, \mu_2, \sigma^2_1, \sigma^2_2, \lambda)^{\top}$ and the logarithm of the likelihood function is given by $\ell(\bm{\theta};\bm{y}) = \sum_{i=1}^n \ell_{i} (\bm{\theta};\bm{y}_i)$, where

\begin{eqnarray}
\label{logVero}
\ell_i(\bm{\theta};\bm{y}_i)= -\log(2\pi)-\frac{1}{2}\log(\sigma^2_{1})-\frac{3}{2}\log[y_{i1}(1-y_{i1})]-\frac{1}{2\sigma^2_{1}}d(y_{i1};\mu_{1})\nonumber \\ -\frac{1}{2}\log(\sigma^2_{2})-\frac{3}{2}\log[y_{i2}(1-y_{i2})]-\frac{1}{2\sigma^2_{2}}d(y_{i2};\mu_{2})\nonumber \\ + \log\{1+\lambda[1-2F_1(y_{i1})][1-2F_2(y_{i2})]\}.
\end{eqnarray}

By partially deriving the logarithm of the likelihood function with respect to the vector of parameters, the elements of the score vector $\bm{U}(\bm{\theta})=(U_{\mu_1}, U_{\mu_2}, U_{\sigma^{2}_{1}}, U_{\sigma_{2}^{2}}, U_{\lambda})^{\top}$, are obtained from the expressions:
\begin{eqnarray*}
U_{\mu_1}=\frac{\partial \ell(\bm{\theta};\bm y)}{\partial \mu_1} &=& \sum^n_{i=1}\frac{d(y_{i1}; \mu_1)}{\sigma^2_1 \mu_1 (1-\mu_1)} \left[ (y_{i1}-\mu_1) + \frac{y_{i1}(1y_{i1})}{y_{i1}-\mu_1} \right] + \\
& & \frac{2\lambda\{2F_2(y_{i2})-1\}\dot{F_1}_{\mu_1}}{1+\lambda\{ 2F_1(y_{i1})-1 \} \{2F_2(y_{i2})-1\}}, \\ \\
U_{\mu_2}=\frac{\partial \ell(\bm{\theta};\bm y)}{\partial \mu_2} &=& \sum^n_{i=1}\frac{d(y_{i2}; \mu_2)}{\sigma^2_2 \mu_2 (1-\mu_2)} \left[ (y_{i2}-\mu_2) + \frac{y_{i2}(1-y_{i2})}{y_{i2}-\mu_2} \right] + \\ 
& & \frac{2\lambda\{2F_1(y_{i1})-1\}\dot{F_2}_{\mu_2}}{1+\lambda\{ 2F_2(y_{i2})-1 \} \
2F_1(y_{i1})-1\}}, \\ \\
U_{\sigma^{2}_{1}}=\frac{\partial \ell(\bm{\theta};\bm y)}{\partial \sigma^2_1} &=& \sum^n_{i=1}\frac{1}{2\sigma^2_1} \left[ \frac{d(y_{i1};\mu_1)}{\sigma^2_1}-1 \right] + \frac{2\lambda\{2F_2(y_{i2})-1\}\dot{F_1}_{\sigma^2_1}}{1+\lambda\{2F_1(y_{i1})-1\}\{2F_2(y_{i2})-1\}}, \\ \\
U_{\sigma_{2}^{2}}=\frac{\partial \ell(\bm{\theta};\bm y)}{\partial \sigma^2_2} &=& \sum^n_{i=1}\frac{1}{2\sigma^2_2} \left[ \frac{d(y_{i2};\mu_2)}{\sigma^2_2}-1 \right] + \frac{2\lambda\{2F_1(y_{i1})-1\}\dot{F_2}_{\sigma^2_2}}{1+\lambda\{2F_2(y_{i2})-1\}\{2F_1(y_{i1})-1\}}, \\ \\
U_{\lambda}=\frac{\partial \ell(\bm{\theta};\bm y)}{\partial \lambda} &=& \sum^n_{i=1}\frac{\{2F_1(y_{i1})-1\}\{2F_2(y_{i2})-1\}}{1+\lambda\{2F_1(y_{i1})-1\}\{2F_2(y_{i2})-1\}},
\end{eqnarray*}
where $\dot{F_1}_{\mu_1} = \partial F_1(y_1)/\partial \mu_1$, $\dot{F_2}_{\mu_2} = \partial F_2(y_2)/\partial \mu_2$, $\dot{F_1}_{\sigma^2_1}  = \partial F_1(y_1)/\partial \sigma^2_1$, e $\dot{F_2}_{\sigma^2_2} = \partial F_2(y_2)/\partial \sigma^2_2$. Similarly to the univariate case, it is possible to find the observed information matrix, $J(\bm{\theta})=\partial\ell(\bm{\theta};\bm y)/\partial\bm{\theta}\partial\bm{\theta}^{\top}$, 
whose partial derivatives are given by

\begin{align*}
J_{\mu_1\mu_1}(\bm{\theta})=& \frac{\partial^2 \ell(\bm{\theta})}{\partial \mu_1^2} =
\sum^n_{i=1}-\frac{1}{\sigma^2_1 \mu_1^2 (1-\mu_1)^2} \Big[ \frac{3y_{i1}(1-y_{i1})}{\mu_1^2 (1-\mu_1)^2} - \frac{2}{\mu_1(1-\mu_1)} + 3d(y_{i1}; \mu_1) \big[ (y_{i1}-\mu_1)^2+ \\
& 2y_{i1}(1-y_{i1}) \big] \Big] \sum^n_{i=1}+\frac{2\lambda\{2F_2(y_{i2})-1\}}{G_{y_{i1}y_{i2}}^2} \left[ \ddot{F_1}_{\mu_1}G_{y_{i1}y_{i2}} - 2\lambda\{F_2(y_{i2})-1\} \dot{F_1}_{\mu_1} \right],
\end{align*}
\begin{align*}
J_{\mu_1\sigma_1^{2}}(\bm{\theta})=&\frac{\partial^2 \ell(\bm{\theta})}{\partial \mu_1 \partial \sigma_1^2}=
\sum^n_{i=1} -\frac{d(y_{i1};\mu_1)}{\sigma_1^4 \mu_1(1-\mu_1)} \Big[ y_{i1}-\mu_1+\frac{y_{i1}(1-y_{i1})}{y_{i1}-\mu_1} \Big] + \\
& \frac{2\lambda\{2F_2(y_{i2})-1\}}{G_{y_{i1}y_{i2}}^2} \Big[ \ddot{F_1}_{\mu_1\sigma_1^2}G_{y_{i1}y_{i2}}-2\lambda\dot{F_1}_{\mu_1}\dot{F_1}_{\sigma_1^2} \{2F_2(y_{i2})-1\} \Big],
\end{align*}
\begin{align*}
J_{\mu_1\mu_2}(\bm{\theta})= \frac{\partial^2 \ell(\bm{\theta})}{\partial \mu_1 \partial \mu_2} =&
\sum^n_{i=1} \frac{4\lambda\dot{F_1}_{\mu_1}\dot{F_2}_{\mu_2}}{G_{y_{i1}y_{i2}}^2},
\end{align*}
\begin{align*}
J_{\mu_1\sigma_2^{2}}(\bm{\theta})= \frac{\partial^2 \ell(\bm{\theta})}{\partial \mu_1 \partial \sigma_2^2} =&
\sum^n_{i=1} \frac{4\lambda\dot{F_1}_{\mu_1}\dot{F_2}_{\sigma_2^2}}{G_{y_{i1}y_{i2}}^2},
\end{align*}
\begin{align*}
J_{\mu_1\lambda}(\bm{\theta})= \frac{\partial^2 \ell(\bm{\theta})}{\partial \mu_1 \partial \lambda} =&
\sum^n_{i=1} \frac{2 \{ 2F_2(y_{i2})-1 \} \dot{F_1}_{\mu_1} }{G_{y_{i1}y_{i2}}^2},
\end{align*}
\begin{align*}
J_{\mu_2\mu_2}(\bm{\theta})=& \frac{\partial^2 \ell(\bm{\theta})}{\partial \mu_2^2} =
\sum^n_{i=1}-\frac{1}{\sigma^2_2 \mu_2^2 (1-\mu_2)^2} \Big[ \frac{3y_{i2}(1-y_{i2})}{\mu_2^2 (1-\mu_2)^2} - \frac{2}{\mu_2(1-\mu_2)} + 3d(y_{i2}; \mu_2) \big[ (y_{i2}-\mu_1)^2+\\
& 2y_{i2}(1-y_{i2}) \big] \Big]\sum^n_{i=1}+\frac{2\lambda\{2F_1(y_{i1})-1\}}{G_{y_{i1}y_{i2}}^2} \left[ \ddot{F_2}_{\mu_2}G_{y_{i1}y_{i2}} - 2\lambda\{F_1(y_{i1})-1\} \dot{F_2}_{\mu_2} \right],
\end{align*}
\begin{align*}
J_{\mu_2\sigma_1^{2}}(\bm{\theta})= \frac{\partial^2 \ell(\bm{\theta})}{\partial \mu_2 \partial \sigma_1^2} =&
\sum^n_{i=1} \frac{4\lambda\dot{F_2}_{\mu_2}\dot{F_1}_{\sigma_1^2}}{G_{y_{i1}y_{i2}}^2},
\end{align*}
\begin{align*}
J_{\mu_2\sigma_2^{2}}(\bm{\theta})= & \frac{\partial^2 \ell(\bm{\theta})}{\partial \mu_2 \partial \sigma_2^2} =
\sum^n_{i=1} -\frac{d(y_{i2};\mu_2)}{\sigma_2^4 \mu_2(1-\mu_2)} \Big[ y_{i2}-\mu_2+\frac{y_{i2}(1-y_{i2})}{y_{i2}-\mu_2} \Big] + \\
& \frac{2\lambda\{2F_1(y_{i1})-1\}}{G_{y_{i1}y_{i2}}^2} \Big[ \ddot{F_2}_{\mu_2\sigma_2^2}G_{y_{i1}y_{i2}}-2\lambda\dot{F_2}_{\mu_2}\dot{F_2}_{\sigma_2^2} \{2F_1(y_{i1})-1\} \Big],
\end{align*}
\begin{align*}
J_{\mu_2\lambda}(\bm{\theta})= \frac{\partial^2 \ell(\bm{\theta})}{\partial \mu_2 \partial \lambda} =&
\sum^n_{i=1} \frac{2 \{ 2F_1(y_{i1})-1 \} \dot{F_2}_{\mu_2} }{G_{y_{i1}y_{i2}}^2},
\end{align*}
\begin{align*}
J_{\sigma_1^{2}\sigma_1^{2}}(\bm{\theta})= \frac{\partial^2 \ell(\bm{\theta})}{\partial (\sigma^2_1)^2} =&
\sum^n_{i=1}\frac{1}{2\sigma_1^4} - \frac{d(y_{i1};\mu_1)}{\sigma_1^6} + \frac{2\lambda\{2F_2(y_{i2})-1\}}{G_{y_{i1}y_{i2}}} \left[ \ddot{F_1}_{\sigma_1^2} - \frac{2\lambda\{2F_2(y_{i2})-1\} \dot{F_1}_{\sigma_1^2}}{G^2_{y_{i1}y_{i2}}} \right],
\end{align*}
\begin{align*}
J_{\sigma_1^{2}\sigma_2^{2}}(\bm{\theta})= \frac{\partial^2 \ell(\bm{\theta})}{\partial \sigma_1^2 \partial \sigma_2^2} =&
\sum^n_{i=1} \frac{4\lambda\dot{F_1}_{\sigma_1^2}\dot{F_2}_{\sigma_2^2}}{G_{y_{i1}y_{i2}}^2},
\end{align*}
\begin{align*}
J_{\sigma_1^{2}\lambda}(\bm{\theta})= \frac{\partial^2 \ell(\bm{\theta})}{\partial \sigma_1^2 \partial \lambda} =&
\sum^n_{i=1} \frac{2 \{ 2F_2(y_{i2})-1 \} \dot{F_1}_{\sigma_1^2} }{G_{y_{i1}y_{i2}}^2},
\end{align*}
\begin{align*}
J_{\sigma_2^{2}\sigma_2^{2}}(\bm{\theta})= \frac{\partial^2 \ell(\bm{\theta})}{\partial (\sigma^2_2)^2} =&
\sum^n_{i=1}\frac{1}{2\sigma_2^4} - \frac{d(y_{i2};\mu_2)}{\sigma_2^6} + \frac{2\lambda\{2F_1(y_{i1})-1\}}{G_{y_{i1}y_{i2}}} \left[ \ddot{F_2}_{\sigma_2^2} - \frac{2\lambda\{2F_1(y_{i1})-1\} \dot{F_2}_{\sigma_2^2}}{G^2_{y_{i1}y_{i2}}} \right],
\end{align*}
\begin{align*}
J_{\sigma_2^{2}\lambda}(\bm{\theta})= \frac{\partial^2 \ell(\bm{\theta})}{\partial \sigma_2^2 \partial \lambda} =&
\sum^n_{i=1} \frac{2 \{ 2F_2(y_{i1})-1 \} \dot{F_2}_{\sigma_2^2} }{G_{y_{i1}y_{i2}}^2},
\end{align*}
\begin{align*}
J_{\lambda\lambda}(\bm{\theta})=\frac{\partial^2 \ell(\bm{\theta})}{\partial \lambda^2} =&
\sum^n_{i=1}-\left[ \frac{\{2F_1(y_{i1})-1\}\{2F_2(y_{i2})-1\}}{G_{y_{i1}y_{i2}}} \right]^2,
\end{align*}
where
$\ddot{F_1}_{\mu_1} = \partial \dot{F_1}_{\mu_1} / \partial \mu_1$, 
$\ddot{F_2}_{\mu_2} = \partial \dot{F_2}_{\mu_2} / \partial \mu_2$, 
$\ddot{F_1}_{\sigma^2_1} = \partial \dot{F_1}_{\sigma^2_1} / \partial \sigma^2_1$,
$\ddot{F_2}_{\sigma^2_2} = \partial \dot{F_2}_{\sigma^2_2} / \partial \sigma^2_2$,
$\ddot{F_1}_{\mu_1\sigma_1^2} = \partial F_1(y_{i1}) / \partial \mu_1 \partial \sigma_1^2$,
$\ddot{F_2}_{\mu_2\sigma_2^2} = \partial F_2(y_{i2}) / \partial \mu_2 \partial \sigma_2^2$ e
$G_{y_{i1}y_{i2}} = 1+\lambda\{ 2F_1(y_{i1})-1 \} \{2F_2(y_{i2})-1\}$. Under certain regularity conditions, the maximum likelihood estimator $\widehat{\bm{\theta}}$ of $\bm{\theta}$ approximates a Normal distribution with zero mean and variance and covariance matrix $J^{-1}(\bm{\theta})$; allowing confidence intervals to be found, hypotheses to be tested and predictions to be made.
\section{Simulation study}
\label{subsec:simu}
In this section, we perform a Monte Carlo simulation study to assess the asymptotic behavior of the maximum likelihood estimators for the bivariate Simplex distribution. The numerical results are derived on $R=1,000$ Monte Carlo replications, with sample sizes of $n=50$, $100$, $150$, $200$, and $1,000$ observations. The random response vector $\bm{y}=(\bm{y}_1,\ldots,\bm{y}_n)^{\top},$ where $\bm{y}_i=(y_{1i},y_{2i})^{\top}$ 
is generated by using the algorithm described in \cite{Johnson1987n_aleatorios}. The algorithm involves the following steps: $(i)$ Generate two independent random variables $u_1$ and $v$ with uniform distributions, $U(0, 1)$; $(ii)$ Compute: $A = \lambda(2u_1-1)-1$, $B = [1-\lambda(2u_1-1)]^2 + 4v\lambda(2u_1-1)$ and $u_2 = 2v/(\sqrt{B}-A)$; $(iii)$ Apply the inverse transformation method to obtain $y_1 = F_1^{-1}(u_1)$ and $y_2 = F_2^{-1}(u_2)$, where $F_1(\cdot)$ and $F_2(\cdot)$ are the cumulative distribution function of $y_1$ and $y_2$, respectively.
The mean, bias, root mean square error (RMSE), and the 95\% confidence interval coverage probability are computed based on the following expressions:
\begin{equation}
    \overline{\theta}_j = \frac{1}{R} \sum_{i=1}^R \widehat{\theta}_j^{(i)}, \hspace{1cm}
    \text{Bias}(\theta_j) = \overline{\theta}_j - \theta_j \hspace{1cm} \text{and} \hspace{1cm}
    \text{RMSE}(\theta_j) = \sqrt{\frac{1}{R} \sum_{i=1}^N (\widehat{\theta}_j^{(i)} - \theta_j)^2}, \nonumber
\end{equation}
\\
  where  $\bm{\theta} = (\theta_1, \theta_2, \theta_3, \theta_4, \theta_5)^{\top}$ = $(\mu_1, \mu_2, \sigma^2_1, \sigma^2_2, \lambda)^{\top}$. We considered three scenarios, in which the parameter $\lambda$ takes on values of $-1$, $0$ and $1$ 
  to perform the behavior of $\widehat{\bm{\theta}}$.

\subsection{Scenario 1}
\label{subsec:cena1}

In this scenario, the following vectors are taken as the true values of the parameters $\bm{\theta}_1 = (0.5, 0.5, 2, 2, 1)^{\top}$, $\bm{\theta}_2 = (0.5, 0.5, 5, 5, 1)^{\top}$ and $\bm{\theta}_3 = (0.9, 0.9, \sqrt{11}, \sqrt{11}, 1)^{\top}$. Figure~\ref{fig:superficie-contono1} (see Appendix) illustrates the surface and contour plots of the generated samples. The joint moments ${\rm E}(y_1y_2)$ for $\bm{\theta_1}$, $\bm{\theta_2}$ and $\bm{\theta_3}$ are 0.36, 0.40 and 1.24, respectively. For $\bm{\theta}_1$, the generated samples are concentrated in the interval $(0.25;0.75)$, indicating a bimodal behavior. Similarly, for $\bm{\theta}_2$ and $\bm{\theta}_3$, the samples are concentrated near zero and one simultaneously and near one, respectively.
Figure \ref{fig:fig1} displays the simulation results for this scenario. The parameter vectors $\bm{\theta}_1$, $\bm{\theta}_2$, and $\bm{\theta}_3$ are represented by the colors red, green, and blue, respectively. Solid and dashed lines in the figure correspond to the parameters associated with $y_1$ and $y_2$, respectively.
\begin{figure}[h!]
    \begin{minipage}[t]{\linewidth}
        \centering
        \begin{minipage}[t]{0.32\linewidth}
            \centering
            \includegraphics[width=5cm]{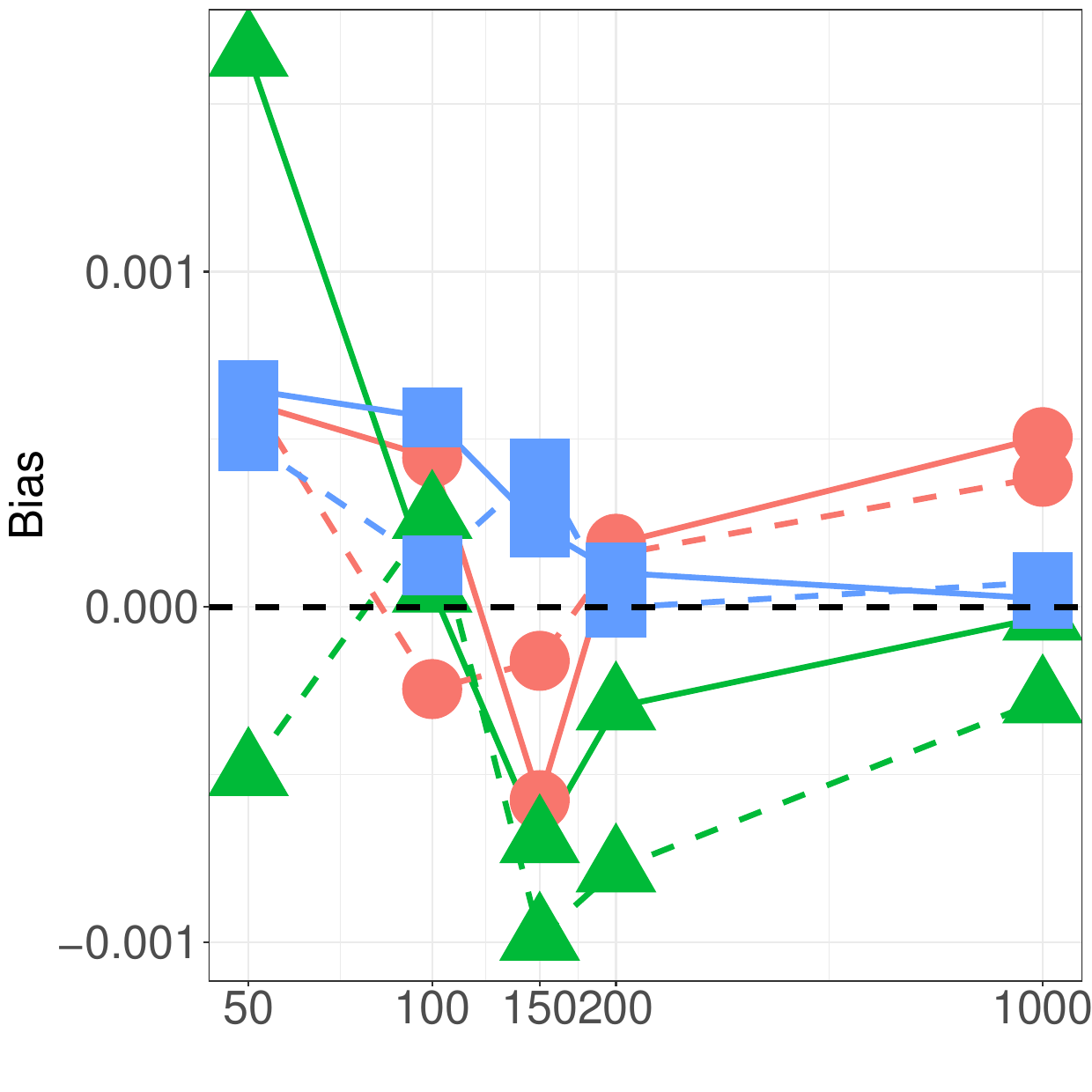}
        \end{minipage}%
        \begin{minipage}[t]{0.32\linewidth}
            \centering
            \includegraphics[width=5cm]{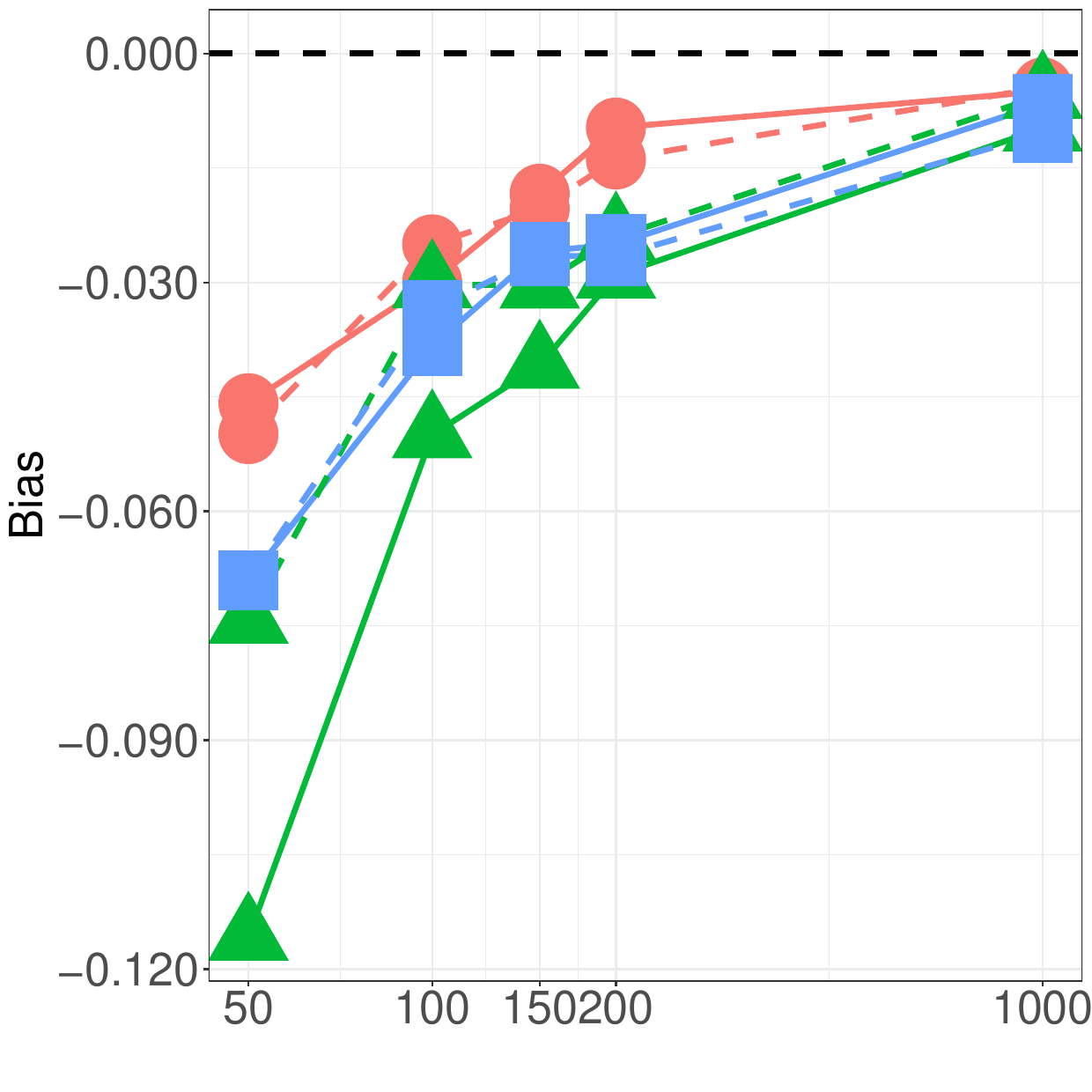}
        \end{minipage}%
        \begin{minipage}[t]{0.32\linewidth}
            \centering
            \includegraphics[width=5cm]{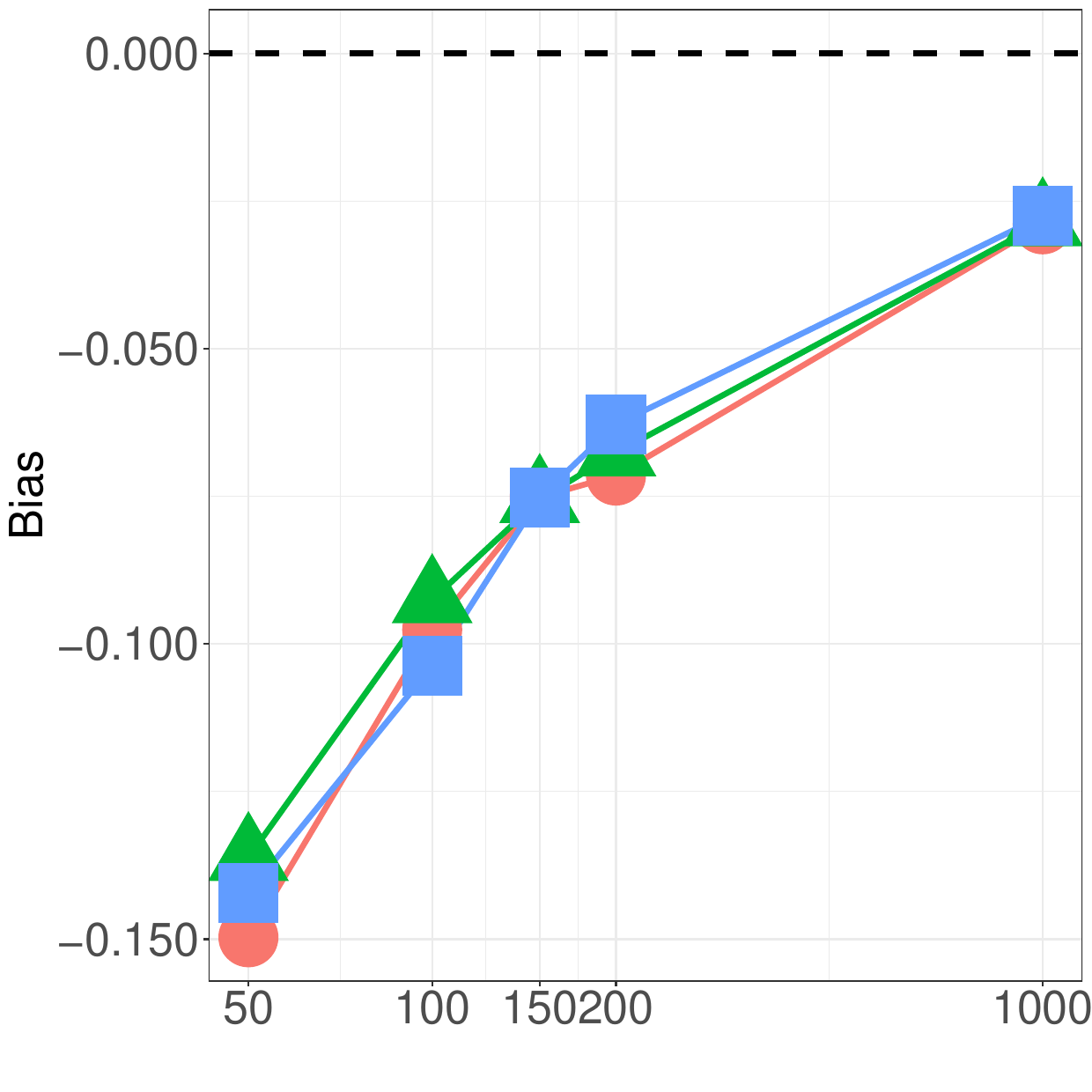}
        \end{minipage}
    \end{minipage}
    
    \begin{minipage}[t]{\linewidth}
        \centering
        \begin{minipage}[t]{0.32\linewidth}
            \centering
            \includegraphics[width=5cm]{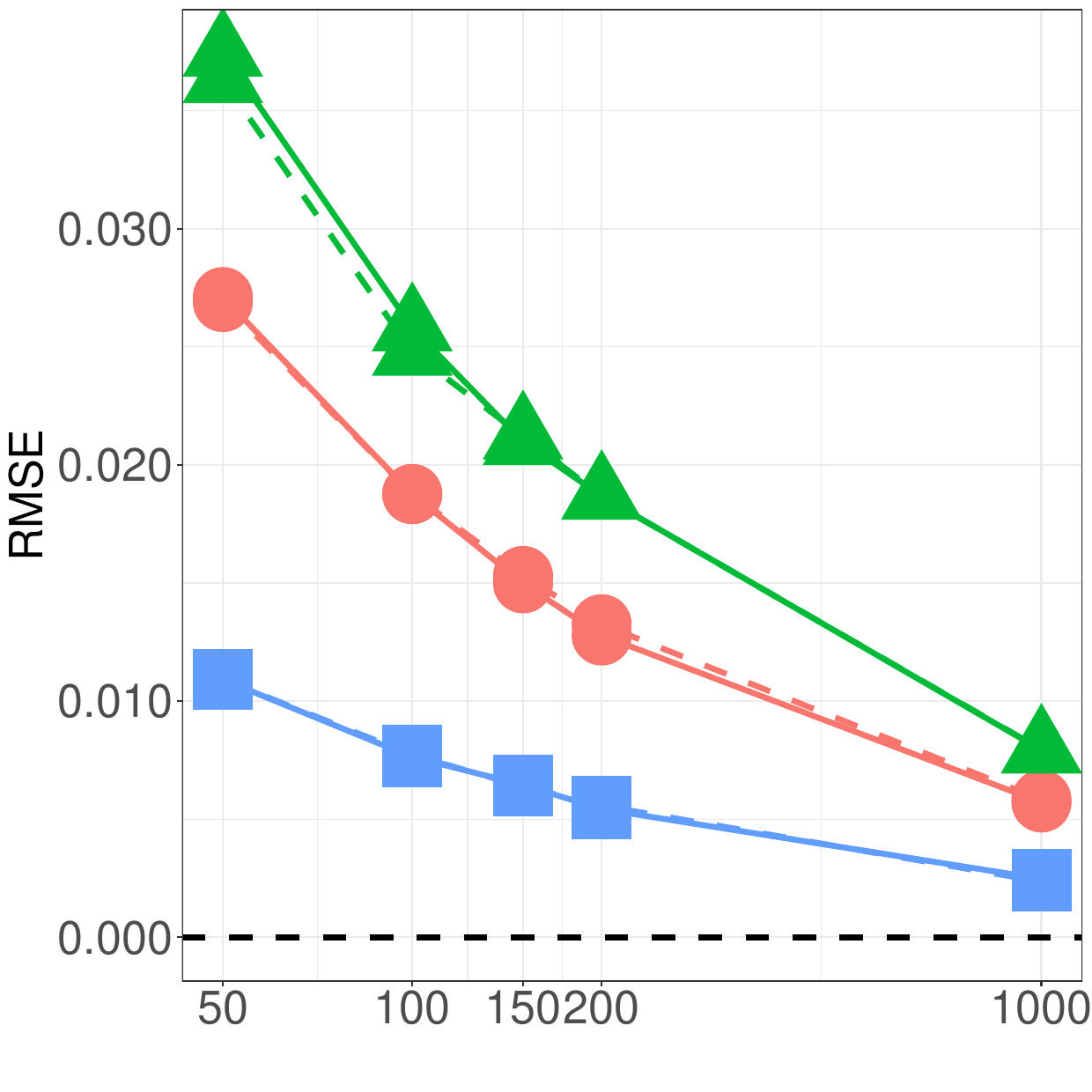}
        \end{minipage}%
        \begin{minipage}[t]{0.32\linewidth}
            \centering
            \includegraphics[width=5cm]{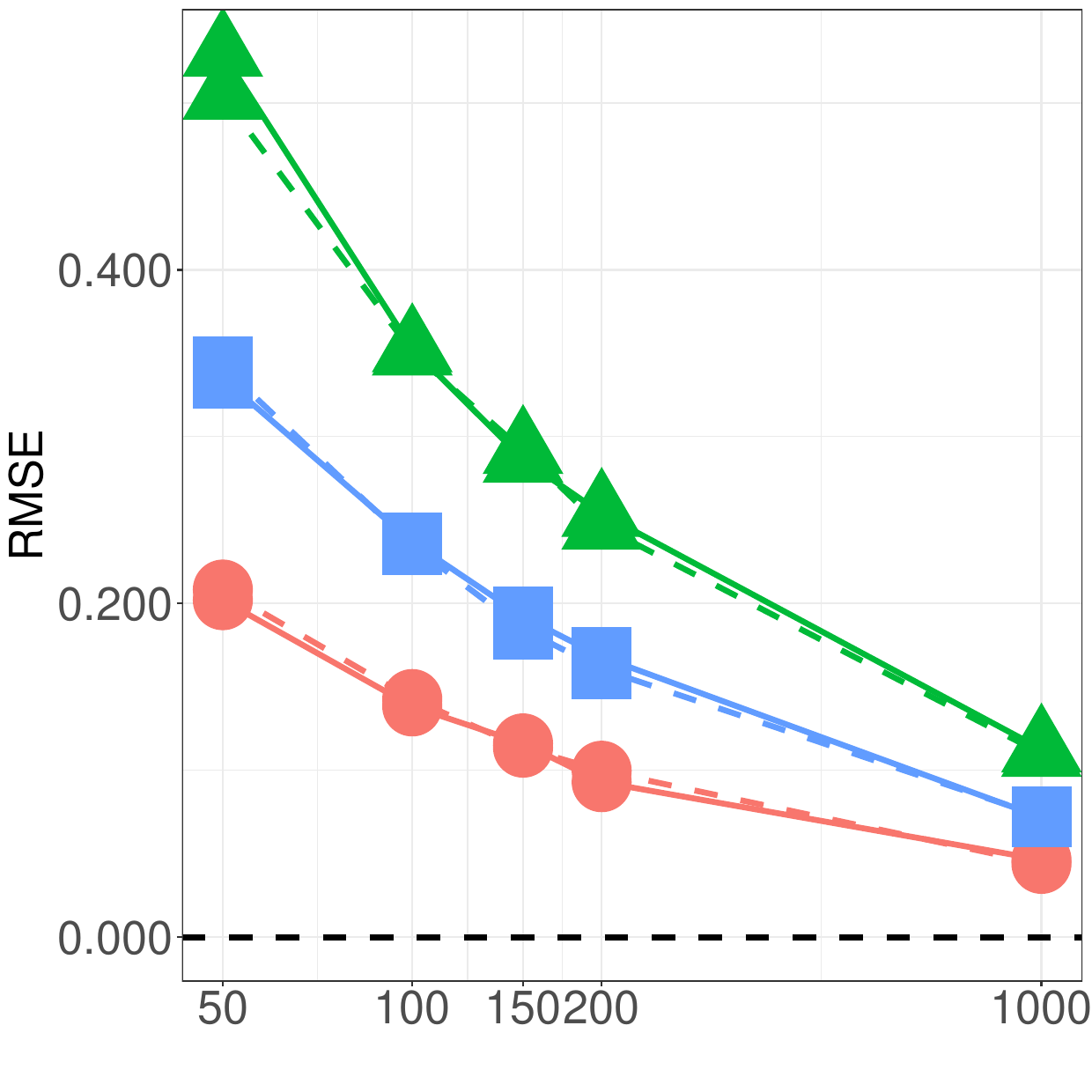}
        \end{minipage}%
        \begin{minipage}[t]{0.32\linewidth}
            \centering
            \includegraphics[width=5cm]{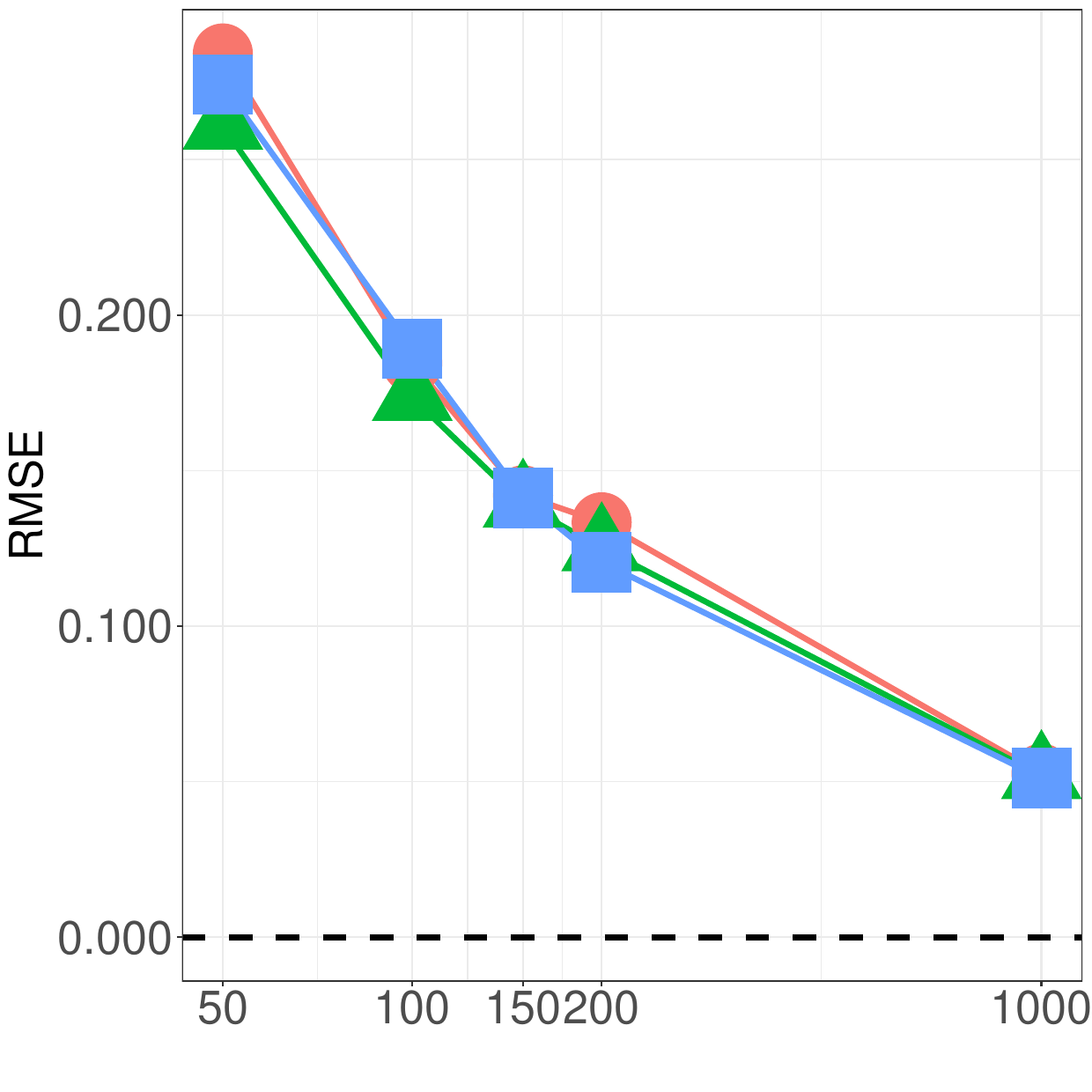}
        \end{minipage}
    \end{minipage}
    
    \begin{minipage}[t]{\linewidth}
        \centering
        \begin{minipage}[t]{0.32\linewidth}
            \centering
            \includegraphics[width=5cm]{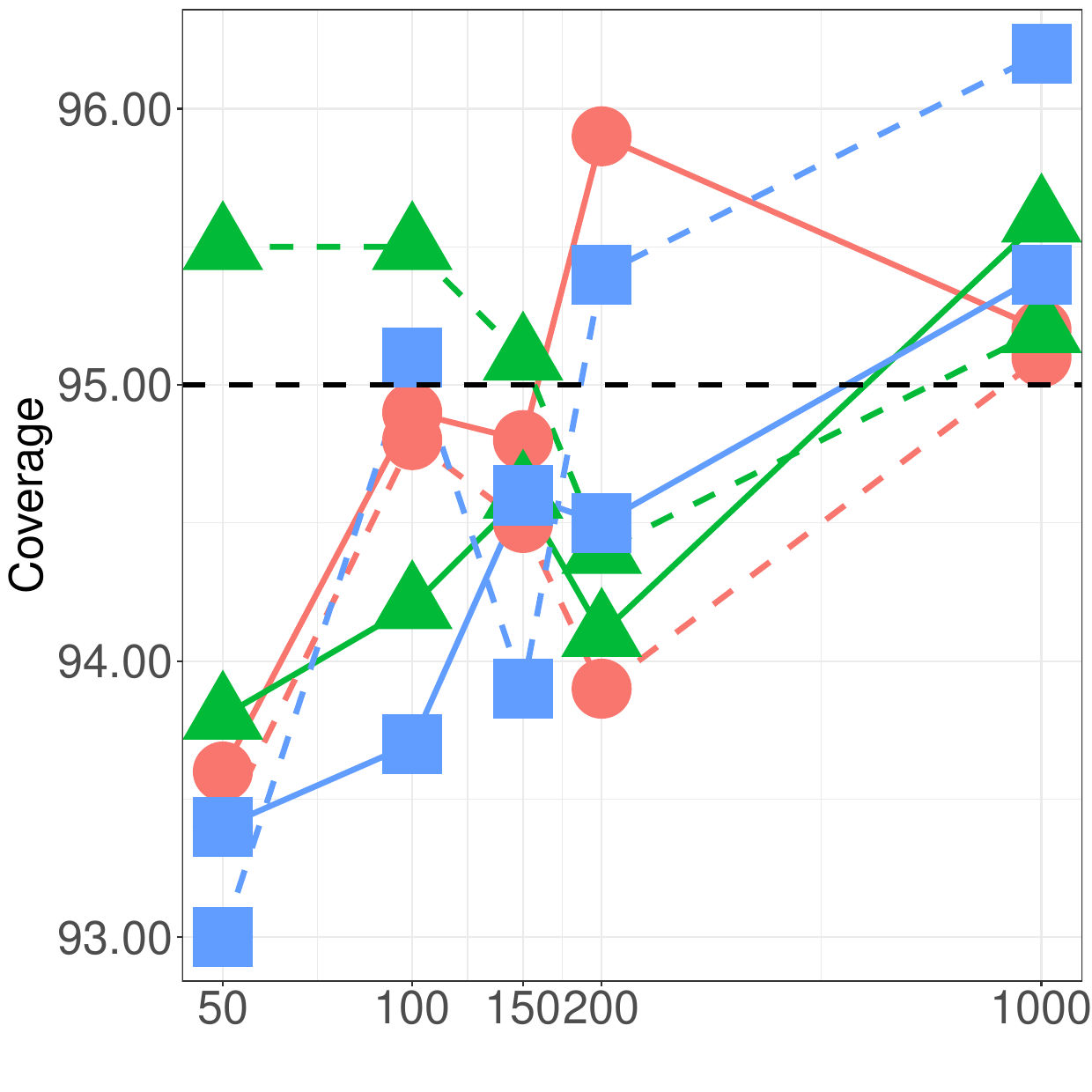}
        \end{minipage}%
        \begin{minipage}[t]{0.32\linewidth}
            \centering
            \includegraphics[width=5cm]{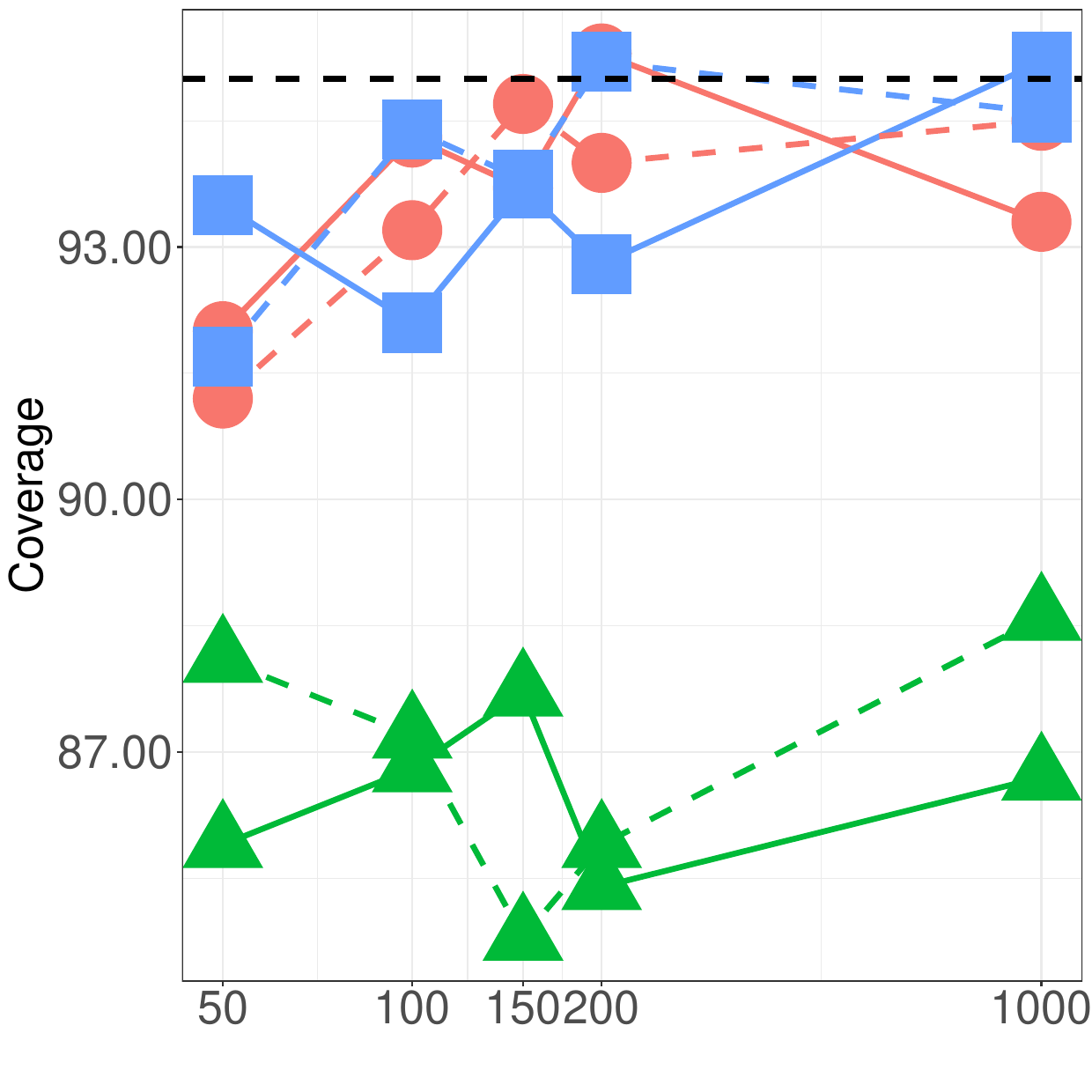}
        \end{minipage}%
        \begin{minipage}[t]{0.32\linewidth}
            \centering
            \includegraphics[width=5cm]{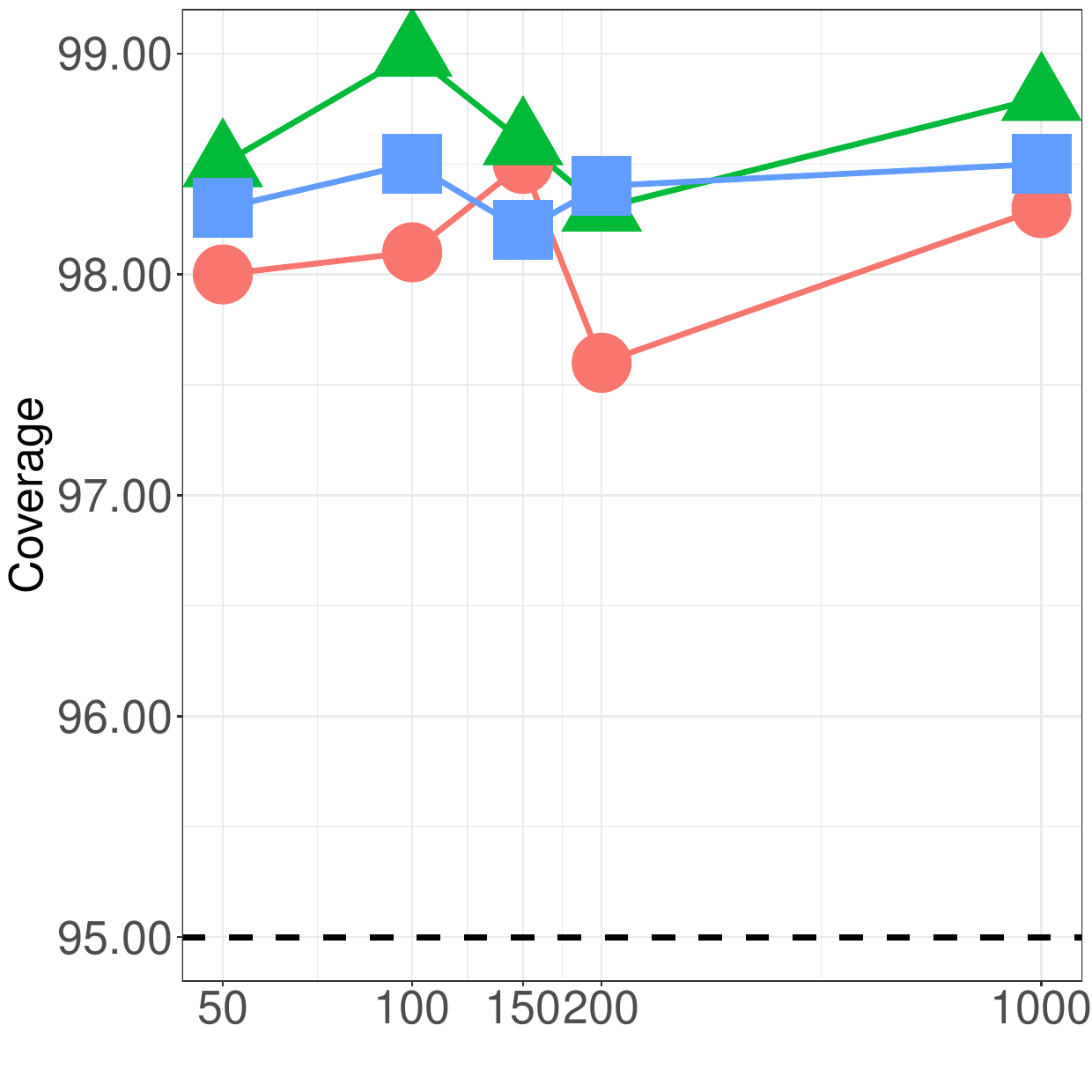}
        \end{minipage}
    \end{minipage}

    \caption{Bias (row 1), RMSE (row 2) and Coverage (row 3) of the parameters $\mu_1$ and $\mu_2$ (column 1), $\sigma^2_1$ and $\sigma^2_2$ (column 2) and $\lambda$ (column 3) for Scenario 1.}
    \label{fig:fig1}
\end{figure}
As expected, the bias and root mean square error (RMSE) approach zero as the sample size increases, indicating that the maximum likelihood estimators are asymptotically unbiased. The probability of coverage for the parameters $\mu_1$ and $\mu_2$ are close to the nominal 95\% level across different parameter vectors $\bm{\theta}$ and sample sizes. However, for parameters $\sigma_1^2$ and $\sigma_2^2$, the coverage probability is underestimated when samples are generated using the $\bm{\theta}_2$ parameter. 
The coverage of the $\lambda$ parameters is overestimated for different sample sizes. Tables 
\ref{tab:scenario-results-1.1} - \ref{tab:scenario-results-1.3} (in the Appendix) show the results of this scenario.


\subsection{Scenario 2}
\label{subsec:cena2}

In this scenario, the following vectors are taken as the true values of the parameters $\bm{\theta}_1 = (0.5, 0.5, 2, 2, -1)^{\top}$, $\bm{\theta}_2 = (0.5, 0.5, 5, 5, -1)^{\top}$ and $\bm{\theta}_3 = (0.9, 0.9, \sqrt{11}, \sqrt{11}, -1)^{\top}$. Figure~\ref{fig:superficie-e-contono-2} (see Appendix) shows the surface and contour plots of the generated samples.  The joint moments ${\rm E}(y_1y_2)$ for $\bm{\theta_1}$, $\bm{\theta_2}$ and $\bm{\theta_3}$ in this scenario are 0.14, 0.10 and 0.38, respectively. For  $\bm{\theta}_1$, the samples are concentrated within the interval $(0.25;0.75)$, indicating a bimodal behavior that is the inverse of the behavior observed in the first scenario. 
Similarly, for $\bm{\theta}_2$ and $\bm{\theta}_3$, the samples are concentrated near zero and one. 
Figure \ref{fig:fig2} presents the simulation results for this scenario. Again, the colors red, green, and blue for the parameter vectors $\bm{\theta}_1$, $\bm{\theta}_2$, and $\bm{\theta}_3$, respectively. The solid and dashed lines correspond to the parameters associated with the variables $y_1$ and $y_2$, respectively.
\begin{figure}[h!]
    \begin{minipage}[t]{\linewidth}
        \centering
        \begin{minipage}[t]{0.32\linewidth}
            \centering
            \includegraphics[width=5cm]{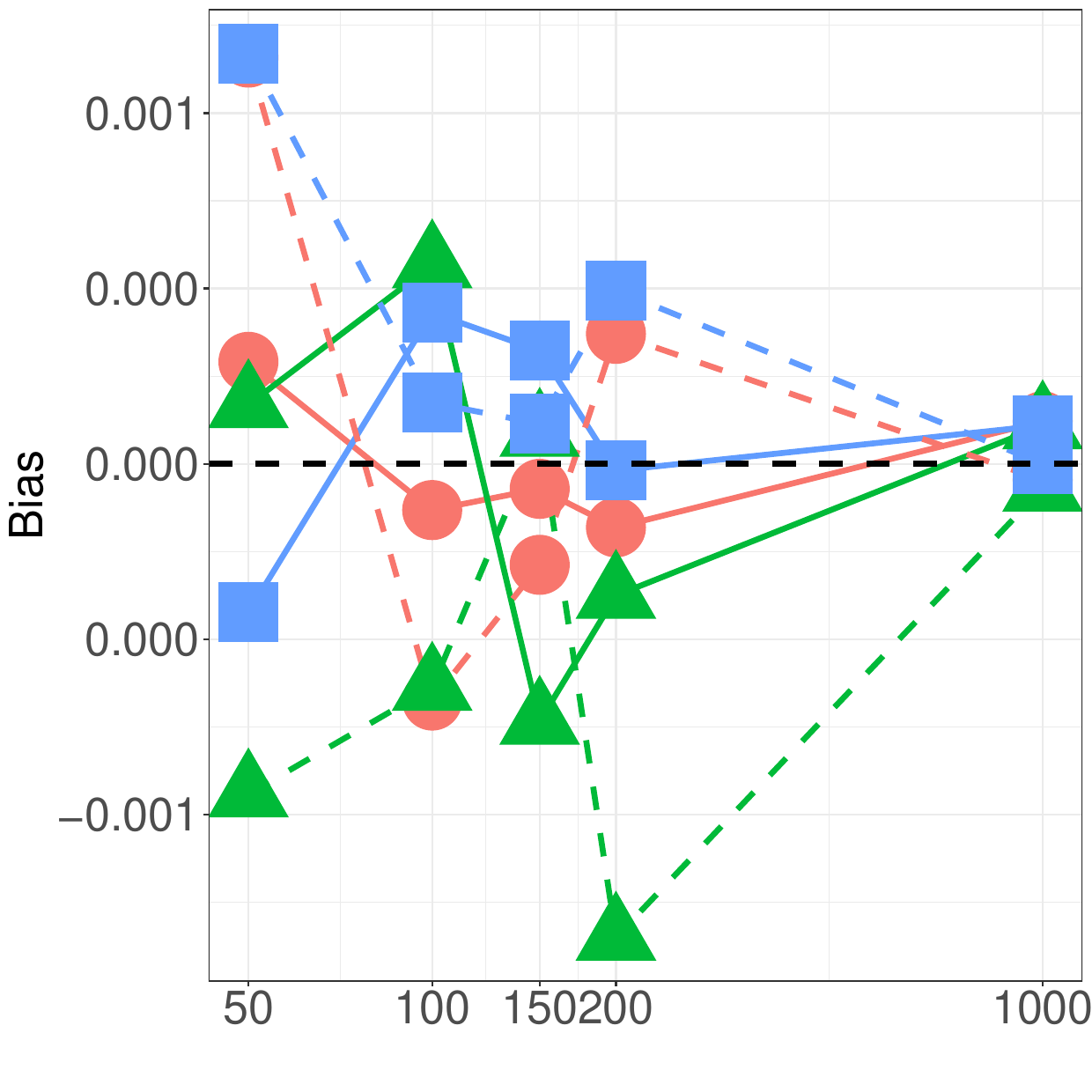}
        \end{minipage}%
        \begin{minipage}[t]{0.32\linewidth}
            \centering
            \includegraphics[width=5cm]{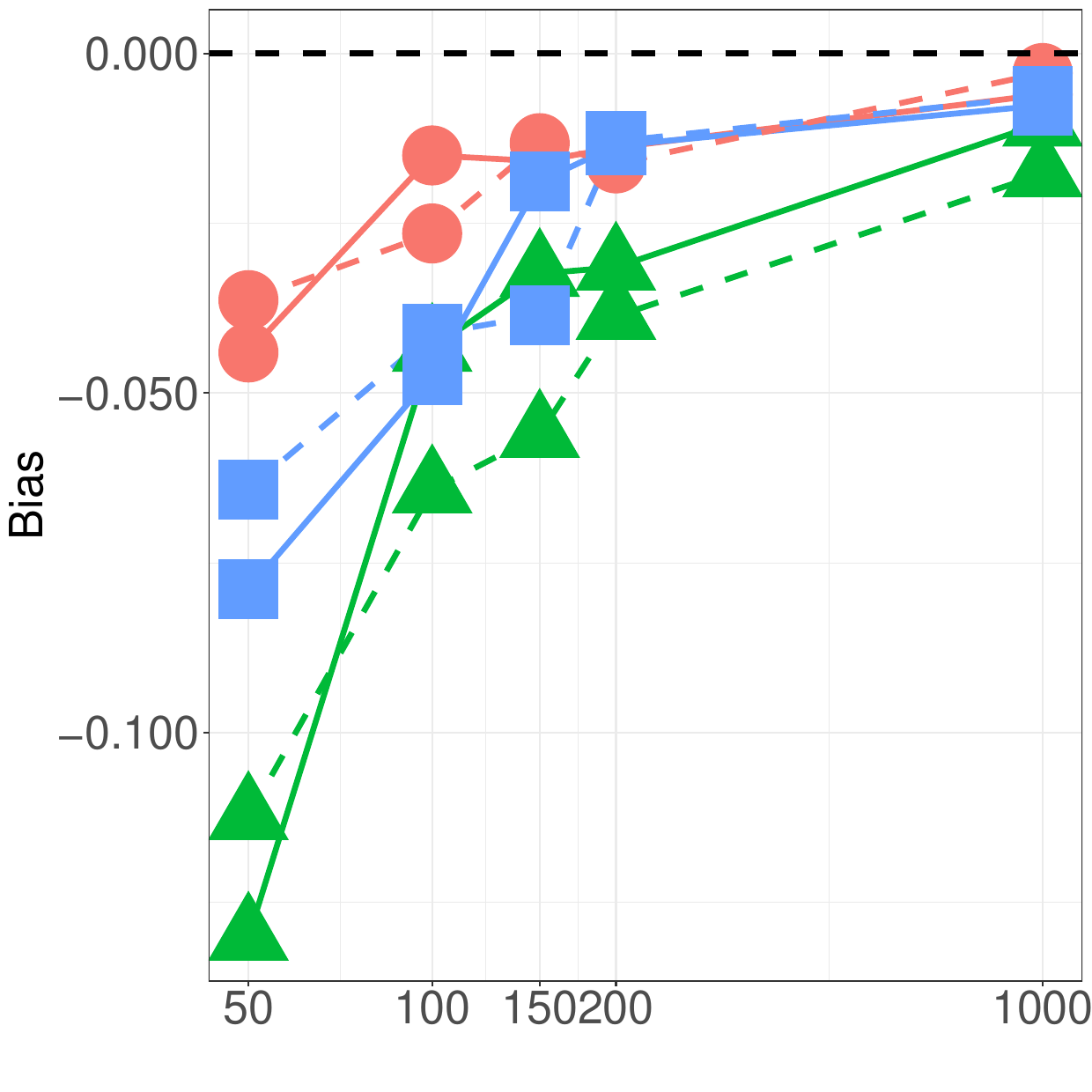}
        \end{minipage}%
        \begin{minipage}[t]{0.32\linewidth}
            \centering
            \includegraphics[width=5cm]{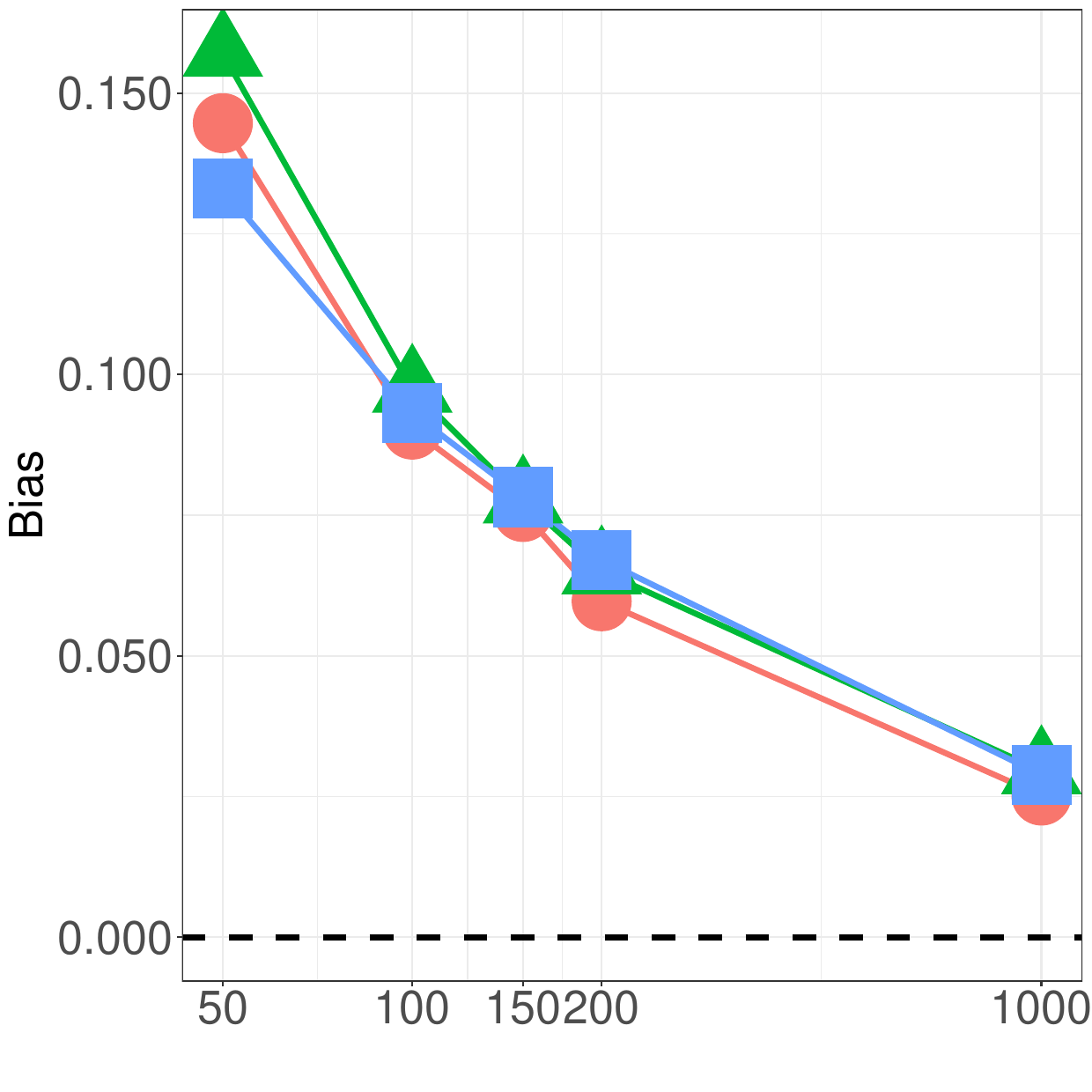}
        \end{minipage}
    \end{minipage}
    
    \begin{minipage}[t]{\linewidth}
        \centering
        \begin{minipage}[t]{0.32\linewidth}
            \centering
            \includegraphics[width=5cm]{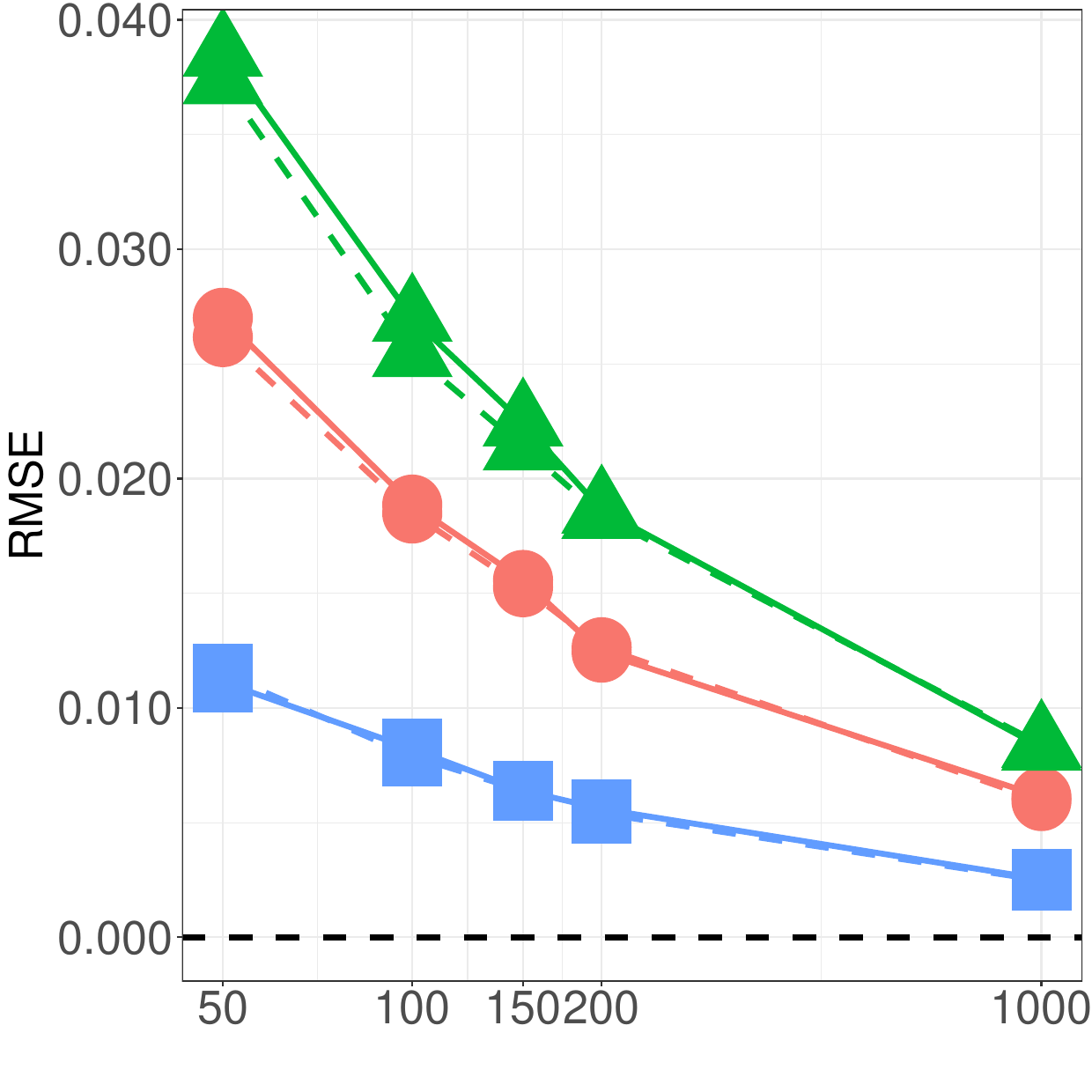}
        \end{minipage}%
        \begin{minipage}[t]{0.32\linewidth}
            \centering
            \includegraphics[width=5cm]{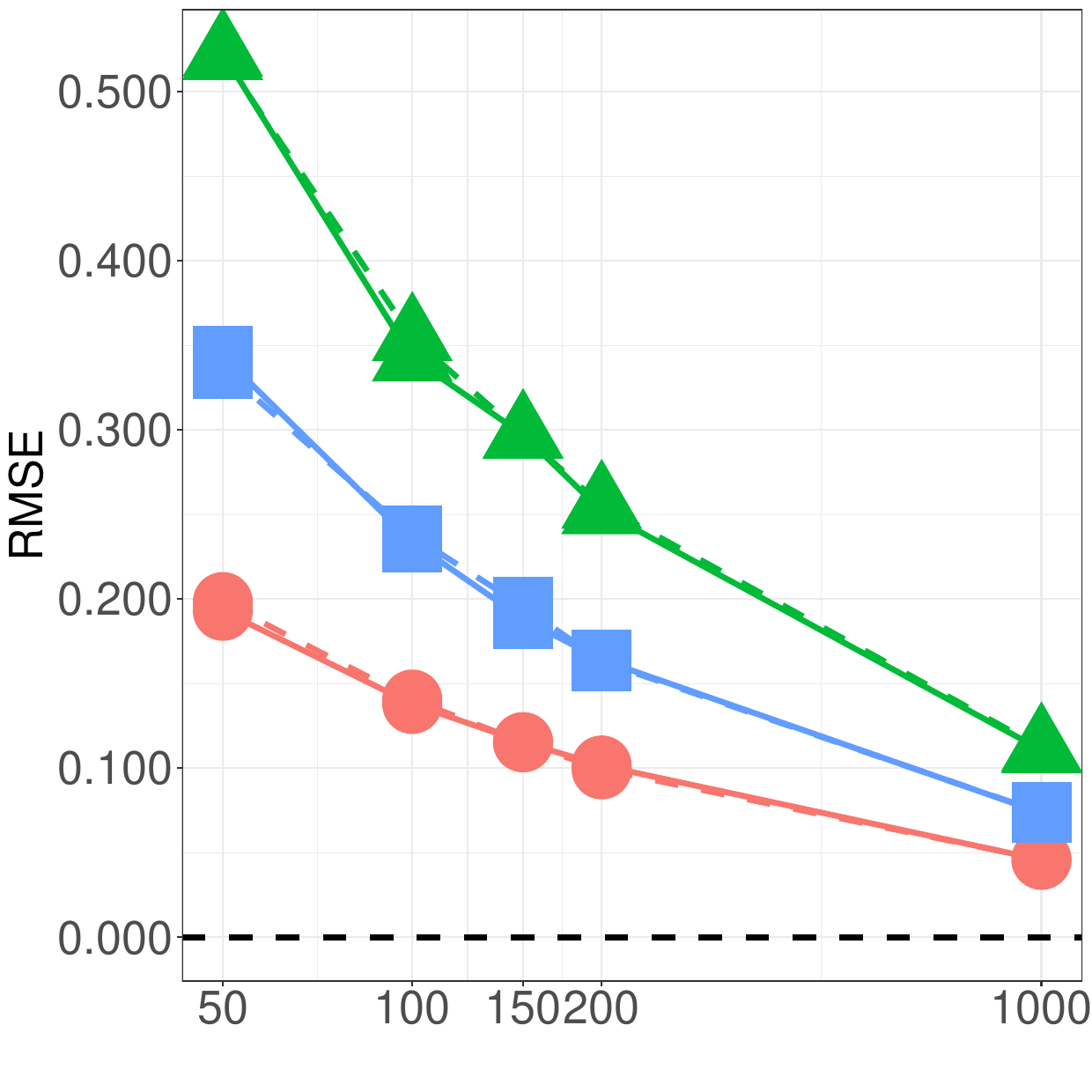}
        \end{minipage}%
        \begin{minipage}[t]{0.32\linewidth}
            \centering
            \includegraphics[width=5cm]{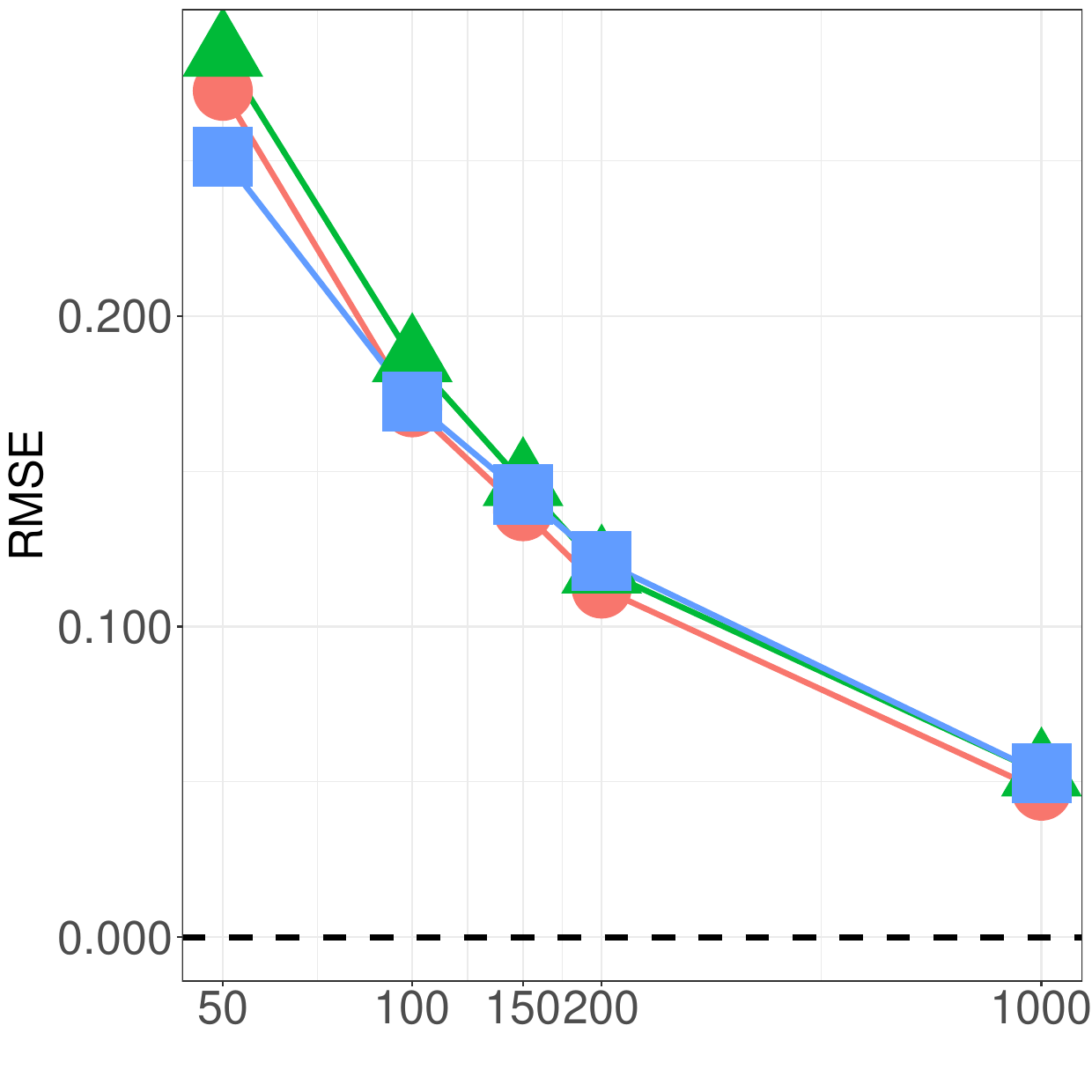}
        \end{minipage}
    \end{minipage}
    
    \begin{minipage}[t]{\linewidth}
        \centering
        \begin{minipage}[t]{0.32\linewidth}
            \centering
            \includegraphics[width=5cm]{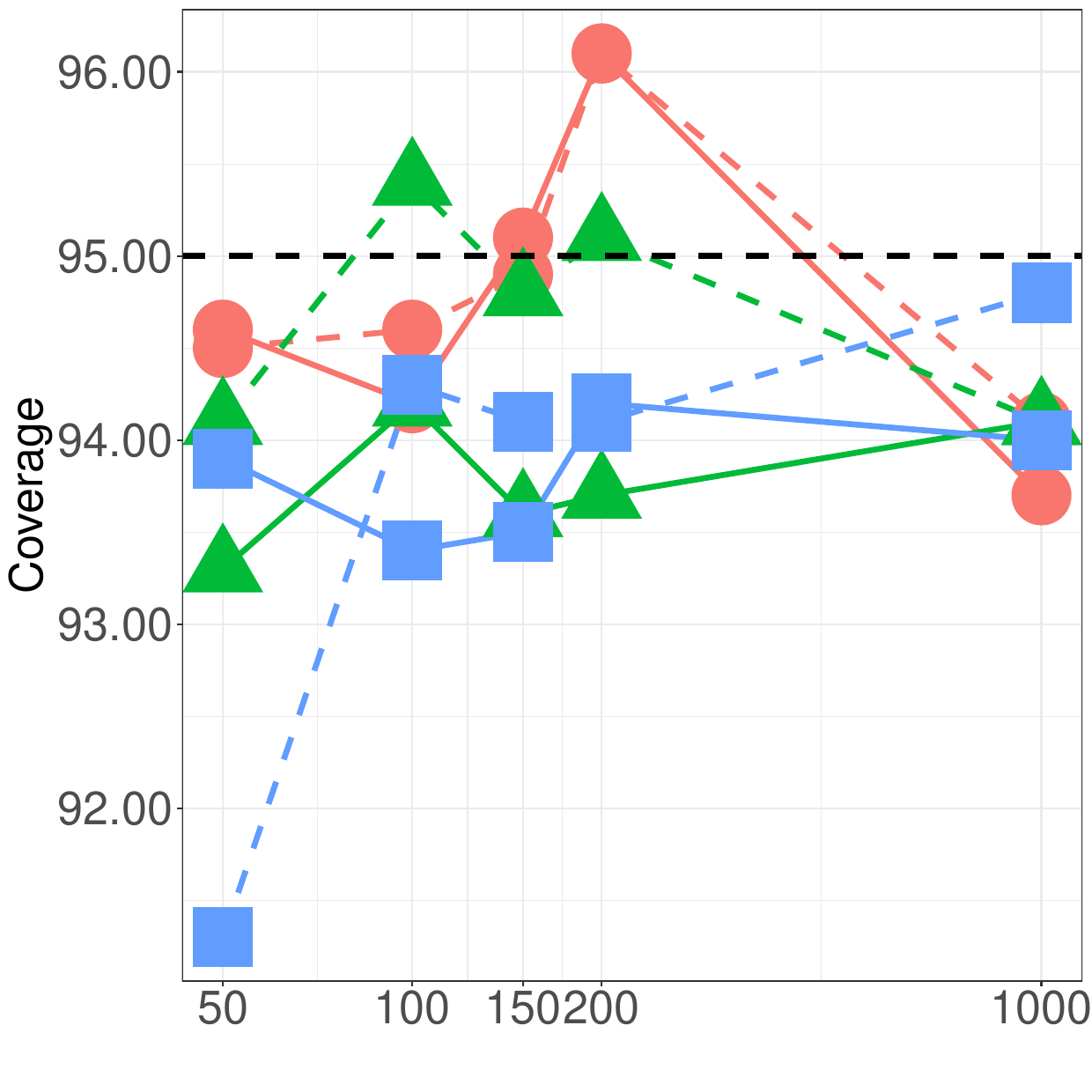}
        \end{minipage}%
        \begin{minipage}[t]{0.32\linewidth}
            \centering
            \includegraphics[width=5cm]{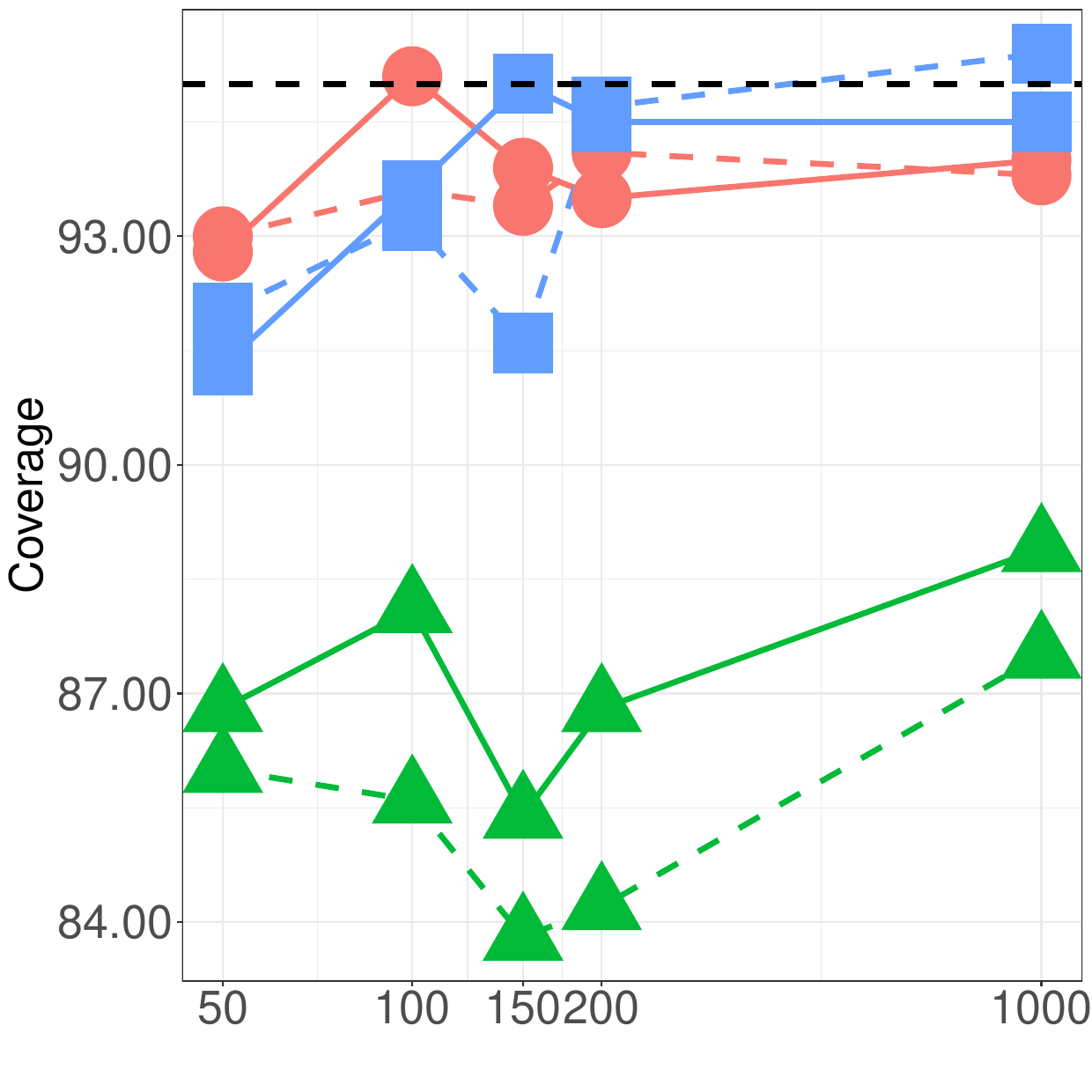}
        \end{minipage}%
        \begin{minipage}[t]{0.32\linewidth}
            \centering
            \includegraphics[width=5cm]{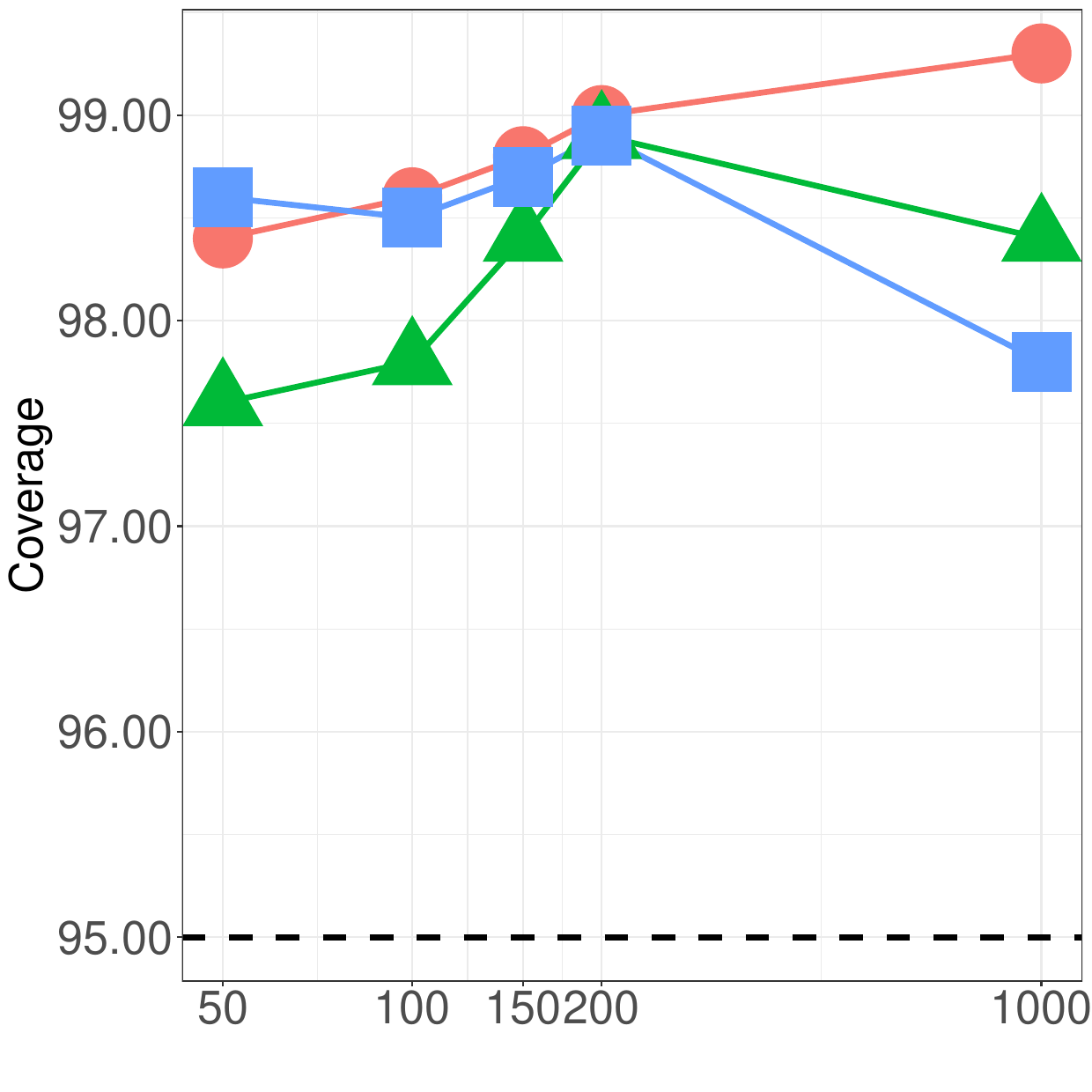}
        \end{minipage}
    \end{minipage}
    
    \caption{Bias (row 1), RMSE (row 2) and Coverage (row 3) of parameters $\mu_1$ and $\mu_2$ (column 1), $\sigma^2_1$ and $\sigma^2_2$ (column 2) and $\lambda$ (column 3) for Scenario 2.}
    \label{fig:fig2}
\end{figure}

In this scenario, as expected, the bias and the root mean square error (RMSE) approach zero as the sample size increases, indicating that the maximum likelihood estimators are asymptotically unbiased. The coverage probability for the parameters $\mu_1$ and $\mu_2$ are close to the nominal 95\% level across different parameter vectors $\bm{\theta}$ and sample sizes. However, for the parameters $\sigma_1^2$ and $\sigma_2^2$, the coverage is underestimated when $\bm{\theta}_2$ is assumed. Conversely, the coverage for the parameter $\lambda$ is overestimated regardless of the parameter vector $\bm{\theta}$ and the various sample sizes. The results can also be found in the Appendix in Tables \ref{tab:scenario-results-2.1} - \ref{tab:scenario-results-2.3}.
\subsection{Scenario 3}
\label{subsec:cena3}

Finally, in this scenario, the following vectors are taken as the true values of the parameters $\bm{\theta}_1 = (0.5, 0.5, 2, 2, 0)^{\top}$, $\bm{\theta}_2 = (0.5, 0.5, 5, 5, 0)^{\top}$ and $\bm{\theta}_3 = (0.9, 0.9, \sqrt{11}, \sqrt{11}, 0)^{\top}$. Figure~\ref{fig:superficie-e-contono-3} (see Appendix) illustrates the surface and contour plots for the generated samples. The joint moments ${\rm E}(y_1y_2)$ for $\bm{\theta_1}$, $\bm{\theta_2}$ and $\bm{\theta_3}$ are 0.25, 0.25 and 0.81, accordingly. For $\bm{\theta}_1$, the generated samples are concentrated around $0,5$, indicating an unimodal behavior. Similarly, for $\bm{\theta}_2$ and $\bm{\theta}_3$, the samples are concentrated near zero and one. 
Figure \ref{fig:fig3} presents the simulation results in this scenario. The parameter vectors $\bm{\theta}_1$, $\bm{\theta}_2$ and $\bm{\theta}_3$ are represented by the red, green, and blue, respectively. Solid and dashed lines correspond to the parameters associated with $y_1$ and $y_2$, respectively.
\begin{figure}[h!]
    
    \begin{minipage}[t]{\linewidth}
        \centering
        \begin{minipage}[t]{0.32\linewidth}
            \centering
            \includegraphics[width=5cm]{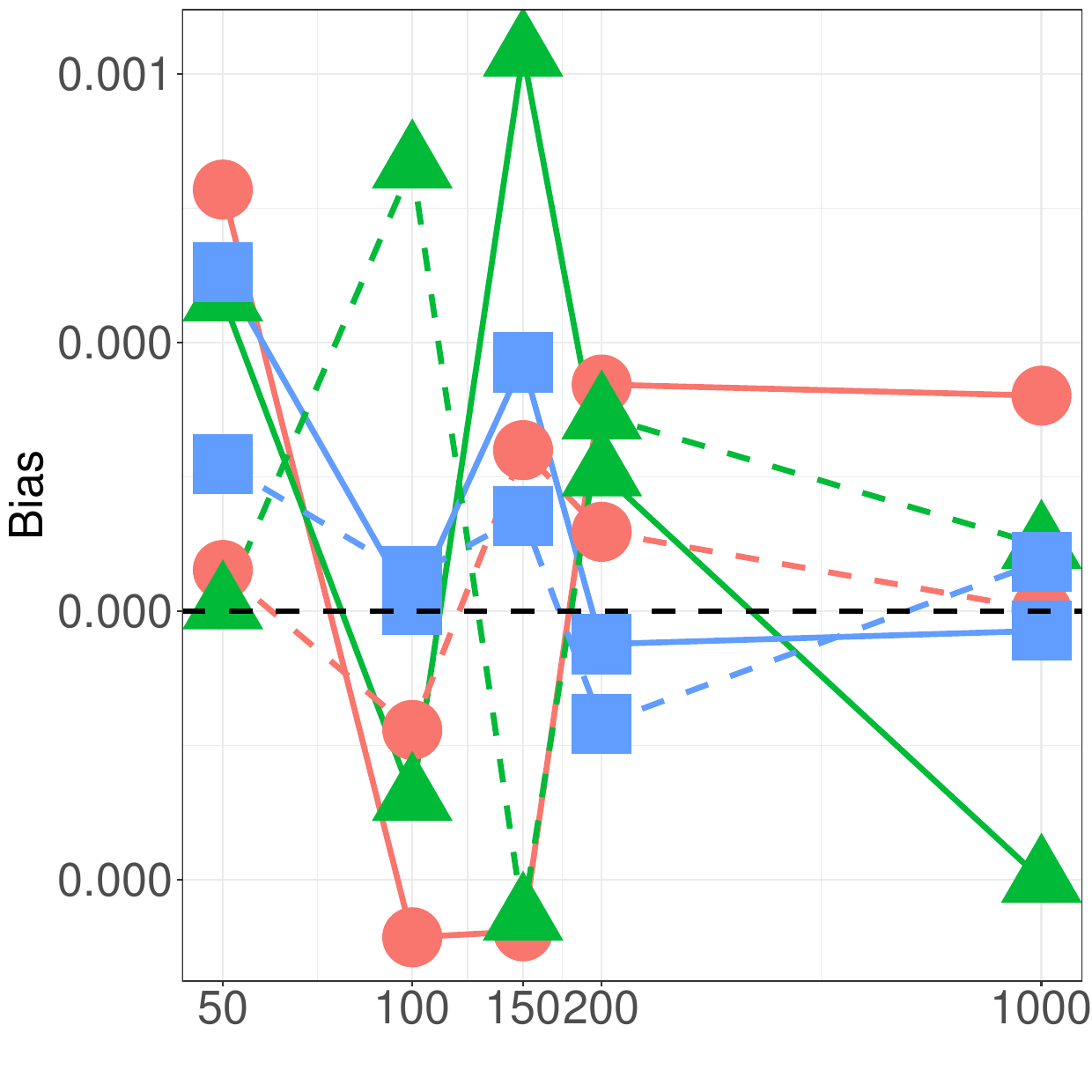}
        \end{minipage}%
        \begin{minipage}[t]{0.32\linewidth}
            \centering
            \includegraphics[width=5cm]{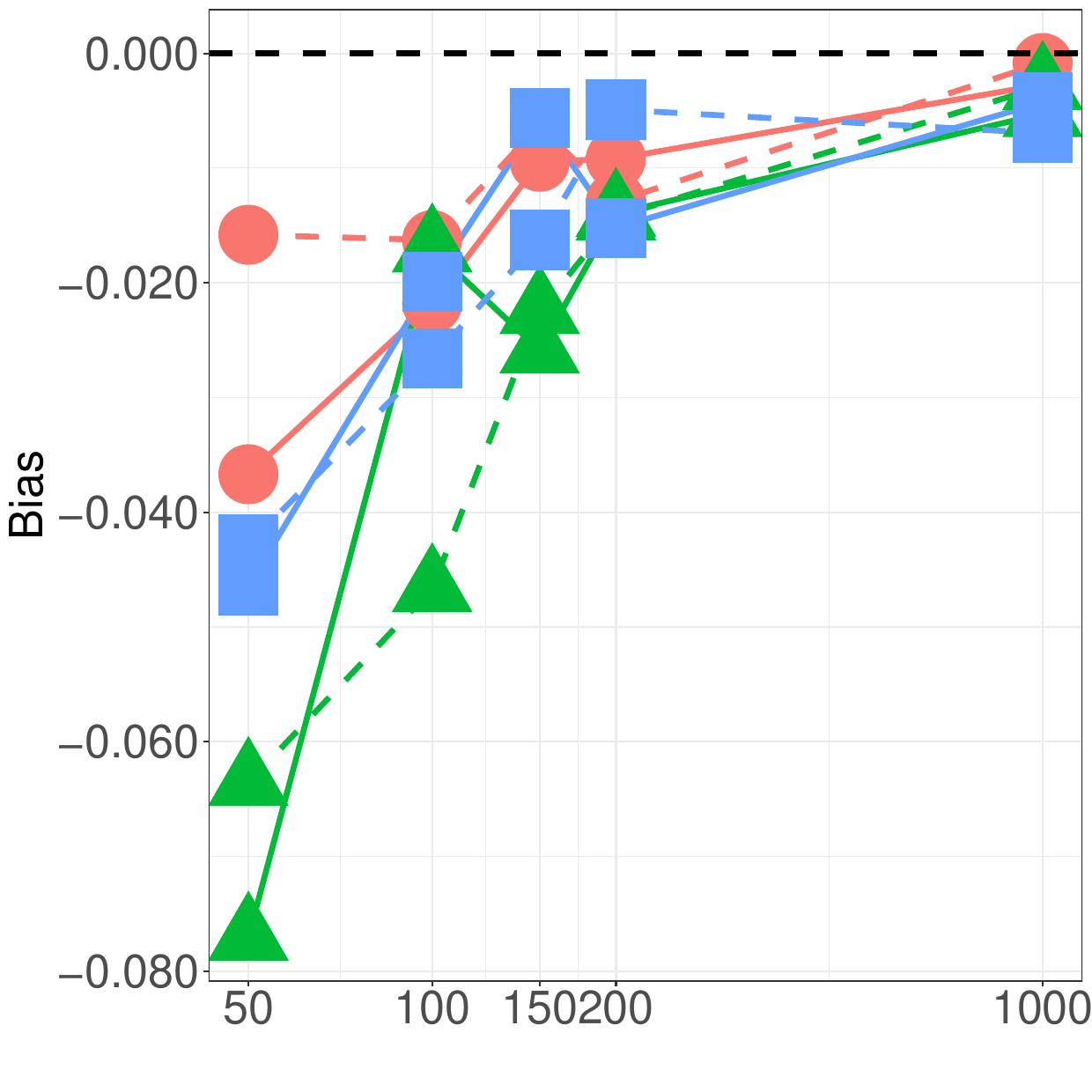}
        \end{minipage}%
        \begin{minipage}[t]{0.32\linewidth}
            \centering
            \includegraphics[width=5cm]{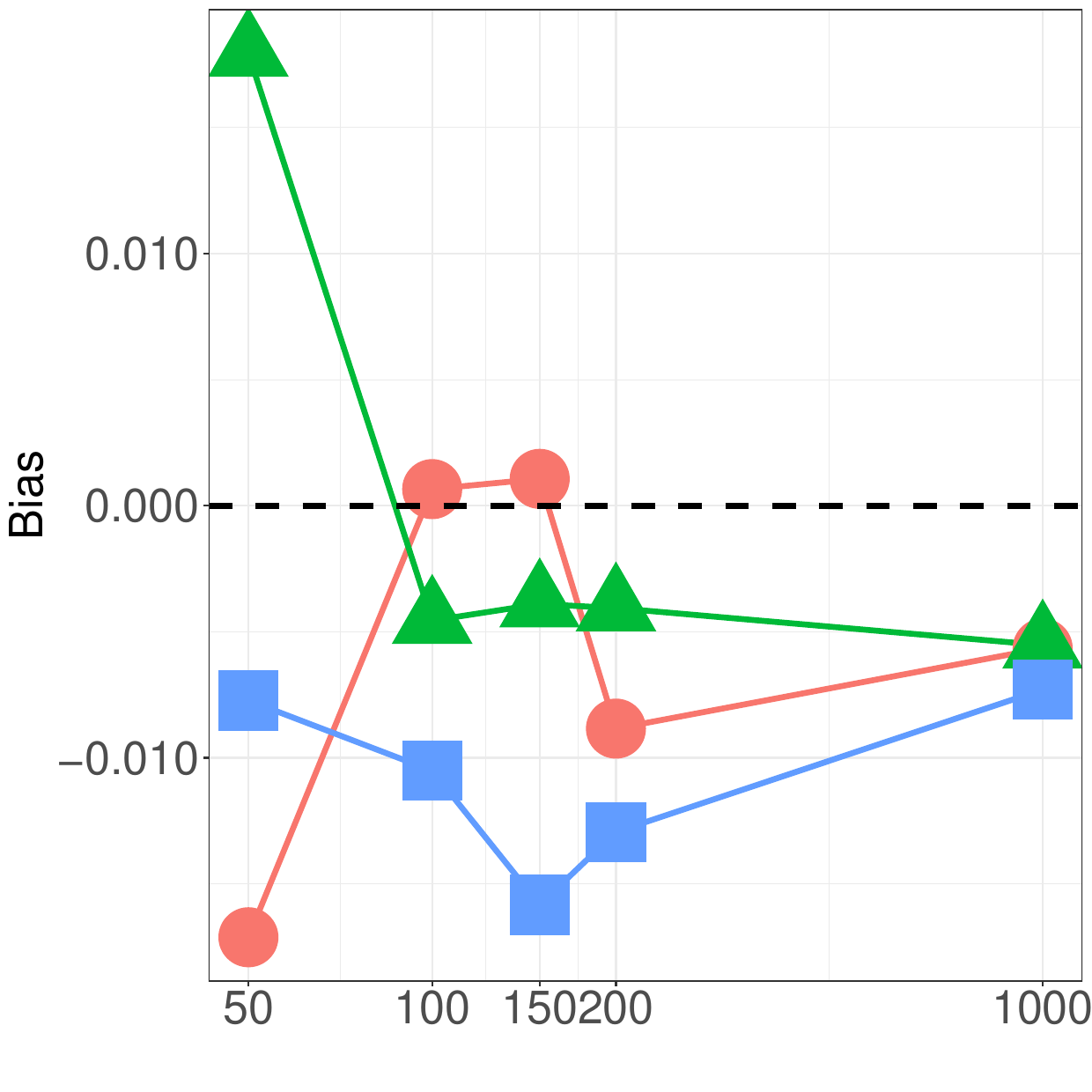}
        \end{minipage}
    \end{minipage}
    
    \begin{minipage}[t]{\linewidth}
        \centering
        \begin{minipage}[t]{0.32\linewidth}
            \centering
            \includegraphics[width=5cm]{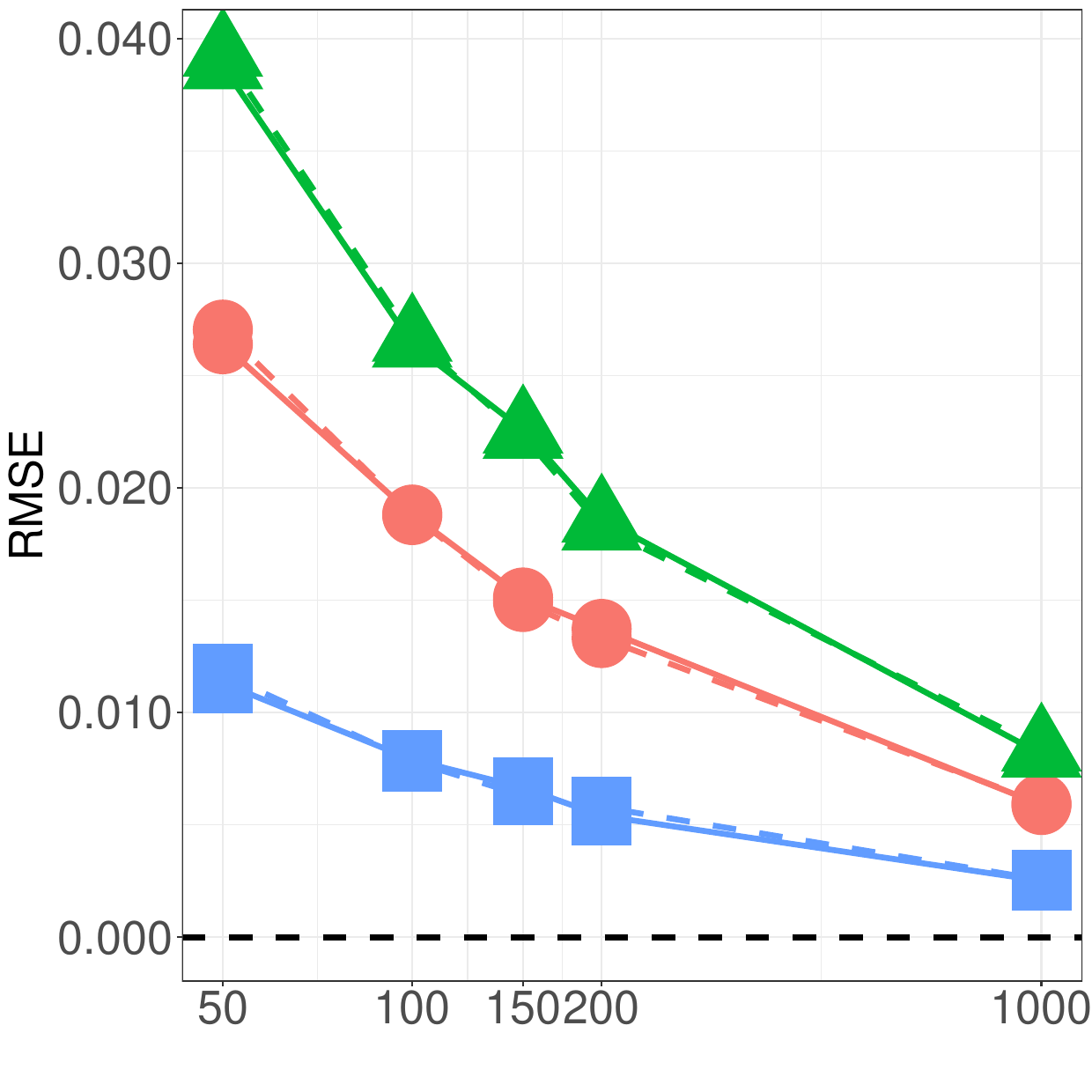}
        \end{minipage}%
        \begin{minipage}[t]{0.32\linewidth}
            \centering
            \includegraphics[width=5cm]{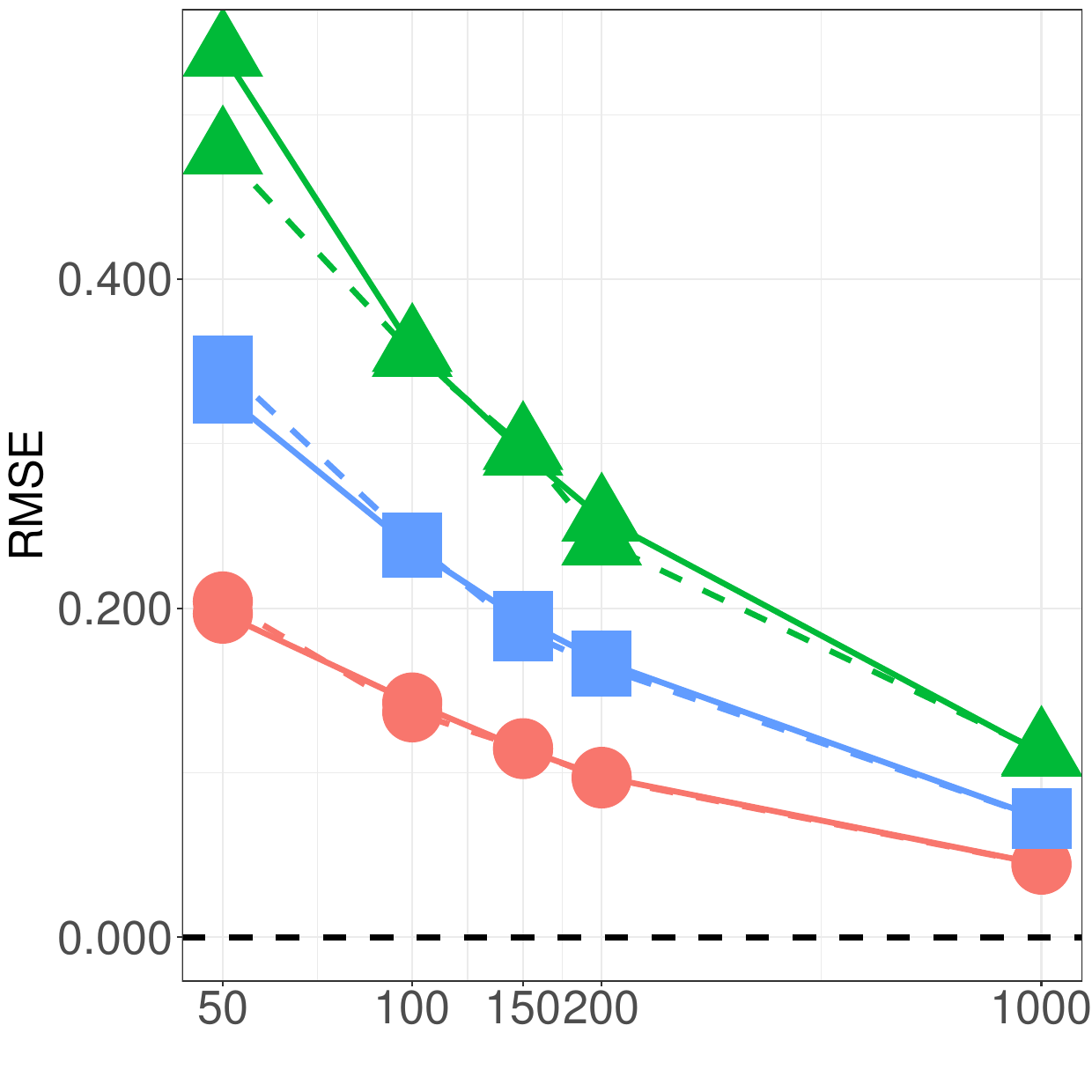}
        \end{minipage}%
        \begin{minipage}[t]{0.32\linewidth}
            \centering
            \includegraphics[width=5cm]{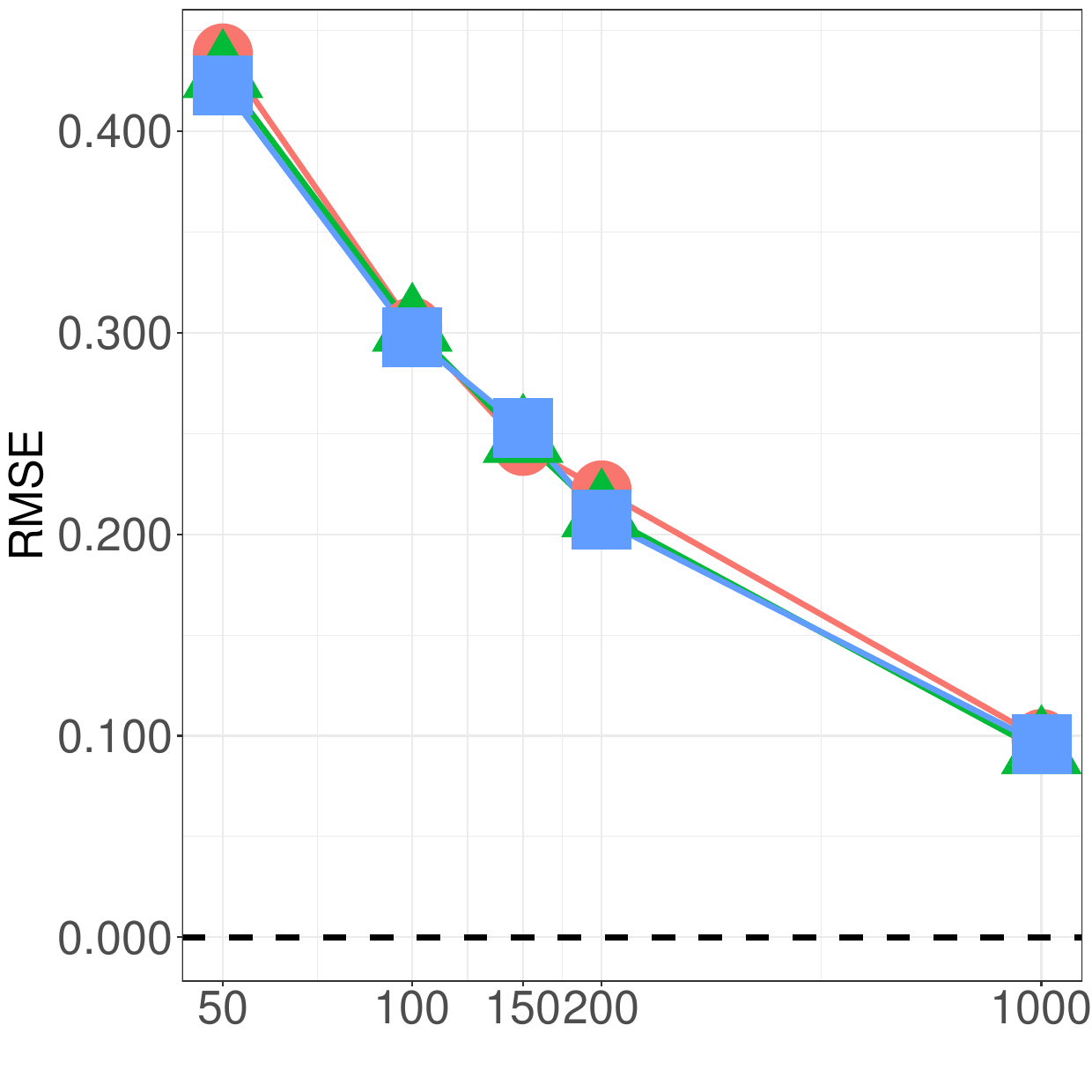}
        \end{minipage}
    \end{minipage}
    
    \begin{minipage}[t]{\linewidth}
        \centering
        \begin{minipage}[t]{0.32\linewidth}
            \centering
            \includegraphics[width=5cm]{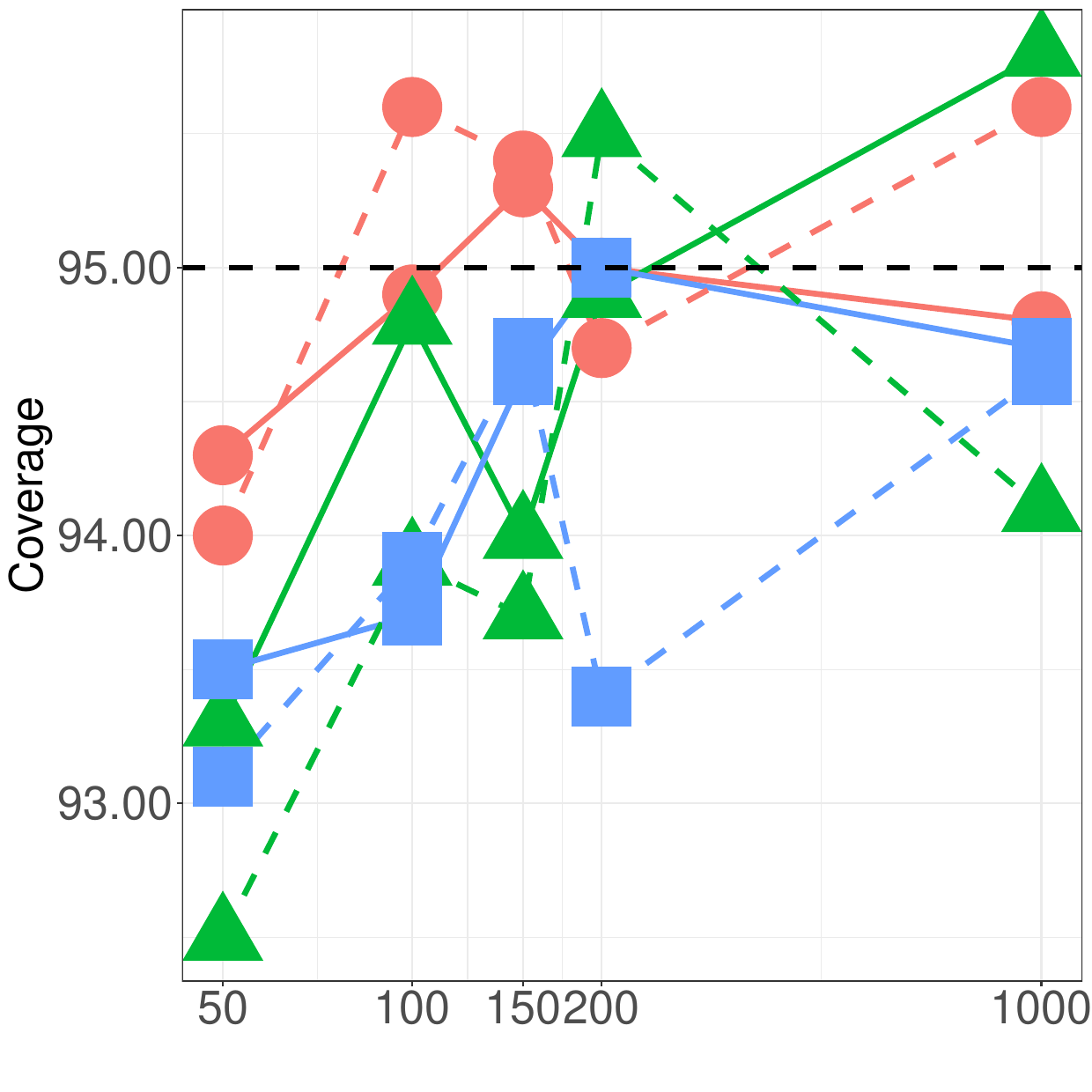}
        \end{minipage}%
        \begin{minipage}[t]{0.32\linewidth}
            \centering
            \includegraphics[width=5cm]{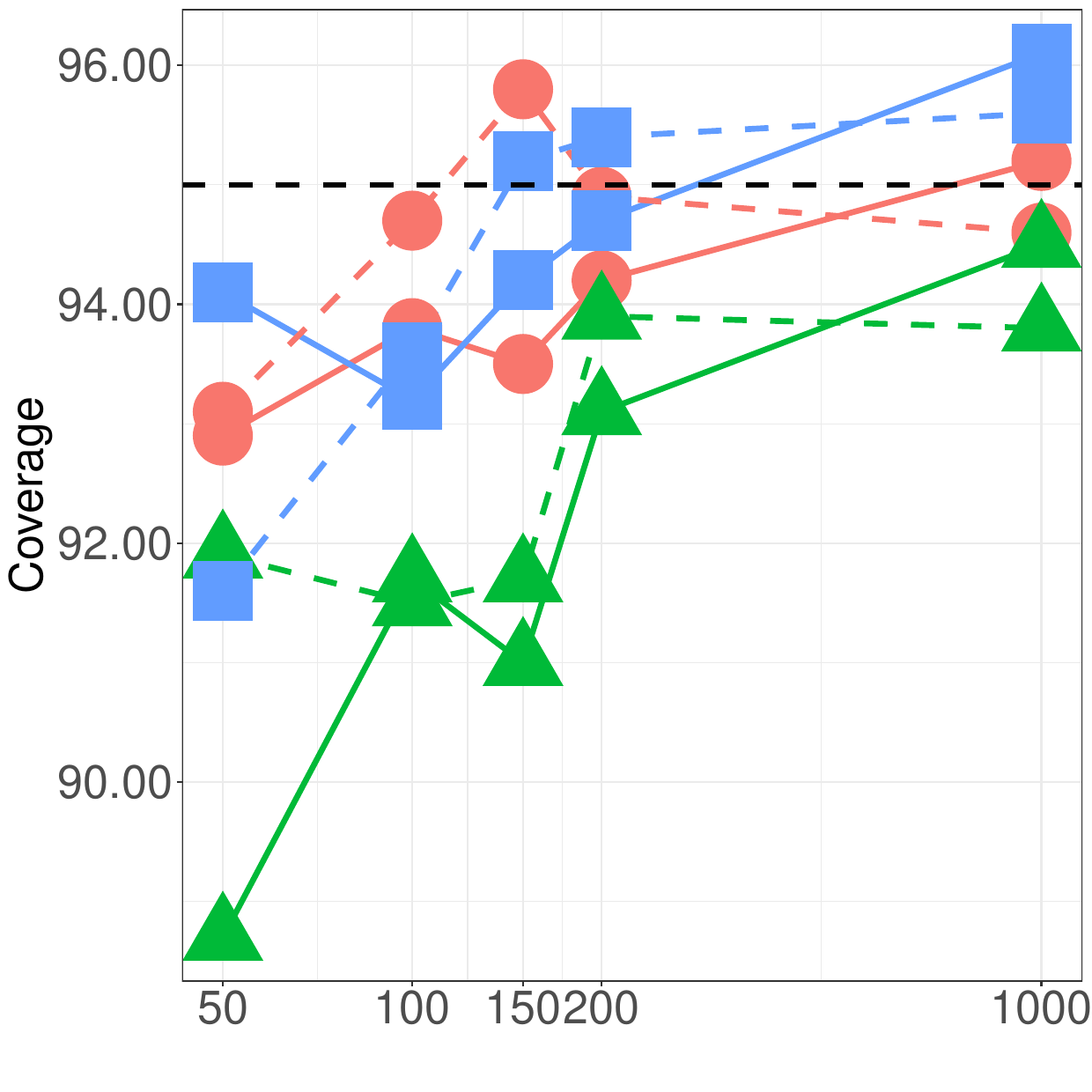}
        \end{minipage}%
        \begin{minipage}[t]{0.32\linewidth}
            \centering
            \includegraphics[width=5cm]{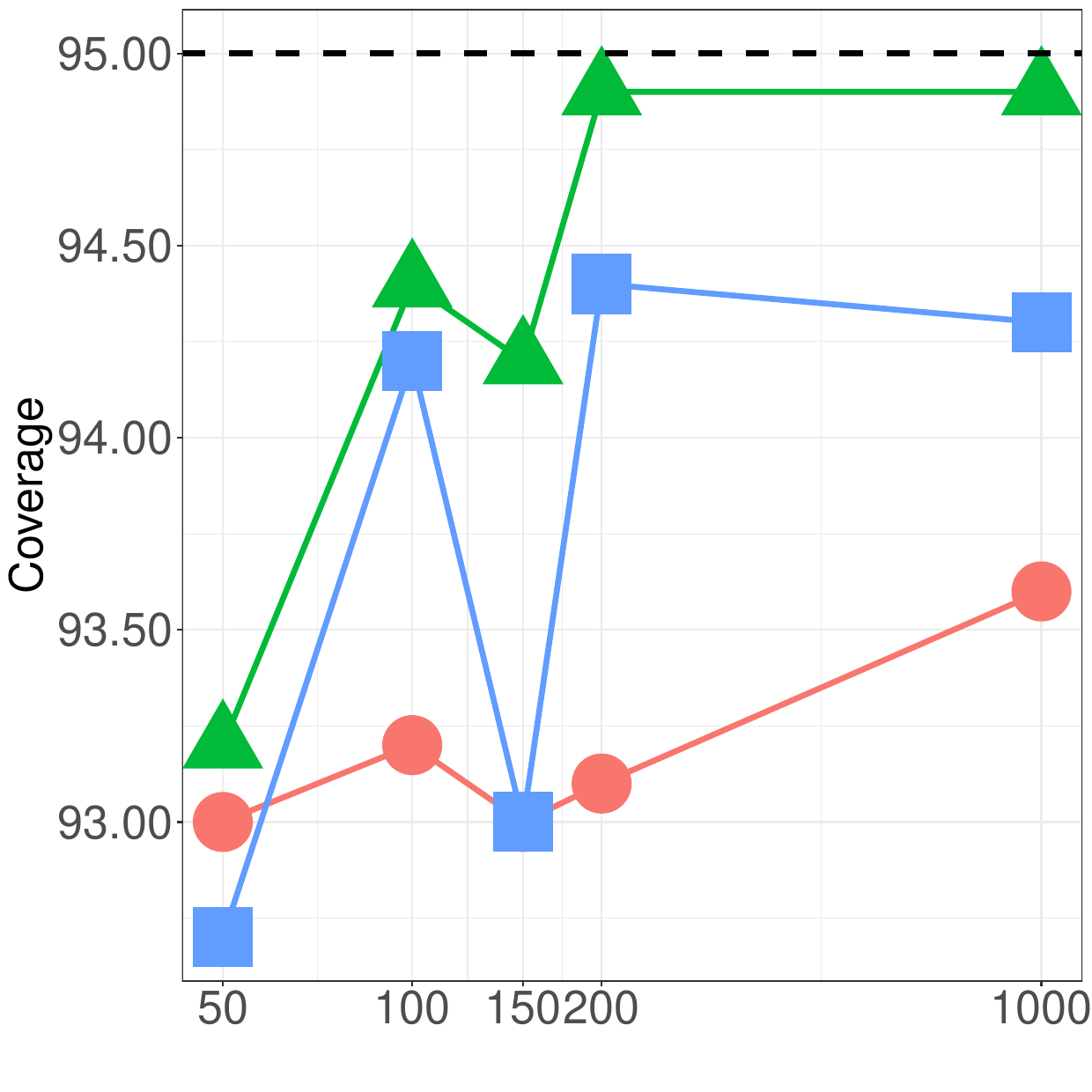}
        \end{minipage}
    \end{minipage}
    
    \caption{Bias (row 1), RMSE (row 2) and Coverage (row 3) of the parameters $\mu_1$ and $\mu_2$ (column 1), $\sigma^2_1$ and $\sigma^2_2$ (column 2) and $\lambda$ (column 3) for the Scenario 3.}
    \label{fig:fig3}
\end{figure}

In this scenario, the bias and root mean square error (RMSE) decreases to zero when the sample size increases. The coverage probability for the parameters $\mu_1$, $\mu_2$, $\sigma_1^2$ and $\sigma_2^2$ are close to the nominal 95\% level across different parameters vector $\bm{\theta}$ and sample sizes. However, for the parameter $\lambda$, only $\bm{\theta}_2$ achieves coverage close to the nominal 95\% level, while for $\bm{\theta}_1$ and $\bm{\theta}_3$, the coverage is underestimated. Again, the results can also be found in Tables \ref{tab:scenario-results-3.1} - \ref{tab:scenario-results-3.3} in the Appendix.

\section{Application}
\label{sec:apli}
The dataset used in this application was obtained from the 2014 Annual Report of the Regional Labor Court of the 5th Region (TRT5) in Bahia, Brazil. The mission of TRT5 is to promote justice in labor relations with efficiency, transparency, and swiftness, contributing to social harmony and strengthening citizenship within Bahia.

For this study, the outcomes are defined as follows: 
$y_1$, the Congestion Rate, represents the proportion of unresolved cases relative to the total cases processed within a year. $y_2$, the Conciliation Index, is calculated as the percentage of sentences and decisions resolved through agreements, relative to the total number of final decisions issued by the 88 Labor Courts in Bahia.
 
Figure \ref{fig:Boxplot} presents the boxplots and histograms for the variables $y_1$ (Congestion Rate) and $y_2$ (Conciliation Index). The graph reveals the presence of outliers in the Conciliation Index. These outliers correspond to the courts located in the cities of Itamaraju (\# 33), Simões Filho (\# 84), Santo Amaro (\# 80), and Candeias (\# 10 and \# 11).
\begin{figure}[h!]
    \centering
    \begin{minipage}[!]{0.32\linewidth}
        \includegraphics[width = 5cm]{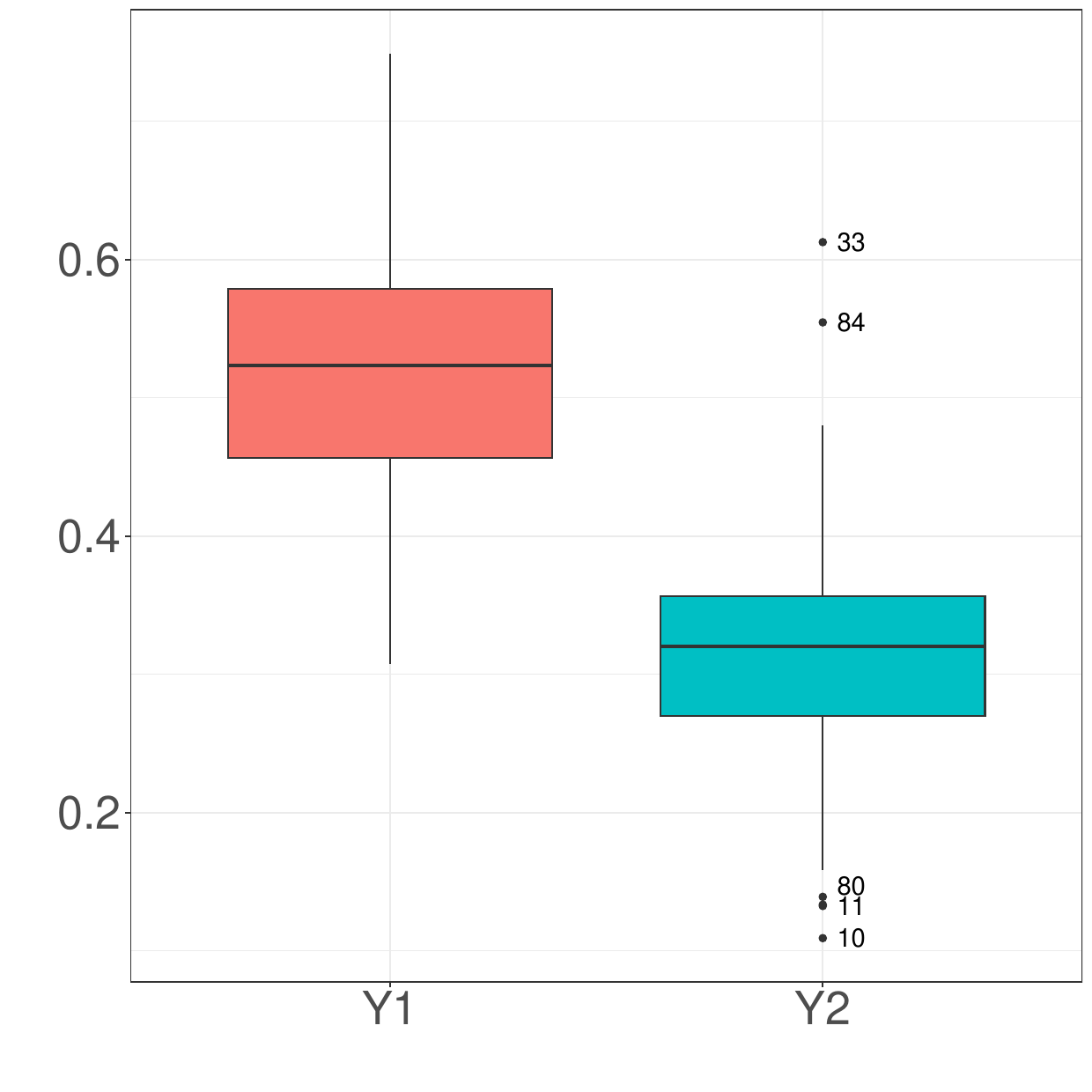}
    \end{minipage}
    \begin{minipage}[!]{0.32\linewidth}
        \includegraphics[width = 5cm]{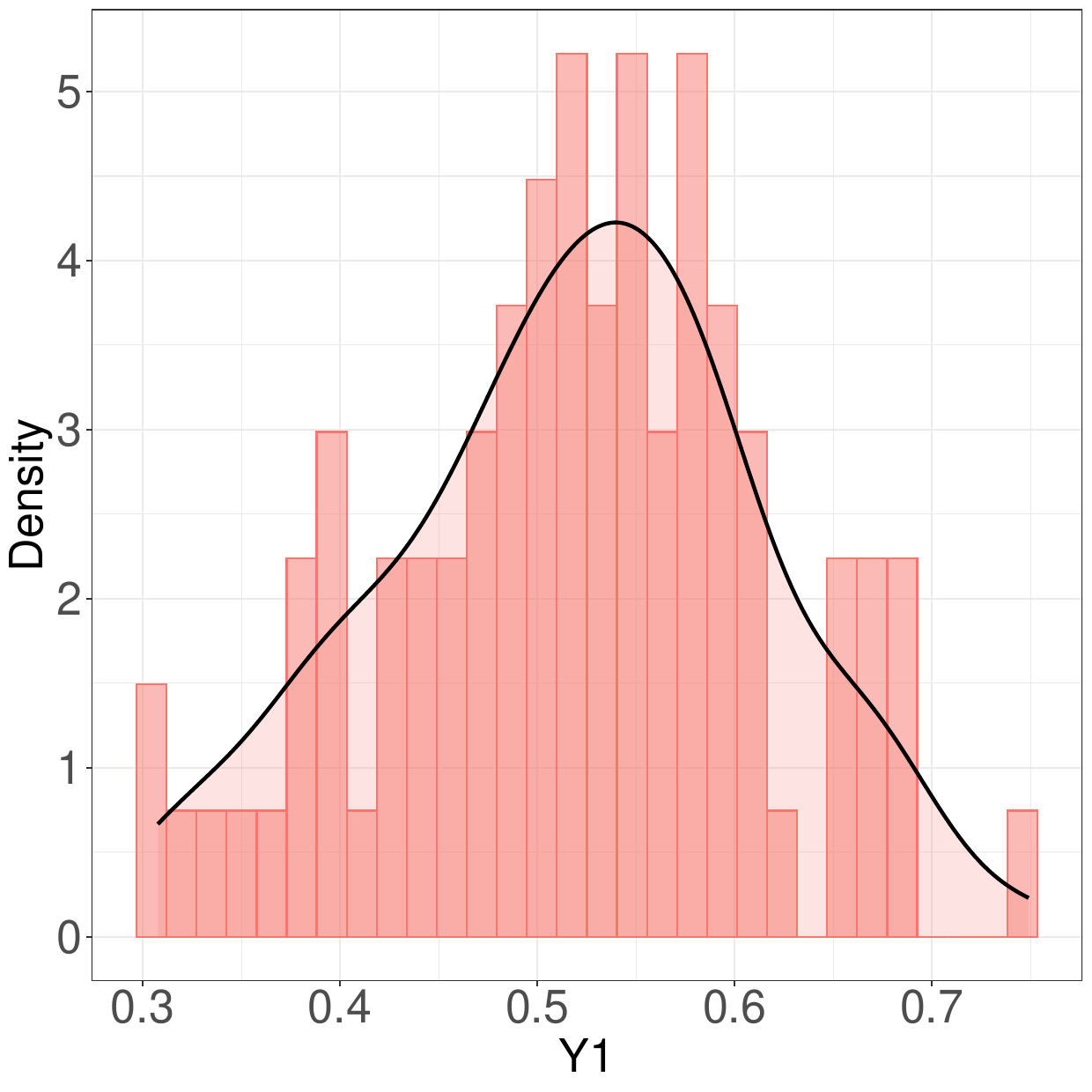}
    \end{minipage}
    \begin{minipage}[!]{0.32\linewidth}
        \includegraphics[width = 5cm]{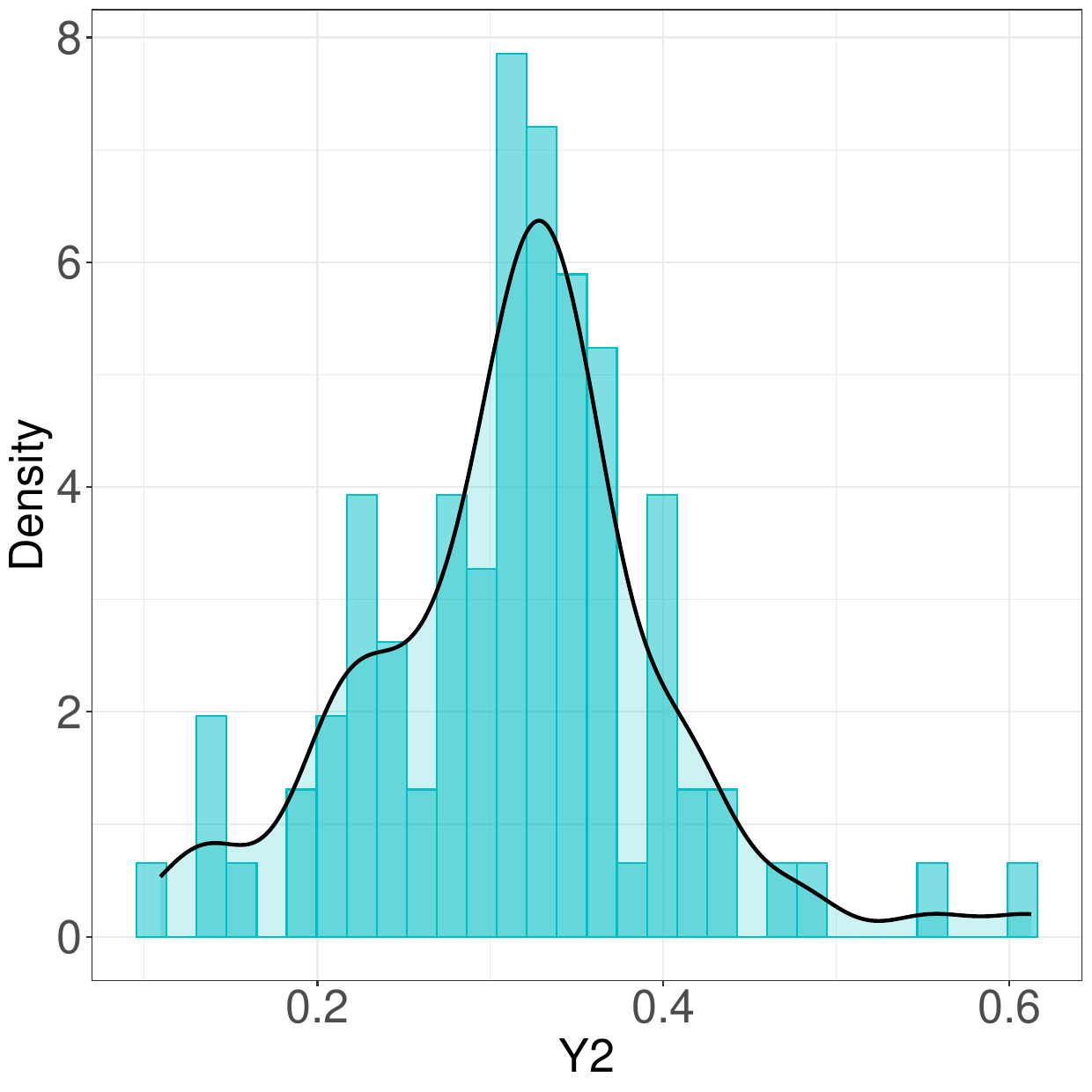}
    \end{minipage}
    \caption{Boxplot and histograms for the $y_1$(Congestion Rate) and $y_2$ (Conciliation Index) variables.}
    \label{fig:Boxplot}
\end{figure}
Table \ref{tab:medidas-descritivas} highlights the central tendency and dispersion measures for the variables. 
\begin{table}[h!]
    \centering
    \caption{Descriptive measures of position, dispersion, asymmetry, kurtosis, and relative position of variables $y_1$ and $y_2$.}
    \vspace{0.5cm}
    \begin{small}
    \begin{tabular}{c|RRRRRRRRRR}
        \hline
          & \text{Min.} & \text{Max.} & Q_1 & Q_3 & \text{Mean} & \text{Median} & \text{\begin{tabular}[c]{@{}l@{}}Standard \\ Errors\end{tabular}} & \text{Asymmetry} & \text{Kurtosis} \\ 
        \hline
        $y_1$ & 0.308 & 0.749 & 0.456 & 0.579 & 0.524 & 0.524 &0.095 & -0.160 & -0.382 \\ 
        $y_2$ & 0.109 & 0.613 & 0.270 & 0.357 & 0.315 & 0.320 &0.086 &  0.288 &  1.273 \\
        \hline
    \end{tabular}
    \end{small}
    \label{tab:medidas-descritivas}
\end{table}
As shown in Table \ref{tab:medidas-descritivas}, the average Conciliation Index is notably lower than the Congestion Rate, despite conciliation being the fastest route to a resolution. Given that both $y_1$ and $y_2$ are constrained to the (0,1) interval, we propose using a model that accounts for this characteristic. Accordingly, we applied the bivariate Simplex distribution defined in Section \ref{sec:bSimplex} to obtain the estimates, standard errors, and confidence intervals, which are reported in Table \ref{tab:estimativas-da-aplicacao}. The estimate for the joint expectation, $\widehat{E}(y_1y_2)$, is 0.162, and the maximum likelihood estimators are obtained when the dependence parameter $\lambda=0.072$ (positive dependence). This suggests that the Congestion Rate and Conciliation Index tend to increase together. 

\begin{table}[!h]
    \centering
    \caption{Estimates, standard errors, and confidence intervals with a confidence coefficient of 95\%.}
    \vspace{0.5cm}
    \begin{small}
    \begin{tabular}{c|RRC}
        \hline
        \text{Parameter} & \text{Estimate} & \text{Standard error} & \text{Confidence Interval (95\%)} \\ 
        \hline
        $\mu_1$         &   0.518   &   0.010   &   (0.498~;~0.537) \\ 
        $\mu_2$         &   0.313   &   0.010   &   (0.294~;~0.332) \\ 
        $\sigma^2_1$    &   0.799   &   0.060   &   (0.681~;~0.917) \\ 
        $\sigma^2_2$    &   0.949   &   0.072   &   (0.809~;~1.089) \\ 
       \hline
    \end{tabular}
    \end{small}
    \label{tab:estimativas-da-aplicacao}
\end{table}

Figure \ref{fig:ajuste} displays the surface and contour plots based on the estimates obtained. The contour graph, in particular, demonstrates a good fit of the model to the data. Additionally, the model applied to the dataset produced graphs that closely resemble those of $\bm{\theta}_1$ in scenario 3 (Figure~\ref{fig:superficie-e-contono-3}).
\begin{figure}[h!]
    \begin{minipage}[!]{0.45\linewidth}
        \includegraphics[width=\linewidth]{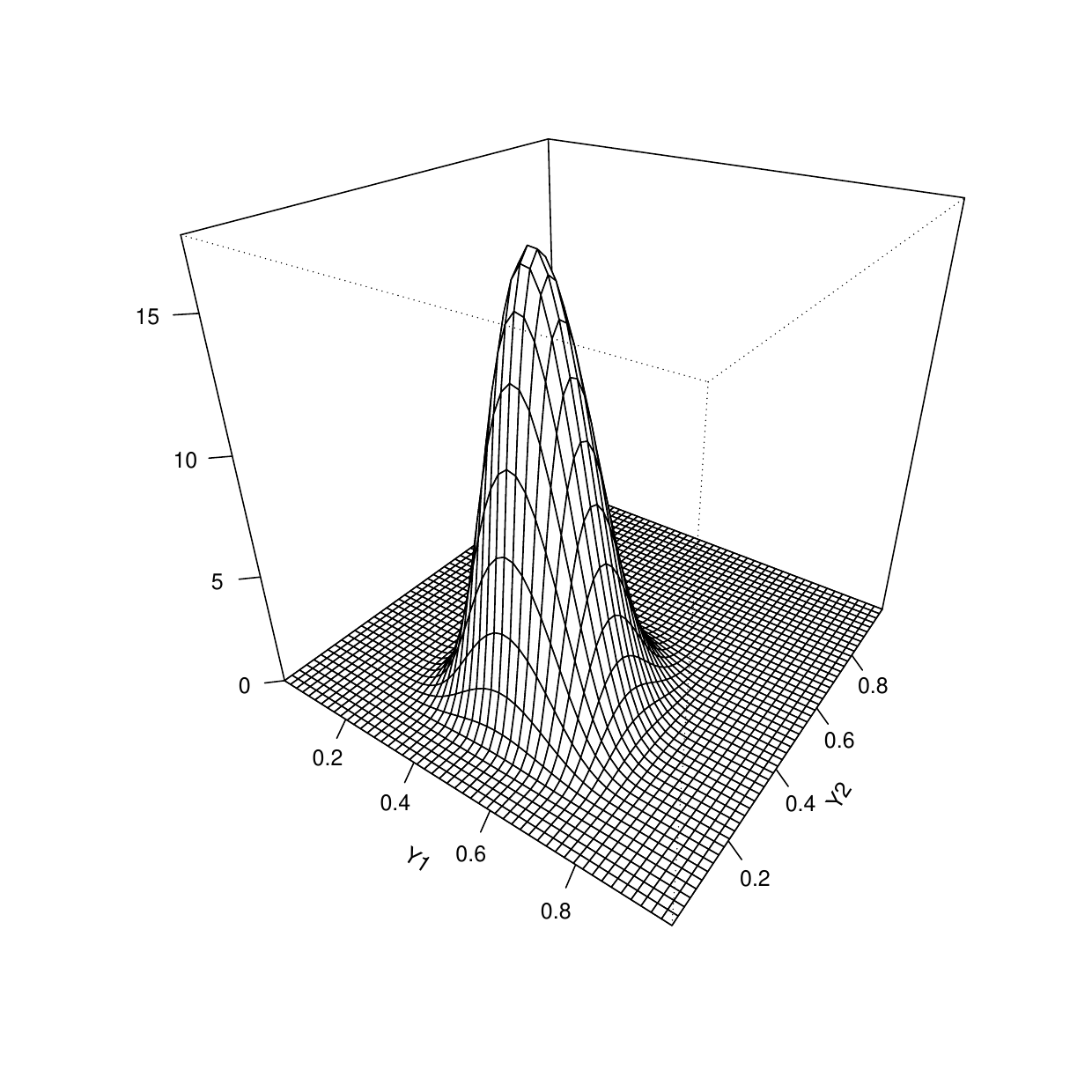}
    \end{minipage}
    \hfill
    \begin{minipage}[!]{0.45\linewidth}
        \includegraphics[width=\linewidth]{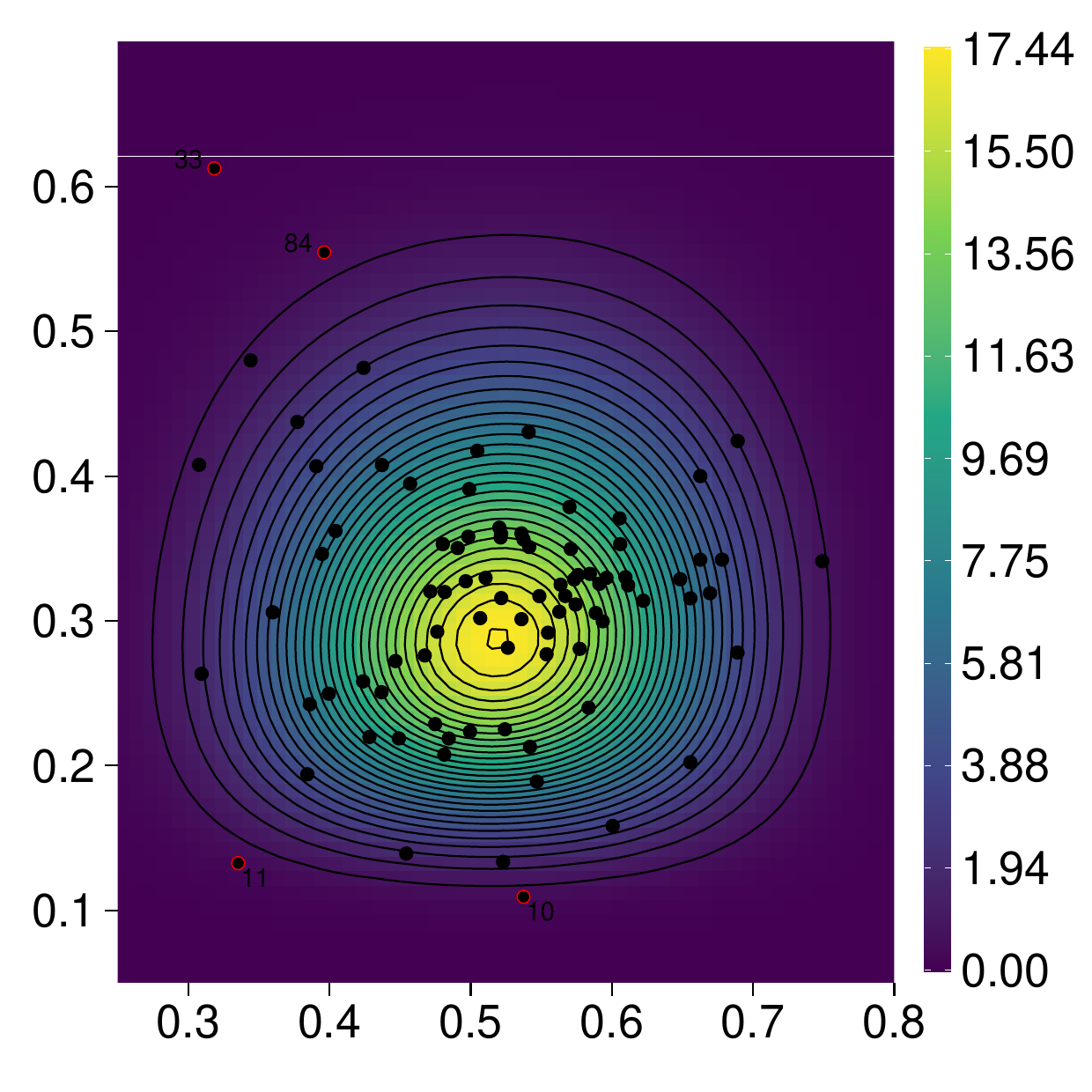}
    \end{minipage}
    \caption{Congestion Rate surface graph (left) and contour graph (right) ($y_1$) and the Conciliation Index ($y_2$).}
    \label{fig:ajuste}
\end{figure}

\section{Conclusions}
\label{sec:conclu}

In this article, we propose the bivariate Simplex distribution obtained via copulas as an alternative to existing distribution to draw inferences on two variables belonging to the standard unit interval (0,1). Additionally, through algebraic manipulations, we derived an analytic form for the joint expectation {\rm E}$(y_1y_2)$. In our Monte Carlo simulation study, we investigated the asymptotic properties of the maximum likelihood estimators. Our findings indicate that these estimators exhibit desirable asymptotic properties across different scenarios, with the dependence parameter $\lambda$ taking values of 1, -1, and 0. Finally, we validated our proposal using a real dataset from the Jurimetry area$-$the Annual Report of the Regional Labor Court 5$^{a}$ region (TRT5) of Bahia, Brazil, to validate our proposal.  



\bibliographystyle{natbib}      
\bibliography{biblio}   

\newpage
\section*{Appendix}

\subsection*{\textbf{Appendix A:}  Mathematical results used in the proof of Theorem 1}
\label{subsec: Ap1}

Let $K_\nu$ be the modified Bessel function of the second kind. From the definition, we know that  for $\nu$ real and $z$ positive, $K_\nu(z)$ is real. Additionally, the following symmetry holds $K_{-\nu}(z)=K_\nu(z).$ Below we state some properties of this function.
\begin{proposition}
\label{teobessel}
 Let $\beta$ and $\gamma$ be positive real numbers, and let $\nu \in \mathbb{R}$. Then we have the following integral representation for the modified Bessel function of the second kind:
$$\int_0^\infty {x^{\nu-1}e^{-\frac{1}{2}(\beta x^{-1}+\gamma x)}} dx =2(\frac{\beta}{\gamma})^{\nu/2}K_{\nu}(\sqrt{\beta\gamma}).$$
\end{proposition}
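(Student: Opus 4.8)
The plan is to recognize this as a classical integral representation of the modified Bessel function $K_\nu$, sometimes attributed to the Basset/Schläfli type formulas. The cleanest route is to reduce the given integral to the standard form $\int_0^\infty t^{\nu-1}\exp\{-\tfrac{1}{2}c(t+t^{-1})\}\,dt = 2K_\nu(c)$ by a suitable rescaling, and then to establish that standard form (or simply cite it). First I would perform the substitution $x = \sqrt{\beta/\gamma}\,t$, so that $dx = \sqrt{\beta/\gamma}\,dt$ and $x^{\nu-1} = (\beta/\gamma)^{(\nu-1)/2} t^{\nu-1}$. Under this change the exponent becomes $-\tfrac{1}{2}\big(\beta x^{-1} + \gamma x\big) = -\tfrac{1}{2}\sqrt{\beta\gamma}\,(t^{-1} + t)$, since $\beta x^{-1} = \sqrt{\beta\gamma}\,t^{-1}$ and $\gamma x = \sqrt{\beta\gamma}\,t$. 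Collecting the prefactors gives $(\beta/\gamma)^{(\nu-1)/2}\cdot(\beta/\gamma)^{1/2} = (\beta/\gamma)^{\nu/2}$, so the integral equals $(\beta/\gamma)^{\nu/2}\int_0^\infty t^{\nu-1}\exp\{-\tfrac{1}{2}\sqrt{\beta\gamma}\,(t+t^{-1})\}\,dt$.

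It then remains to show that $\int_0^\infty t^{\nu-1}\exp\{-\tfrac{1}{2}z(t+t^{-1})\}\,dt = 2K_\nu(z)$ for $z = \sqrt{\beta\gamma} > 0$. I would take this as the defining integral representation of $K_\nu$ (it is the standard one, e.g.\ Watson's treatise or Gradshteyn--Ryzhik 3.471.9), noting the symmetry $K_{-\nu} = K_\nu$ is visible directly from the $t \mapsto t^{-1}$ invariance of the measure and exponent. If a self-contained derivation is wanted, one can start from the series or from the representation $K_\nu(z) = \tfrac{1}{2}\int_0^\infty t^{\nu-1}e^{-\frac{z}{2}(t+1/t)}\,dt$ as the definition and verify it solves the modified Bessel equation with the correct decay at infinity; but for the purposes of this paper citing the representation is entirely adequate.

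The main obstacle, such as it is, is bookkeeping: making sure the powers of $\beta/\gamma$ combine to exactly $\nu/2$ and that the argument of $K_\nu$ is $\sqrt{\beta\gamma}$ rather than, say, $2\sqrt{\beta\gamma}$ or $\tfrac{1}{2}\sqrt{\beta\gamma}$ — the factor $\tfrac{1}{2}$ in the exponent is precisely what makes the argument come out as $\sqrt{\beta\gamma}$ with the leading constant $2$ on the right-hand side. Convergence is not an issue: for $x \to \infty$ the factor $e^{-\gamma x/2}$ forces rapid decay, and for $x \to 0^+$ the factor $e^{-\beta/(2x)}$ decays faster than any power, so the integral converges absolutely for every $\nu \in \mathbb{R}$ and every $\beta,\gamma > 0$, which also justifies the change of variables. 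I would close by remarking that this is exactly the identity invoked when passing from Equation~\eqref{expressao} to the Bessel-function form in the proof of the Theorem, with $\beta \mapsto \beta + \beta/p$ and $\gamma \mapsto \gamma + \gamma p$.
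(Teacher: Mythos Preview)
Your proposal is correct and, if anything, more detailed than the paper's own treatment: the paper simply cites \citet{Kropac1982} for this identity without further argument. Your substitution $x=\sqrt{\beta/\gamma}\,t$ reducing to the standard representation $\int_0^\infty t^{\nu-1}e^{-\frac{z}{2}(t+t^{-1})}\,dt=2K_\nu(z)$ is the natural way to verify the formula and is entirely in the same spirit.
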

\begin{proof}
See  \citet{Kropac1982}.
\end{proof}
\begin{proposition}
\label{teobessel1}
Let $z\in \mathbb{C}$ such that the real part of $z$, $\Re(z)$, is positive. Then,
\begin{enumerate}
  \item  $\int_0^\infty K_1(2z\cosh{(t)})dt=\frac{1}{2}K_{1/2}^2(z)$ 
  \item  $\int_0^\infty K_0(2z\cosh{(t)})\cosh(t)dt=\frac{1}{2}K_{1/2}^2(z)$ 
\end{enumerate}
\end{proposition}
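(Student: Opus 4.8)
\textbf{Proof plan for Proposition \ref{teobessel1}.}

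The plan is to reduce both integrals to a single known identity for a product of modified Bessel functions. First I would recall the classical integral representation (Nicholson-type / Macdonald's formula): for $\mu,\nu$ and suitable arguments,
\[
K_\mu(z)K_\nu(z)=2\int_0^\infty K_{\mu+\nu}(2z\cosh t)\cosh\!\big((\mu-\nu)t\big)\,dt,
\]
valid whenever $\Re(z)>0$. This is the structural identity that both parts are special cases of. For part (1), I would set $\mu=\nu=1/2$, so that $\mu+\nu=1$ and $\mu-\nu=0$, giving $\cosh(0\cdot t)=1$ and hence $K_{1/2}^2(z)=2\int_0^\infty K_1(2z\cosh t)\,dt$, which is exactly the claimed formula after dividing by $2$. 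For part (2), I would instead take $\mu=-\nu$ so that the order of the Bessel function in the integrand is $\mu+\nu=0$; choosing $\mu=1/2$, $\nu=-1/2$ gives $\mu-\nu=1$, hence $\cosh((\mu-\nu)t)=\cosh t$, and the identity reads $K_{1/2}(z)K_{-1/2}(z)=2\int_0^\infty K_0(2z\cosh t)\cosh t\,dt$. Using the symmetry $K_{-1/2}(z)=K_{1/2}(z)$ stated just before the proposition, the left side is $K_{1/2}^2(z)$, and dividing by $2$ yields the claim.

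If one prefers a self-contained derivation rather than quoting Macdonald's formula, I would instead start from Proposition \ref{teobessel} (the integral $\int_0^\infty x^{\nu-1}e^{-(\beta x^{-1}+\gamma x)/2}\,dx = 2(\beta/\gamma)^{\nu/2}K_\nu(\sqrt{\beta\gamma})$). Writing $K_{1/2}^2(z)$ as a double integral via this representation with $\nu=1/2$ and $\beta=\gamma=2z$ in each factor, then passing to the variables $(s,t)$ with the product $st$ and ratio $s/t$ (equivalently $s=re^u$, $t=re^{-u}$), the inner integral over $r$ collapses back to a Bessel function of argument proportional to $\cosh u$ by Proposition \ref{teobessel} again, producing the $\cosh t$ kernel. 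The bookkeeping of exponents decides whether one lands on the $K_0$–$\cosh t$ form or the $K_1$ form, recovering parts (1) and (2) respectively; the elementary identity $K_{1/2}(z)=\sqrt{\pi/(2z)}\,e^{-z}$ can be used as a consistency check on both sides.

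The main obstacle is not analytic depth but careful index and argument bookkeeping: matching the orders $\mu,\nu$ in Macdonald's formula (or the exponents in the double-integral approach) so that the integrand's Bessel order comes out to $1$ in part (1) and $0$ in part (2), while simultaneously getting the $\cosh t$ weight right in part (2), and making sure the hypothesis $\Re(z)>0$ is exactly what guarantees absolute convergence of all the integrals so that Fubini's theorem applies when interchanging the order of integration. Once the correct specialization is identified, both statements follow by inspection together with the symmetry relation $K_{-\nu}(z)=K_\nu(z)$.
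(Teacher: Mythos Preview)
Your proposal is correct and matches the paper's approach: the paper's entire proof is a citation to DLMF 10.32.17, which is precisely the Macdonald-type product formula $K_\mu(z)K_\nu(z)=2\int_0^\infty K_{\mu+\nu}(2z\cosh t)\cosh((\mu-\nu)t)\,dt$ that you specialize with $(\mu,\nu)=(1/2,1/2)$ and $(\mu,\nu)=(1/2,-1/2)$. Your write-up is in fact more explicit than the paper's, since you spell out the choices of indices and invoke $K_{-1/2}=K_{1/2}$, whereas the paper leaves that to the reader.
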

\begin{proof}
See expression 10.32.17 (\url{https://dlmf.nist.gov/10.32}).
\end{proof}
\begin{corollary}
\label{lema1}
Let $a\in \mathbb{C}$ be such that $\Re(a)>0.$ Then
\begin{eqnarray*}
\int_1^\infty \frac{1}{x}K_1\Big(a \frac{x^2+1}{x}\Big)dx=\frac{\pi}{4a}e^{-2a}    
\end{eqnarray*}
\end{corollary}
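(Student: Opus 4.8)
The plan is to reduce the stated integral to the first identity of Proposition \ref{teobessel1} by a single exponential substitution, and then to evaluate the resulting half-integer Bessel function in closed form. So first I would put $x = e^{t}$, which gives $dx/x = dt$, sends $x=1$ to $t=0$ and $x\to\infty$ to $t\to\infty$, and turns the argument of $K_1$ into
\[
a\,\frac{x^{2}+1}{x} = a\left(x + \frac1x\right) = a\left(e^{t}+e^{-t}\right) = 2a\cosh(t).
\]
Hence
\[
\int_1^\infty \frac{1}{x}\,K_1\!\Big(a\,\frac{x^{2}+1}{x}\Big)\,dx
= \int_0^\infty K_1\big(2a\cosh(t)\big)\,dt.
\]

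Next I would invoke Proposition \ref{teobessel1}(1) with $z = a$ (legitimate since $\Re(a)>0$ by hypothesis), which yields $\int_0^\infty K_1(2a\cosh t)\,dt = \tfrac12 K_{1/2}^{2}(a)$. The last step is to substitute the elementary closed form of the modified Bessel function of order $1/2$, namely $K_{1/2}(z) = \sqrt{\pi/(2z)}\,e^{-z}$, so that $K_{1/2}^{2}(a) = \dfrac{\pi}{2a}e^{-2a}$ and therefore $\tfrac12 K_{1/2}^{2}(a) = \dfrac{\pi}{4a}e^{-2a}$, which is exactly the claimed value.

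There is essentially no hard step here: the whole argument is the observation that $x \mapsto e^{t}$ converts $x + 1/x$ into $2\cosh t$ and $dx/x$ into $dt$, matching the left-hand side of Proposition \ref{teobessel1}(1). The only points that merit a line of care are noting that the substitution is a valid change of variables (monotone, smooth, with the stated limits) and recording the identity $K_{1/2}(z)=\sqrt{\pi/(2z)}\,e^{-z}$ used in the final simplification. If one prefers to avoid quoting the closed form of $K_{1/2}$, an alternative is to apply Proposition \ref{teobessel} with $\nu = 1/2$ to recover $K_{1/2}$, but citing the standard formula is cleaner.
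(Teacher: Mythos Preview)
Your proof is correct and follows essentially the same approach as the paper: the substitution $x=e^{t}$ to convert the integral into $\int_0^\infty K_1(2a\cosh t)\,dt$, then Proposition~\ref{teobessel1}(1) with $z=a$, and finally the closed form $K_{1/2}(a)=\sqrt{\pi/(2a)}\,e^{-a}$. Your write-up is in fact more detailed than the paper's on the substitution step.
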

\begin{proof}
Considering $x=e^t$ we have 
\begin{eqnarray*}
\int_1^\infty \frac{1}{x}K_1\Big(a \frac{x^2+1}{x}\Big)dx=\int_0^\infty K_1(2a \cosh(t))dt    
\end{eqnarray*}
Using Proposition \ref{teobessel1}  with $z=a$ and the fact that $K_{1/2}(a)=\sqrt{\frac{\pi}{2a}}e^{-a}$ we obtain the result.
\end{proof}
\noindent Additionally, we introduce the notion of asymptotic equivalence and little-o notation:
\begin{definition}

Two functions \( f \) and \( g \) are asymptotic equivalents as \( x \) approaches \( a \) if 
\[
\lim_{x \to a}\frac{f(x)}{g(x)} = 1.
\]
This relationship is denoted by \( f \sim_a g \).
\end{definition}
We observe that the asymptotic equivalence relation is transitive, that is, if \( f \sim_a g \) and \( g \sim_a h \), then \( f \sim_a h \).
\begin{definition}
A function \( f(x) \) is \( o(g(x)) \) for \( x \to a \) if \( f(x) \) grows slower than \( g(x) \) as \( x \) approaches \( a \). In simpler terms, 
\[
\lim_{x \to a}\frac{f(x)}{g(x)} = 0.
\]
\end{definition}
\noindent From now on, we denote $L_{\nu}$ as the modified Struve function, and $\text{ph}(z)$ denotes the phase of the complex number $z$.
\begin{lemma}
\label{teostruve}
 Let $z\in \mathbb{C}$ such that $|\text{ph}(z)|<\frac{\pi}{2}$, and let $\nu \in \mathbb{R}$ such that $\nu \pm \frac{1}{2} \notin -\mathbb{N}$. Then the following asymptotic expansion holds for $|z| \gg 1$:
\begin{equation}\label{K}
z K_{\nu+1}(z)L_\nu(z) \thicksim c_\nu\sqrt{\frac{z\pi}{2}}z^{\nu-1}e^{-z}+\frac{1}{2}.
\end{equation}
\end{lemma}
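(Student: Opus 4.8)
The plan is to establish the asymptotic expansion \eqref{K} by combining the known large-argument asymptotic expansions of the modified Bessel function $K_{\nu+1}(z)$ and the modified Struve function $L_\nu(z)$, and then multiplying them term by term. First I would recall that for $|z|\gg 1$ with $|\mathrm{ph}(z)|<\tfrac{\pi}{2}$ one has the standard expansion $K_{\nu+1}(z)\sim\sqrt{\tfrac{\pi}{2z}}\,e^{-z}\bigl(1+O(z^{-1})\bigr)$, so that $zK_{\nu+1}(z)\sim\sqrt{\tfrac{z\pi}{2}}\,e^{-z}\bigl(1+O(z^{-1})\bigr)$. Second, I would invoke the asymptotic relation between the Struve function and the Bessel $Y$ (or $I$) function: for $\nu\pm\tfrac12\notin-\mathbb{N}$, $L_\nu(z)=I_\nu(z)+\mathbf{M}_\nu(z)$ where the "modified Struve remainder" $\mathbf{M}_\nu(z)$ has the algebraic expansion $\mathbf{M}_\nu(z)\sim\tfrac{1}{\sqrt{\pi}}\bigl(\tfrac{z}{2}\bigr)^{\nu-1}\tfrac{1}{\Gamma(\nu+\tfrac12)}\bigl(1+O(z^{-2})\bigr)$ — more precisely I would use the DLMF 11.6.2 expansion $L_\nu(z)-I_\nu(z)\sim\tfrac{1}{\pi}\sum_{k=0}^{\infty}(-1)^{k+1}\tfrac{\Gamma(k+\tfrac12)}{\Gamma(\nu+\tfrac12-k)}\bigl(\tfrac{z}{2}\bigr)^{\nu-2k-1}$, whose leading term is $c_\nu\bigl(\tfrac{z}{2}\bigr)^{\nu-1}$ for an explicit constant $c_\nu$ depending only on $\nu$. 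Since $I_\nu(z)$ grows like $e^{z}/\sqrt{z}$, the product $zK_{\nu+1}(z)I_\nu(z)$ tends to a constant (one can compute it is exactly $\tfrac12$ using $K_{\nu+1}(z)I_\nu(z)+K_\nu(z)I_{\nu+1}(z)=1/z$, or the Wronskian $I_\nu(z)K_\nu'(z)-I_\nu'(z)K_\nu(z)=-1/z$).

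The key steps in order would then be: (1) split $L_\nu(z)=I_\nu(z)+\bigl(L_\nu(z)-I_\nu(z)\bigr)$; (2) show $zK_{\nu+1}(z)I_\nu(z)\to\tfrac12$ as $|z|\to\infty$ — this handles the additive $+\tfrac12$ term in \eqref{K}, using the exact product/Wronskian identity together with the leading asymptotics $K_{\nu+1}(z)\sim\sqrt{\pi/2z}\,e^{-z}$ and $I_\nu(z)\sim e^{z}/\sqrt{2\pi z}$ so the exponentials cancel and the residual converges; (3) for the Struve remainder piece, multiply $zK_{\nu+1}(z)\sim\sqrt{z\pi/2}\,e^{-z}$ by the leading term $c_\nu(z/2)^{\nu-1}$ of $L_\nu(z)-I_\nu(z)$, obtaining $c_\nu\sqrt{z\pi/2}\,z^{\nu-1}e^{-z}$ up to the normalization of $c_\nu$ (absorbing the $2^{\nu-1}$ into the constant, matching the statement's $c_\nu$); (4) check that all discarded terms are genuinely lower order: in step (2) the correction is $O(z^{-1})$ relative to $\tfrac12$, hence $o(1)$, and in step (3) the next Struve term is $O(z^{\nu-3})e^{-z}\sqrt{z}$, which is $o$ of the retained term, while $e^{-z}\times(\text{algebraic})$ is also $o(1)$ compared to the $+\tfrac12$ — so the two retained contributions are the only ones surviving at their respective scales and the asymptotic equivalence $\sim$ is with respect to their sum. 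I would use transitivity of $\sim_a$ (noted just before the lemma) to chain these estimates cleanly.

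The main obstacle I anticipate is bookkeeping of the two different scales present in \eqref{K}: the term $c_\nu\sqrt{z\pi/2}\,z^{\nu-1}e^{-z}$ is exponentially small while $\tfrac12$ is $O(1)$, so naively the left side is just asymptotic to $\tfrac12$ and the exponential term looks superfluous. I would need to interpret the statement correctly — presumably \eqref{K} is meant as a two-term asymptotic expansion (the exponentially small term being the first correction that is recorded because it is exactly what is needed downstream in Lemma~\ref{lema0} for computing $J_0$), rather than a bare $\sim$ in the strict sense of the definition given. So the careful part is stating precisely in what sense the equivalence holds: I would phrase it as $zK_{\nu+1}(z)L_\nu(z)-\tfrac12\sim_{\infty}c_\nu\sqrt{z\pi/2}\,z^{\nu-1}e^{-z}$, i.e. the exponentially small term is the exact leading behavior of the difference, and then the displayed form follows. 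Establishing that the difference is exactly asymptotic to that exponential term (and not, say, to some other exponentially small contribution coming from the subleading part of $K_{\nu+1}(z)$ times $I_\nu(z)$) requires combining one more order of each expansion and verifying the $e^{-z}$ terms match; this is the computational heart of the argument, but it reduces to multiplying two known power series in $1/z$ and extracting the coefficient of $z^{\nu-1}e^{-z}$.
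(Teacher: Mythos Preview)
Your approach is essentially identical to the paper's: decompose $L_\nu(z)=I_\nu(z)+M_\nu(z)$, invoke the standard large-$|z|$ asymptotics $K_{\nu+1}(z)\sim\sqrt{\pi/(2z)}\,e^{-z}$, $I_\nu(z)\sim e^{z}/\sqrt{2\pi z}$, and $M_\nu(z)\sim\pi^{-1}c_\nu^0 z^{\nu-1}$, then multiply to obtain the two displayed terms. Your discussion of the two-scale interpretation is in fact more careful than the paper's own proof, which simply records the formal two-term expansion and then notes (as you do) that strictly speaking the whole expression is $\sim\tfrac12$; your suggested Wronskian route for the $\tfrac12$ is a valid alternative to the paper's direct multiplication of leading terms.
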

\begin{proof}

In this proof, we refer to  \citep[p.249, 252, 288, 293]{TheHanbookMath} for  know results used. First we observe that $L_\nu(z) =M_\nu(z)+I_\nu(z)$. Also, the following asymptotic equivalences hold
$$K_{\nu}(z) \thicksim \sqrt{\frac{\pi}{2z}}e^{-z},$$ $$ I_{\nu}(z) \thicksim \sqrt{\frac{1}{2\pi z}}e^{z},$$ and
$$M_\nu(z)\thicksim \frac{1}{\pi}\sum_{k \geq 0}c^k_\nu z^{\nu-2k-1},$$ 
where  $M_{\nu}$  is the modified Struve function,  \(I_{\nu}\) is the modified Bessel function and $c^k_\nu=\frac{(-1)^{k+1}\Gamma(k+1/2)(1/2)^{\nu-2k-1}}{\Gamma(\nu+1/2-k)}.$
Since $\sum_{k \geq 1}c^k_\nu z^{\nu-2k-1}=o(c_\nu^0 z^{\nu-1})$, then $\sum_{k \geq 0}c^k_\nu z^{\nu-2k-1}\thicksim c_\nu^0 z^{\nu-1}$
and by transitivity
$M_\nu(z)\thicksim \frac{1}{\pi}c_\nu^0 z^{\nu-1}.$
Therefore, 
$$L_\nu(z) =M_\nu(z)+I_\nu(z)\thicksim \frac{1}{\pi}c_\nu^0 z^{\nu-1}+ \sqrt{\frac{1}{2\pi z}}e^{z}.$$
Finally,
$$ z K_{\nu+1}(z)L_\nu(z) \thicksim c_\nu^0 \sqrt{\frac{1}{2\pi}}z^{\nu-1/2}e^{-z}+\frac{1}{2}\thicksim \frac{1}{2}.$$
\end{proof}
\begin{corollary}
\label{lemaproduto}
\begin{eqnarray*}
  \lim_{x->+\infty}x\Big(K_{0}(x)L_{-1}(x)+K_{1}(x)L_{0}(x)\Big)=1.  
\end{eqnarray*}
\end{corollary}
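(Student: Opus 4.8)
\textbf{Proof plan for Corollary \ref{lemaproduto}.}

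The plan is to apply Lemma \ref{teostruve} twice and combine the two resulting asymptotic equivalences. Concretely, I would set $z = x$ (real, positive, so $|\mathrm{ph}(z)| = 0 < \pi/2$) and consider the two choices $\nu = -1$ and $\nu = 0$. For $\nu = 0$ the hypothesis $\nu \pm \tfrac12 \notin -\mathbb{N}$ is clearly satisfied, and for $\nu = -1$ we have $\nu + \tfrac12 = -\tfrac12$ and $\nu - \tfrac12 = -\tfrac32$, neither of which lies in $-\mathbb{N} = \{-1,-2,\dots\}$, so Lemma \ref{teostruve} applies in both cases.

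First I would write, from \eqref{K} with $\nu = 0$,
\[
x\,K_{1}(x)L_{0}(x) \sim c_0 \sqrt{\tfrac{x\pi}{2}}\,x^{-1}e^{-x} + \tfrac12,
\]
and with $\nu = -1$,
\[
x\,K_{0}(x)L_{-1}(x) \sim c_{-1} \sqrt{\tfrac{x\pi}{2}}\,x^{-2}e^{-x} + \tfrac12,
\]
where I use $K_{\nu+1} = K_0$ when $\nu = -1$ (here I would note that the relevant coefficient $c_\nu$ in \eqref{K} is the leading Struve coefficient $c_\nu^0$ from the proof of Lemma \ref{teostruve}). In each case the first term on the right is $o(1)$ as $x \to +\infty$, since $\sqrt{x}\,x^{-1}e^{-x} \to 0$ and $\sqrt{x}\,x^{-2}e^{-x} \to 0$; hence each right-hand side is asymptotically equivalent to $\tfrac12$, i.e. $x K_1(x)L_0(x) \to \tfrac12$ and $x K_0(x)L_{-1}(x) \to \tfrac12$. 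Adding the two limits gives
\[
\lim_{x\to+\infty} x\Big(K_{0}(x)L_{-1}(x) + K_{1}(x)L_{0}(x)\Big) = \tfrac12 + \tfrac12 = 1.
\]

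The only mild subtlety is bookkeeping: Lemma \ref{teostruve} is phrased as an asymptotic equivalence ($\sim$) of a quantity to an expression whose dominant term is the constant $\tfrac12$, so I should be careful that $f \sim \tfrac12$ indeed forces $f \to \tfrac12$ (immediate from the definition of $\sim_a$ with $a = +\infty$), and that the two index shifts ($K_{\nu+1}$ becoming $K_0$ versus $K_1$) are matched correctly to the Struve orders $L_{-1}$ and $L_0$. There is no real obstacle here — the content is entirely in Lemma \ref{teostruve}, and this corollary is just the specialization $\nu \in \{-1, 0\}$ followed by addition. This matches the terms $K_0(2a)L_{-1}(2a) + K_1(2a)L_0(2a)$ appearing in \eqref{J0} and in the definition of $A_m$, which is presumably why the corollary is recorded.
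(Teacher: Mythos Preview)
Your proposal is correct and is exactly the approach the paper has in mind: the paper's proof of Corollary~\ref{lemaproduto} consists of the single sentence ``This proof is straightforward from Proposition~\ref{teostruve},'' and you have simply spelled out the two instantiations $\nu=-1$ and $\nu=0$ of Lemma~\ref{teostruve}, checked the hypotheses, observed that each right-hand side in \eqref{K} tends to $\tfrac12$, and added. There is nothing to add or correct.
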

\begin{proof}
This proof is straightforward from Proposition \ref{teostruve}.
\end{proof}
\begin{proposition}
\label{teobessel2}
Let $z\in \mathbb{C}$. Then:
$$\int K_0(z) \, dz = \frac{\pi}{2}z(K_{0}(z)L_{-1}(z)+K_{1}(z)L_{0}(z))+C.$$
\end{proposition}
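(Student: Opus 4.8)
\textbf{Proof proposal for Proposition \ref{teobessel2}.}

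The plan is to verify the claimed antiderivative by differentiation rather than by integrating $K_0$ directly, since the right-hand side is already given explicitly. Let $\Phi(z) = \frac{\pi}{2}z\bigl(K_{0}(z)L_{-1}(z)+K_{1}(z)L_{0}(z)\bigr)$. It suffices to show $\Phi'(z) = K_0(z)$, and the additive constant $C$ then accounts for the arbitrary choice of antiderivative. First I would expand $\Phi'(z)$ by the product rule, obtaining a sum of terms involving $K_0, K_1, L_{-1}, L_0$ and their derivatives $K_0', K_1', L_{-1}', L_0'$.

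The next step is to eliminate the derivatives using the standard recurrence and differentiation identities for the modified Bessel function of the second kind and the modified Struve function. Specifically I would use $K_0'(z) = -K_1(z)$, the recurrence $K_1'(z) = -K_0(z) - \tfrac{1}{z}K_1(z)$ (equivalently $K_{\nu}'(z) = -\tfrac12(K_{\nu-1}(z)+K_{\nu+1}(z))$ together with $K_{-\nu}=K_\nu$), and the analogous identities for the modified Struve function, namely $L_0'(z) = L_{-1}(z)$ (from the derivative relation $\frac{d}{dz}L_0(z) = L_{-1}(z)$) and the recurrence relating $L_{-1}'$, $L_0$, $L_{-1}$ and the elementary term $\tfrac{(z/2)^{\nu}}{\sqrt{\pi}\,\Gamma(\nu+3/2)}$ coming from the inhomogeneous term in the Struve differential equation. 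Substituting all of these into the expression for $\Phi'(z)$ and collecting terms, the products $K_0 L_{-1}$, $K_1 L_0$, $K_1 L_{-1}$, $K_0 L_0$ should cancel in pairs, leaving exactly $K_0(z)$ plus possibly an elementary leftover term that vanishes identically.

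The main obstacle I anticipate is bookkeeping: getting the Struve-function recurrences in the correct normalized form (the modified Struve function conventions differ across references by factors and sign, and the inhomogeneous term must be tracked carefully), and making sure the elementary (non-Bessel, non-Struve) pieces cancel exactly rather than leaving a spurious polynomial remainder. A cleaner alternative, which I would fall back on if the direct differentiation is unwieldy, is to note that both $K_0 L_{-1} + K_1 L_0$ and its relation to $\int K_0$ can be read off from known Lommel-type integral formulas for products of Bessel and Struve functions (e.g.\ the Nicholson-type / cross-product integrals tabulated in the DLMF, \url{https://dlmf.nist.gov/11.7} and \url{https://dlmf.nist.gov/10.22}); citing such a formula reduces the proof to a one-line appeal. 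Either way, once $\Phi'(z) = K_0(z)$ is established, the statement follows immediately by the fundamental theorem of calculus.
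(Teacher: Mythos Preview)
Your proposal is correct, and in fact your fallback option---citing a tabulated identity---is exactly what the paper does: its entire proof of this proposition is the single line ``See expression 10.43.2 (\url{https://dlmf.nist.gov/10.43}).'' So you have anticipated the paper's approach as your backup.

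Your primary route, direct verification that $\Phi'(z)=K_0(z)$ via the Bessel and Struve recurrences, is a genuinely different and more self-contained argument than the paper gives. It buys you an honest proof rather than a pointer to a table, at the cost of the bookkeeping you already flagged (tracking the inhomogeneous term in the modified Struve recurrence and making sure the cross-products cancel). Since this proposition is used in the paper only as a black-box antiderivative inside the proof of Lemma~\ref{lema0}, the paper's one-line citation is adequate for its purposes; your differentiation argument would be the right choice if one wanted the appendix to be self-contained.
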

\begin{proof}
 See expression 10.43.2 (\url{https://dlmf.nist.gov/10.43})
\end{proof}
\begin{lemma}
\label{lema0}
For $a>0$, we have
\begin{eqnarray*}
    \int_1^\infty K_0\Big(a \frac{x^2+1}{x}\Big)dx=\frac{\pi}{4a}e^{-2a}+\frac{\pi}{2}\Big[\frac{1}{2}-K_{0}(2a)L_{-1}(2a)-K_{1}(2a)L_{0}(2a)\Big].
\end{eqnarray*}    
\end{lemma}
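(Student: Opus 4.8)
The plan is to reduce the integral $\int_1^\infty K_0\!\bigl(a\frac{x^2+1}{x}\bigr)\,dx$ to a difference of two tractable pieces: an indefinite integral of $K_0$ that can be evaluated via the antiderivative in Proposition \ref{teobessel2}, and the hyperbolic-cosine representation that was already exploited in Corollary \ref{lema1}. First I would apply the substitution $x=e^t$, exactly as in the proof of Corollary \ref{lema1}, which turns the integral into $\int_0^\infty K_0(2a\cosh t)\,dt$; note that $dx=e^t\,dt$ here, so unlike the $J_1$ case there is no compensating $1/x$ factor, and one instead gets $\int_1^\infty K_0(a\frac{x^2+1}{x})\,dx=\int_0^\infty K_0(2a\cosh t)\,e^t\,dt$. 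Writing $e^t=\cosh t+\sinh t$ splits this as $\int_0^\infty K_0(2a\cosh t)\cosh t\,dt+\int_0^\infty K_0(2a\cosh t)\sinh t\,dt$.

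The first summand is handled immediately by part (2) of Proposition \ref{teobessel1} with $z=a$, giving $\tfrac12 K_{1/2}^2(a)=\tfrac12\cdot\frac{\pi}{2a}e^{-2a}=\frac{\pi}{4a}e^{-2a}$, which already accounts for the first term in the claimed formula. For the second summand, the substitution $u=2a\cosh t$ (so $du=2a\sinh t\,dt$) converts it to $\frac{1}{2a}\int_{2a}^\infty K_0(u)\,du$, a genuine indefinite-integral evaluation; here Proposition \ref{teobessel2} supplies the antiderivative $\frac{\pi}{2}u\bigl(K_0(u)L_{-1}(u)+K_1(u)L_0(u)\bigr)$. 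Evaluating at the two endpoints, the contribution at $u=2a$ is $-\frac{\pi}{2}\cdot 2a\bigl(K_0(2a)L_{-1}(2a)+K_1(2a)L_0(2a)\bigr)$, and after dividing by $2a$ this gives $-\frac{\pi}{2}\bigl(K_0(2a)L_{-1}(2a)+K_1(2a)L_0(2a)\bigr)$, matching the last two terms of the stated identity.

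The main obstacle — and the reason the preparatory Corollary \ref{lemaproduto} and Lemma \ref{teostruve} appear in the Appendix — is the evaluation of the upper limit $u\to\infty$ in the antiderivative: one must show that $u\bigl(K_0(u)L_{-1}(u)+K_1(u)L_0(u)\bigr)$ has a finite limit as $u\to\infty$ and identify that limit. By Corollary \ref{lemaproduto} this limit equals $1$, so the boundary term at infinity contributes $\frac{\pi}{2}\cdot 1=\frac{\pi}{2}$, and together with the endpoint term at $2a$ the second summand becomes $\frac{1}{2a}\bigl[\frac{\pi}{2}-\frac{\pi}{2}\cdot 2a(K_0(2a)L_{-1}(2a)+K_1(2a)L_0(2a))\bigr]$. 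Wait — I should double-check the bookkeeping of the $\frac{1}{2a}$ factor against the claimed formula: the constant $\frac{\pi}{4a}e^{-2a}$ comes purely from the $\cosh$-integral, while the bracketed term $\frac{\pi}{2}[\frac12-K_0(2a)L_{-1}(2a)-K_1(2a)L_0(2a)]$ must come entirely from $\frac{1}{2a}\int_{2a}^\infty K_0$; so I would need $\int_{2a}^\infty K_0(u)\,du=\frac{\pi}{2}\bigl(1-2a\,[K_0(2a)L_{-1}(2a)+K_1(2a)L_0(2a)]\bigr)\cdot\frac{2a\cdot\frac{1}{2a}}{1}$ — i.e. I would recompute carefully, but the structure is exactly right: the $\cosh$ term and the $\sinh$ term assemble into the two pieces of the statement once the $u\to\infty$ behavior is pinned down by Corollary \ref{lemaproduto}. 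I would close by assembling the two summands and simplifying.
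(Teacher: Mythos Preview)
Your proposal is correct and follows exactly the same route as the paper: the substitution $x=e^{t}$, the splitting $e^{t}=\cosh t+\sinh t$, evaluation of the $\cosh$-piece by Proposition~\ref{teobessel1}(2), and reduction of the $\sinh$-piece via $u=2a\cosh t$ to $\frac{1}{2a}\int_{2a}^{\infty}K_{0}(u)\,du$, finished with Proposition~\ref{teobessel2} and Corollary~\ref{lemaproduto}. Your hesitation about the $\tfrac{1}{2a}$ bookkeeping is just a matter of carrying the constants through cleanly; the method is identical to the paper's.
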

\begin{proof}
Let $x=e^t$. Then
\begin{eqnarray*}
\int_1^\infty K_0\Big(a \frac{x^2+1}{x}\Big)dx&=&\int_0^\infty e^t K_0(2a \cosh(t))dt,\\
                                    &=&\int_0^\infty \cosh(t) K_0(2a \cosh(t))dt+\int_0^\infty \sinh(t) K_0(2a \cosh(t))dt.    
\end{eqnarray*}
Using  Proposition \ref{teobessel1} with $z=a$ and the fact that $K_{1/2}(a)=\sqrt{\frac{\pi}{2a}}e^{-a}$ we find that
\begin{eqnarray*}
  \int_0^\infty \cosh(t) K_0(2a \cosh(t))dt=\frac{\pi e^{-2a}}{4a}.  
\end{eqnarray*}    
For the remaining integral, let $y=2a\cosh(t)$, yielding
\begin{eqnarray*}
  \int_0^\infty \sinh(t) K_0\Big(2a \cosh(t)\Big)dt=\frac{1}{2a}\int_{2a}^\infty K_0(y)dy  
\end{eqnarray*}
To compute the last integral, simply apply Proposition \ref{teobessel2} and Corollary \ref{lemaproduto}.
\end{proof}
\begin{proposition}
\label{teoerro}
Let $erf$ denote the error function given by $erf(z)=\frac{2}{\sqrt{\pi}}\int_0^z e^{-t^2}dt.$
Then
\begin{enumerate}[label=(\roman*)]
    \item $erf'(z)=\frac{2}{\sqrt{\pi}} e^{-z^2}$
    \item $lim_{z->+\infty}erf(z)=1$
    \item $\int e^{-a^2x^2-\frac{b^2}{x^2}}dx=\frac{\sqrt{\pi}}{4a}[e^{2ab}erf(ax+\frac{b}{x})+e^{-2ab}erf(ax-\frac{b}{x})]+C$ for $z>0$ and $|\text{ph}(a)|<\pi/4$  
\end{enumerate}
\end{proposition}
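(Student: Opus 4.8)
\textbf{Proof proposal for Proposition \ref{teoerro}.}

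The plan is to treat the three items in sequence, since the first two are essentially standard and the third is the one requiring real work. For item (i), I would simply differentiate the definition $erf(z)=\frac{2}{\sqrt{\pi}}\int_0^z e^{-t^2}\,dt$ using the fundamental theorem of calculus, which immediately yields $erf'(z)=\frac{2}{\sqrt{\pi}}e^{-z^2}$. For item (ii), I would invoke the Gaussian integral $\int_0^\infty e^{-t^2}\,dt=\frac{\sqrt{\pi}}{2}$ (provable by the classical polar-coordinates trick on $\left(\int_0^\infty e^{-t^2}dt\right)^2$), so that $\lim_{z\to+\infty}erf(z)=\frac{2}{\sqrt{\pi}}\cdot\frac{\sqrt{\pi}}{2}=1$.

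For item (iii), the approach is to verify the claimed antiderivative by differentiating the right-hand side and checking it equals $e^{-a^2x^2-b^2/x^2}$. Write $F(x)=\frac{\sqrt{\pi}}{4a}\bigl[e^{2ab}\,erf\bigl(ax+\tfrac{b}{x}\bigr)+e^{-2ab}\,erf\bigl(ax-\tfrac{b}{x}\bigr)\bigr]$. Using item (i) and the chain rule, with $\frac{d}{dx}\bigl(ax\pm\tfrac{b}{x}\bigr)=a\mp\tfrac{b}{x^2}$, I get
\begin{equation*}
F'(x)=\frac{\sqrt{\pi}}{4a}\cdot\frac{2}{\sqrt{\pi}}\Bigl[e^{2ab}e^{-(ax+b/x)^2}\bigl(a-\tfrac{b}{x^2}\bigr)+e^{-2ab}e^{-(ax-b/x)^2}\bigl(a+\tfrac{b}{x^2}\bigr)\Bigr].
\end{equation*}
The key algebraic observation is that $e^{2ab}e^{-(ax+b/x)^2}=e^{-a^2x^2-b^2/x^2}$ and likewise $e^{-2ab}e^{-(ax-b/x)^2}=e^{-a^2x^2-b^2/x^2}$, because $(ax\pm b/x)^2=a^2x^2\pm 2ab+b^2/x^2$. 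Factoring out this common exponential, the bracket collapses to $\frac{1}{2a}e^{-a^2x^2-b^2/x^2}\bigl[(a-\tfrac{b}{x^2})+(a+\tfrac{b}{x^2})\bigr]=\frac{1}{2a}e^{-a^2x^2-b^2/x^2}\cdot 2a=e^{-a^2x^2-b^2/x^2}$, which is exactly the integrand, establishing the identity up to the constant $C$.

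The main obstacle is minor: it is really just bookkeeping to make sure the two exponential cancellations are carried out correctly and that the cross terms $\pm 2ab$ match the prefactors $e^{\pm 2ab}$ with the right signs. There is also a small point about the domain restriction $|\text{ph}(a)|<\pi/4$ (equivalently, ensuring the Gaussian-type integral converges and the $erf$ of a complex argument is the intended analytic continuation); I would note that for real positive $a$ this is automatic, and for complex $a$ the identity extends by analytic continuation since both sides are holomorphic in $a$ on the stated sector. No deep machinery is needed here — the proposition is a verification lemma assembled from the definition of $erf$ and elementary calculus, included to support the error-function manipulations elsewhere in the paper.
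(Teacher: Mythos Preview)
Your proof is correct and in fact more complete than what the paper offers: the paper does not prove this proposition at all but simply cites a standard reference (\citet[Cap.~7]{TheHanbookMath}). Your direct verification of (iii) by differentiating the proposed antiderivative and using the algebraic identity $(ax\pm b/x)^2=a^2x^2\pm 2ab+b^2/x^2$ is the natural self-contained argument, and your remarks on (i), (ii), and the domain condition are all sound.
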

\begin{proof}
See \citet[Cap. 7]{TheHanbookMath}.
\end{proof}

\subsection*{\textbf{Appendix B:} Result of the simulation study}
\label{subsec: Ap1}
In this appendix, we show Tables and Figures of the simulation study in the section \ref{subsec:simu}.
\begin{figure}[h!]
\centering
    \includegraphics[width = 6.5cm]{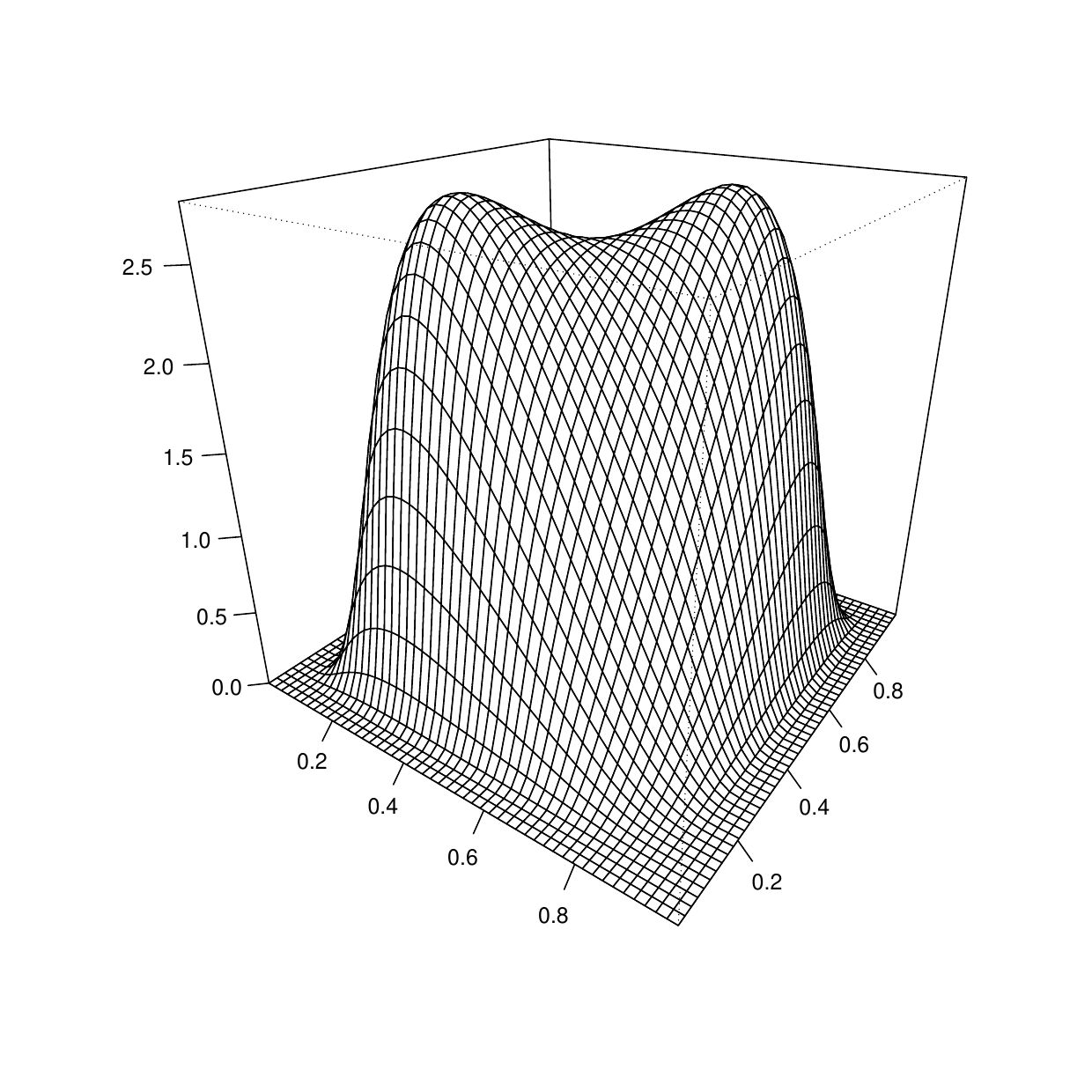}~
    \includegraphics[width = 6.5cm]{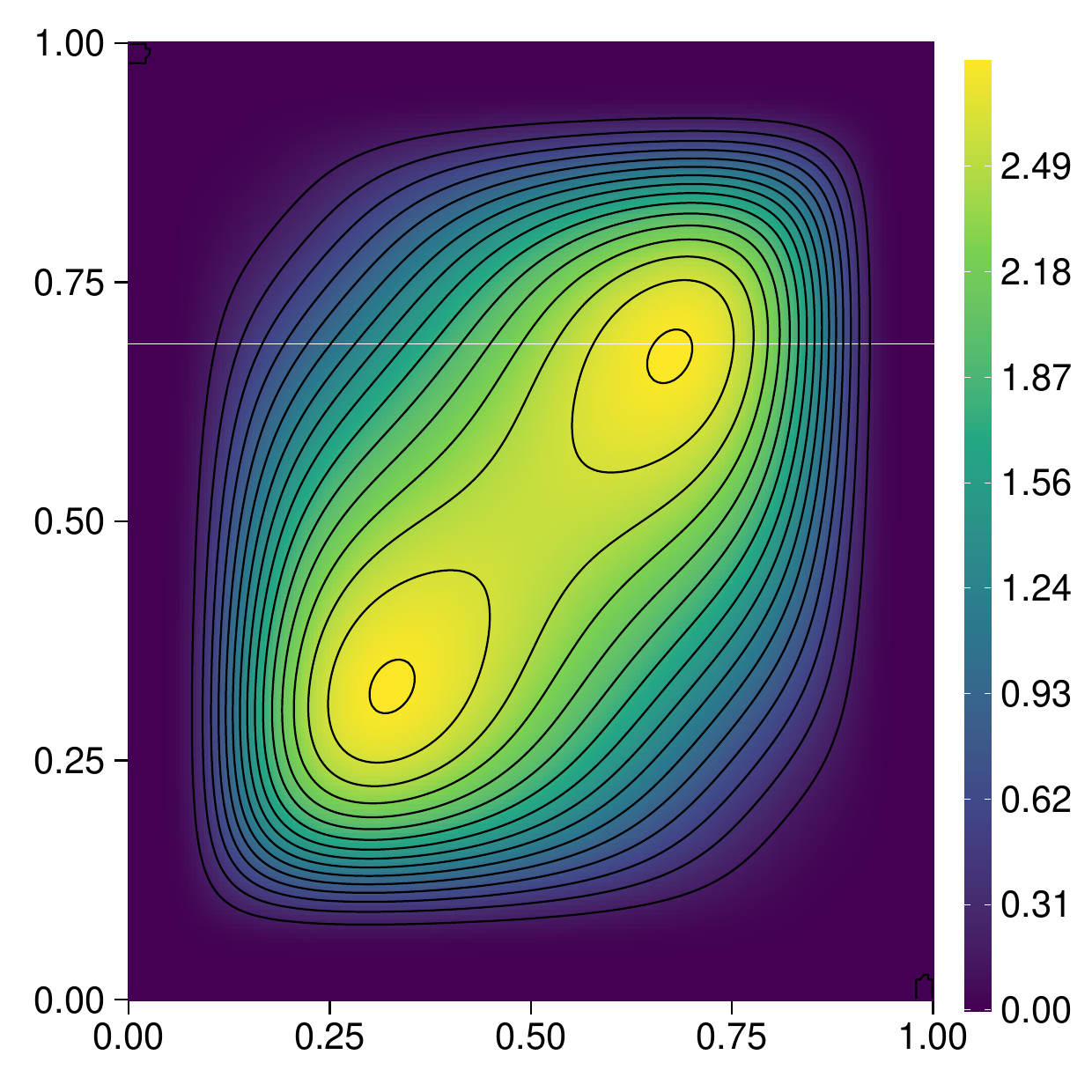}\\
    \includegraphics[width = 6.5cm]{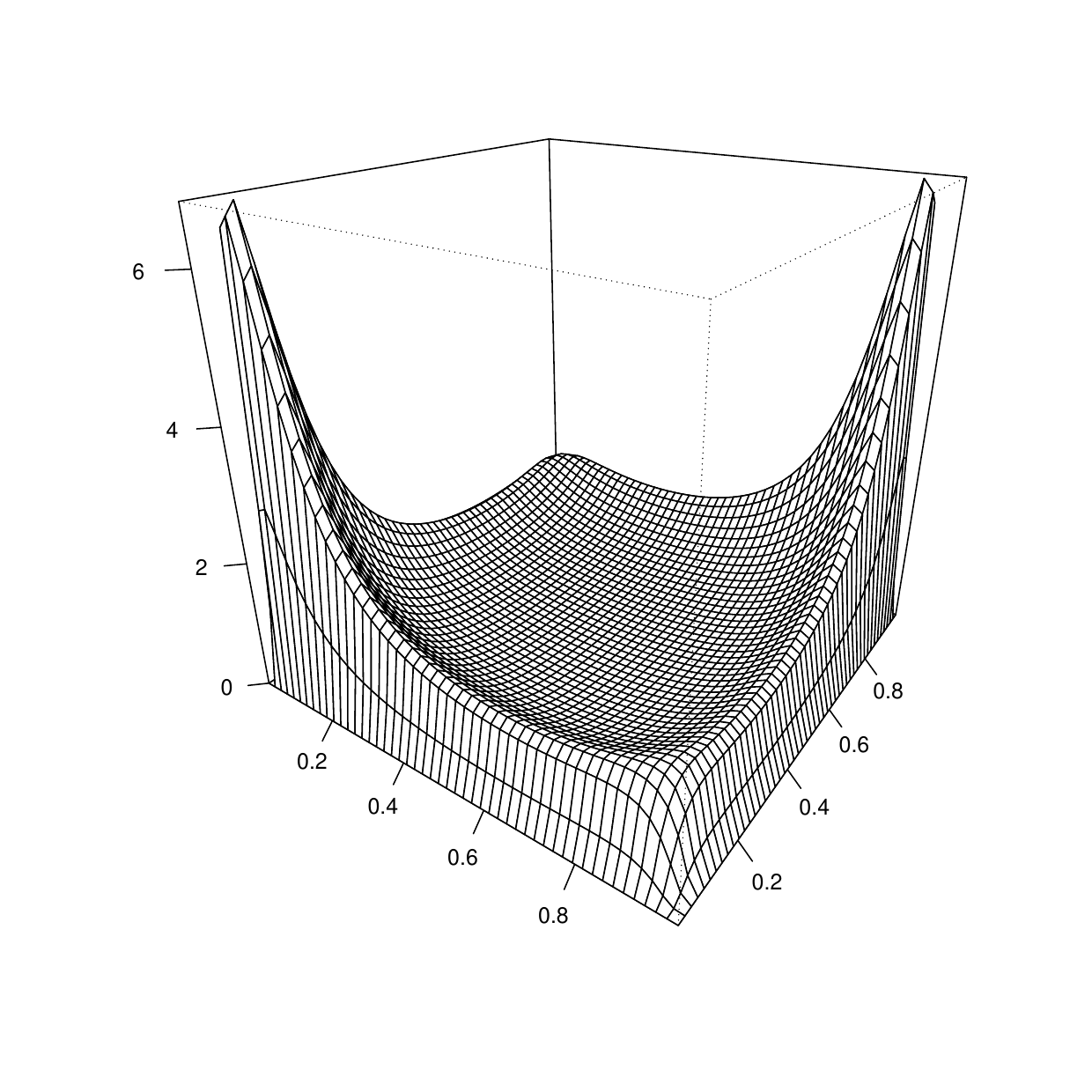}~
    \includegraphics[width = 6.5cm]{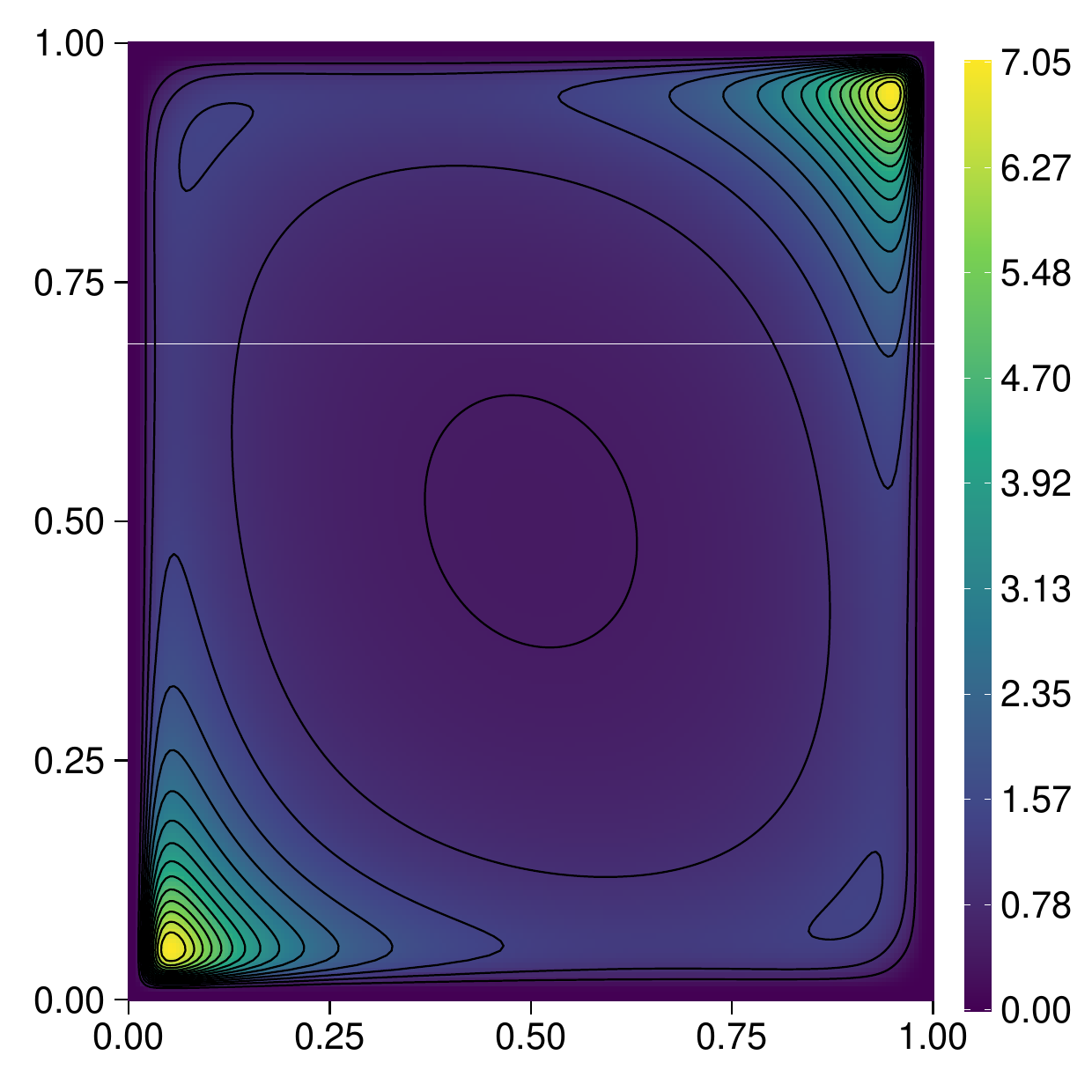}\\
    \includegraphics[width = 6.5cm]{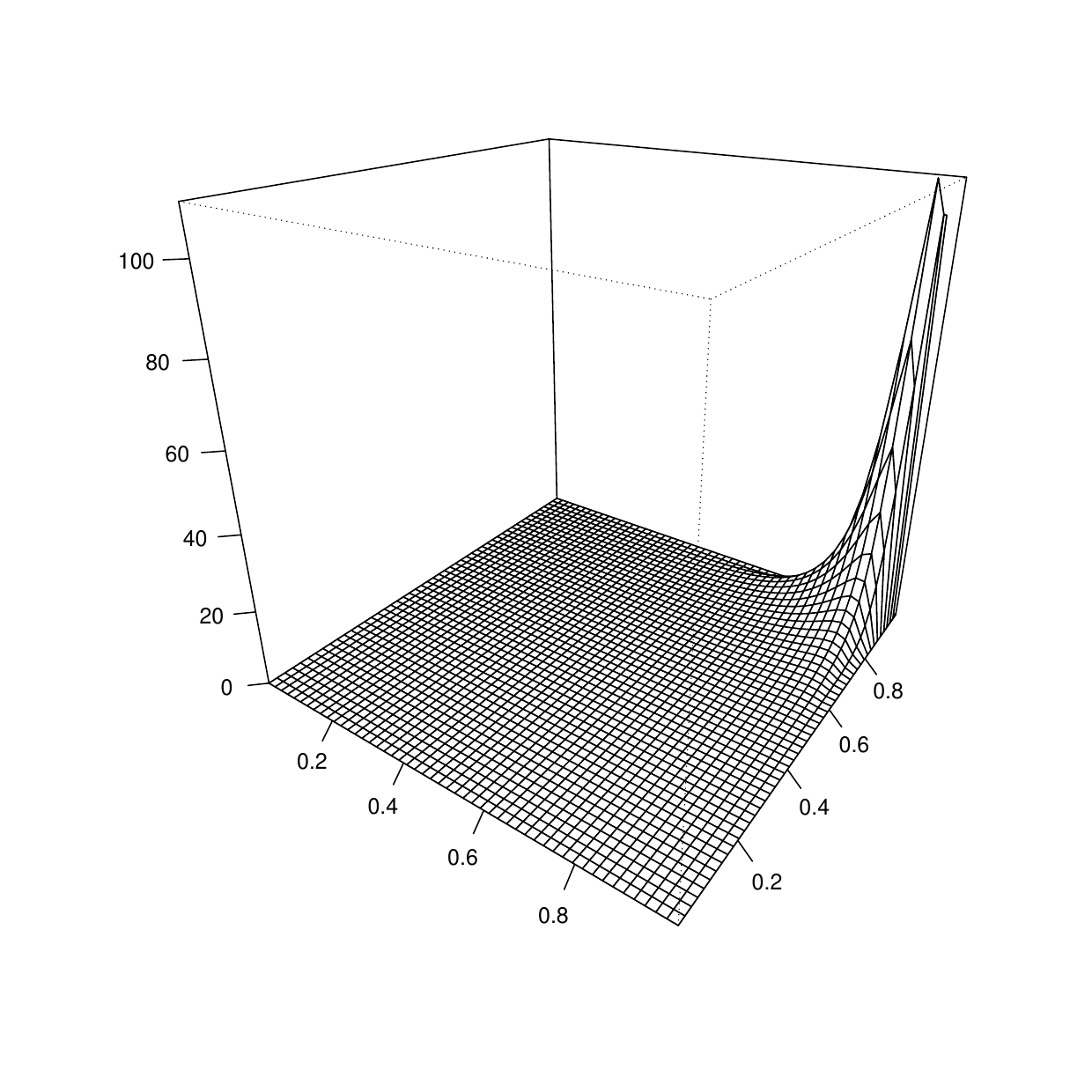}~
    \includegraphics[width = 6.5cm]{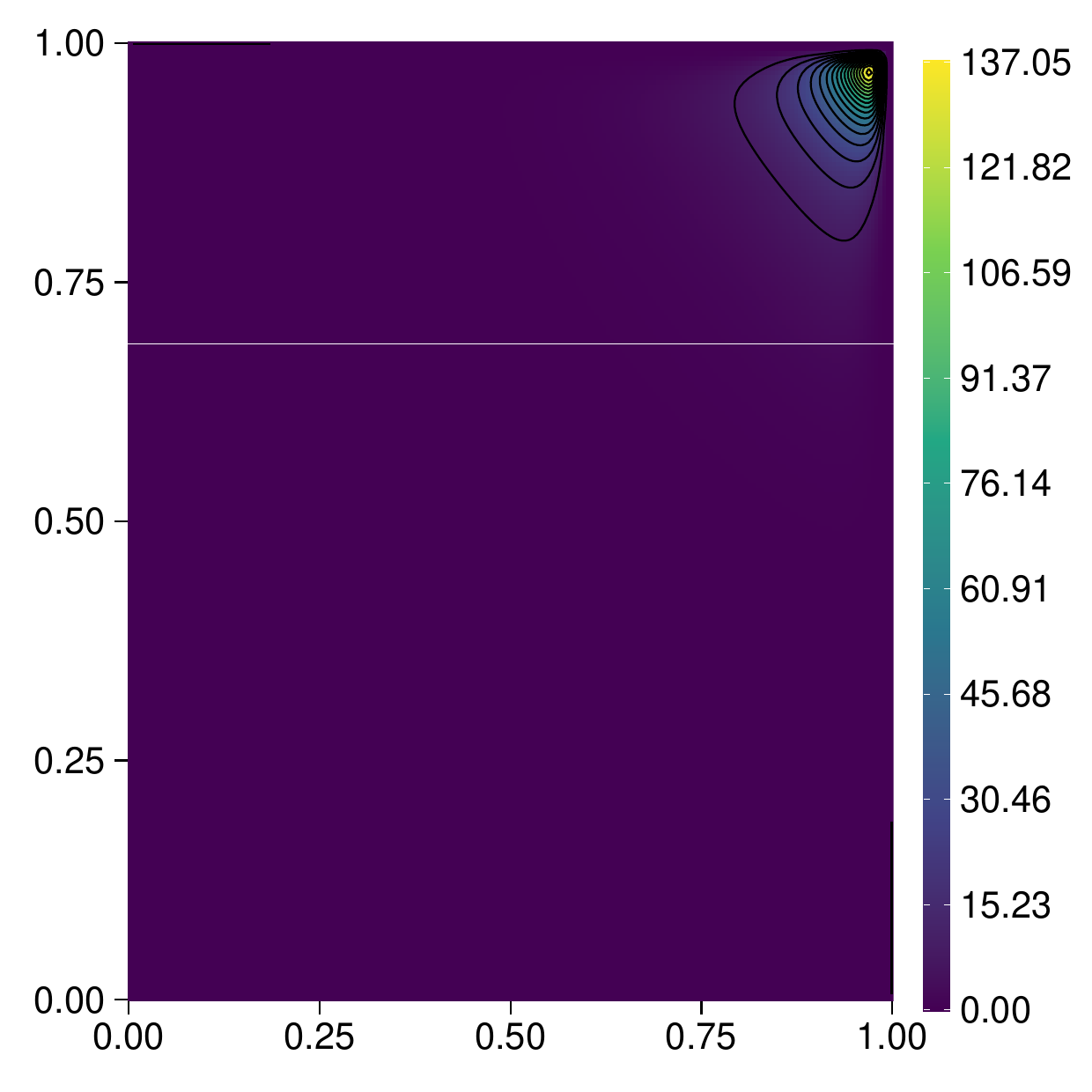}
    \caption{Surface and contour graphs for
    $\boldsymbol{\theta_1}$ = (0.5, 0.5, 2, 2, 1), 
    $\boldsymbol{\theta_2}$ = (0.5, 0.5, 5, 5, 1) and 
    $\boldsymbol{\theta_3}$ = (0.9, 0.9, $\sqrt{11}$, $\sqrt{11}$, 1), scenario 1.}
    \label{fig:superficie-contono1}
\end{figure}
\begin{table}[h!]
\centering{
    \caption{Mean, Bias, RMSE and Coverage of $95\%$ confidence intervals ($\%$) for $\bm{\theta} = (0.5, 0.5, 2, 2, 1)^{\top}$, Scenario 1.} 
\vspace{0.5cm}
{\small
\begin{tabular}{clRRRRR}
\hline
$n$  &   Medida   &  \mu_1  & \mu_2 & \sigma_{1}^2 & \sigma_{2}^2 & \lambda \\
\hline
\multicolumn{1}{c}{\multirow{4}{*}{50}}     
    &   Mean   &   0.501   &   0.501   &   1.954   &   1.950   &   0.850   \\ 
    &   Bias    &   0.001   &   0.001   &  -0.046   &  -0.050   &  -0.150   \\ 
    &   RMSE    &   0.027   &   0.027   &   0.202   &   0.208   &   0.284   \\ 
    &   Coverage      &   93.6    &   93.4    &   92.0    &   91.2    &   98.0    \\ 
\hline
\multicolumn{1}{c}{\multirow{4}{*}{100}}     
    &   Mean   &   0.500   &   0.500   &   1.970   &   1.975   &   0.902   \\ 
    &   Bias    &   0.000   &  -0.000   &  -0.030   &  -0.025   &  -0.098   \\ 
    &   RMSE    &   0.019   &   0.019   &   0.138   &   0.143   &   0.185   \\ 
    &   Coverage      &   94.9    &   94.8    &   94.3    &   93.2    &   98.1    \\ 
\hline
\multicolumn{1}{c}{\multirow{4}{*}{150}}     
    &   Mean   &   0.499   &   0.500   &   1.982   &   1.980   &   0.925   \\ 
    &   Bias    &  -0.001   &  -0.000   &  -0.018   &  -0.020   &  -0.075   \\ 
    &   RMSE    &   0.015   &   0.015   &   0.116   &   0.114   &   0.142   \\ 
    &   Coverage      &   94.8    &   94.5    &   93.7    &   94.7    &   98.5    \\ 
\hline
\multicolumn{1}{c}{\multirow{4}{*}{200}}     
    &   Mean   &   0.500   &   0.500   &   1.990   &   1.986   &   0.929   \\ 
    &   Bias    &   0.000   &   0.000   &  -0.010   &  -0.014   &  -0.071   \\ 
    &   RMSE    &   0.013   &   0.013   &   0.093   &   0.100   &   0.133   \\ 
    &   Coverage      &   95.9    &   93.9    &   95.3    &   94.0    &   97.6    \\ 
\hline
\multicolumn{1}{c}{\multirow{4}{*}{1000}}     
    &   Mean   &   0.501   &   0.500   &   1.995   &   1.996   &   0.971   \\ 
    &   Bias    &   0.001   &   0.000   &  -0.005   &  -0.004   &  -0.029   \\ 
    &   RMSE    &   0.006   &   0.006   &   0.046   &   0.044   &   0.053   \\ 
    &   Coverage      &   95.2    &   95.1    &   93.3    &   94.5    &   98.3    \\ 
\hline
\end{tabular}
}
\label{tab:scenario-results-1.1}}
\end{table}
\begin{table}[h!]
\centering{
    \caption{Mean, Bias, RMSE and Coverage of $95\%$ confidence intervals ($\%$) for $\bm{\theta} = (0.5, 0.5, 5, 5, 1)^{\top}$, Scenario 1.} 
\vspace{0.5cm}
{\small
\begin{tabular}{clRRRRR}
\hline
$n$  &   Medida   &  \mu_1  & \mu_2 & \sigma_{1}^2 & \sigma_{2}^2 & \lambda \\
\hline
\multicolumn{1}{c}{\multirow{4}{*}{50}}     
    &   Mean   &   0.502   &   0.500   &   4.884   &   4.926   &   0.864   \\ 
    &   Bias    &   0.002   &  -0.000   &  -0.116   &  -0.074   &  -0.136   \\ 
    &   RMSE    &   0.037   &   0.036   &   0.530   &   0.504   &   0.261   \\ 
    &   Coverage      &   93.8    &   95.5    &   85.9    &   88.1    &   98.5    \\ 
\hline
\multicolumn{1}{c}{\multirow{4}{*}{100}}     
    &   Mean   &   0.500   &   0.500   &   4.950   &   4.970   &   0.907   \\ 
    &   Bias    &   0.000   &   0.000   &  -0.050   &  -0.030   &  -0.093   \\ 
    &   RMSE    &   0.026   &   0.025   &   0.353   &   0.351   &   0.173   \\ 
    &   Coverage      &   94.2    &   95.5    &   86.8    &   87.2    &   99.0    \\ 
\hline
\multicolumn{1}{c}{\multirow{4}{*}{150}}     
    &   Mean   &   0.499   &   0.499   &   4.959   &   4.970   &   0.924   \\ 
    &   Bias    &  -0.001   &  -0.001   &  -0.041   &  -0.030   &  -0.076   \\ 
    &   RMSE    &   0.021   &   0.021   &   0.286   &   0.292   &   0.139   \\ 
    &   Coverage      &   94.6    &   95.1    &   87.7    &   84.8    &   98.6    \\ 
\hline
\multicolumn{1}{c}{\multirow{4}{*}{200}}     
    &   Mean   &   0.500   &   0.499   &   4.971   &   4.976   &   0.932   \\ 
    &   Bias    &  -0.000   &  -0.001   &  -0.029   &  -0.024   &  -0.068   \\ 
    &   RMSE    &   0.019   &   0.019   &   0.254   &   0.246   &   0.125   \\ 
    &   Coverage      &   94.1    &   94.4    &   85.4    &   85.9    &   98.3    \\ 
\hline
\multicolumn{1}{c}{\multirow{4}{*}{1000}}     
    &   Mean   &   0.500   &   0.500   &   4.990   &   4.995   &   0.971   \\ 
    &   Bias    &  -0.000   &  -0.000   &  -0.010   &  -0.005   &  -0.029   \\ 
    &   RMSE    &   0.008   &   0.008   &   0.113   &   0.110   &   0.052   \\ 
    &   Coverage      &   95.6    &   95.2    &   86.7    &   88.6    &   98.8    \\ 
\hline
\end{tabular}
}
\label{tab:scenario-results-1.2}}
\end{table}
\begin{table}[h!]
\centering{
    \caption{Mean, Bias, RMSE and Coverage of $95\%$ confidence intervals ($\%$) for $\bm{\theta} = (0.9, 0.9, \sqrt(11), \sqrt(11), 1)^{\top}$, Scenario 1.} 
\vspace{0.5cm}
\begin{small}
\begin{tabular}{clRRRRR}
\hline
$n$  &   Medida   &  \mu_1  & \mu_2 & \sigma_{1}^2 & \sigma_{2}^2 & \lambda \\
\hline
\multicolumn{1}{c}{\multirow{4}{*}{50}}     
    &   Mean   &   0.901   &   0.900   &   3.248   &   3.247   &   0.858   \\ 
    &   Bias    &   0.001   &   0.000   &  -0.069   &  -0.070   &  -0.142   \\ 
    &   RMSE    &   0.011   &   0.011   &   0.335   &   0.342   &   0.274   \\ 
    &   Coverage      &   93.4    &   93.0    &   93.5    &   91.7    &   98.3    \\
\hline
\multicolumn{1}{c}{\multirow{4}{*}{100}}     
    &   Mean   &   0.901   &   0.900   &   3.278   &   3.283   &   0.896   \\ 
    &   Bias    &   0.001   &   0.000   &  -0.038   &  -0.034   &  -0.104   \\ 
    &   RMSE    &   0.008   &   0.008   &   0.237   &   0.235   &   0.189   \\ 
    &   Coverage      &   93.7    &   95.1    &   92.1    &   94.4    &   98.5    \\ 
\hline
\multicolumn{1}{c}{\multirow{4}{*}{150}}     
    &   Mean   &   0.900   &   0.900   &   3.291   &   3.290   &   0.925   \\ 
    &   Bias    &   0.000   &   0.000   &  -0.026   &  -0.027   &  -0.075   \\ 
    &   RMSE    &   0.006   &   0.006   &   0.192   &   0.185   &   0.141   \\ 
    &   Coverage      &   94.6    &   93.9    &   93.7    &   93.8    &   98.2    \\ 
\hline
\multicolumn{1}{c}{\multirow{4}{*}{200}}     
    &   Mean   &   0.900   &   0.900   &   3.292   &   3.290   &   0.937   \\ 
    &   Bias    &   0.000   &  -0.000   &  -0.025   &  -0.027   &  -0.063   \\ 
    &   RMSE    &   0.005   &   0.006   &   0.168   &   0.161   &   0.121   \\ 
    &   Coverage      &   94.5    &   95.4    &   92.8    &   95.2    &   98.4    \\ 
\hline
\multicolumn{1}{c}{\multirow{4}{*}{1000}}     
    &   Mean   &   0.900   &   0.900   &   3.310   &   3.306   &   0.972   \\ 
    &   Bias    &   0.000   &   0.000   &  -0.007   &  -0.010   &  -0.028   \\ 
    &   RMSE    &   0.002   &   0.002   &   0.072   &   0.072   &   0.051   \\ 
    &   Coverage      &   95.4    &   96.2    &   95.2    &   94.6    &   98.5    \\ 
\hline
\end{tabular}
\end{small}
\label{tab:scenario-results-1.3}}
\end{table}

\begin{figure}[h!]
\centering
        \includegraphics[width = 6.5cm]{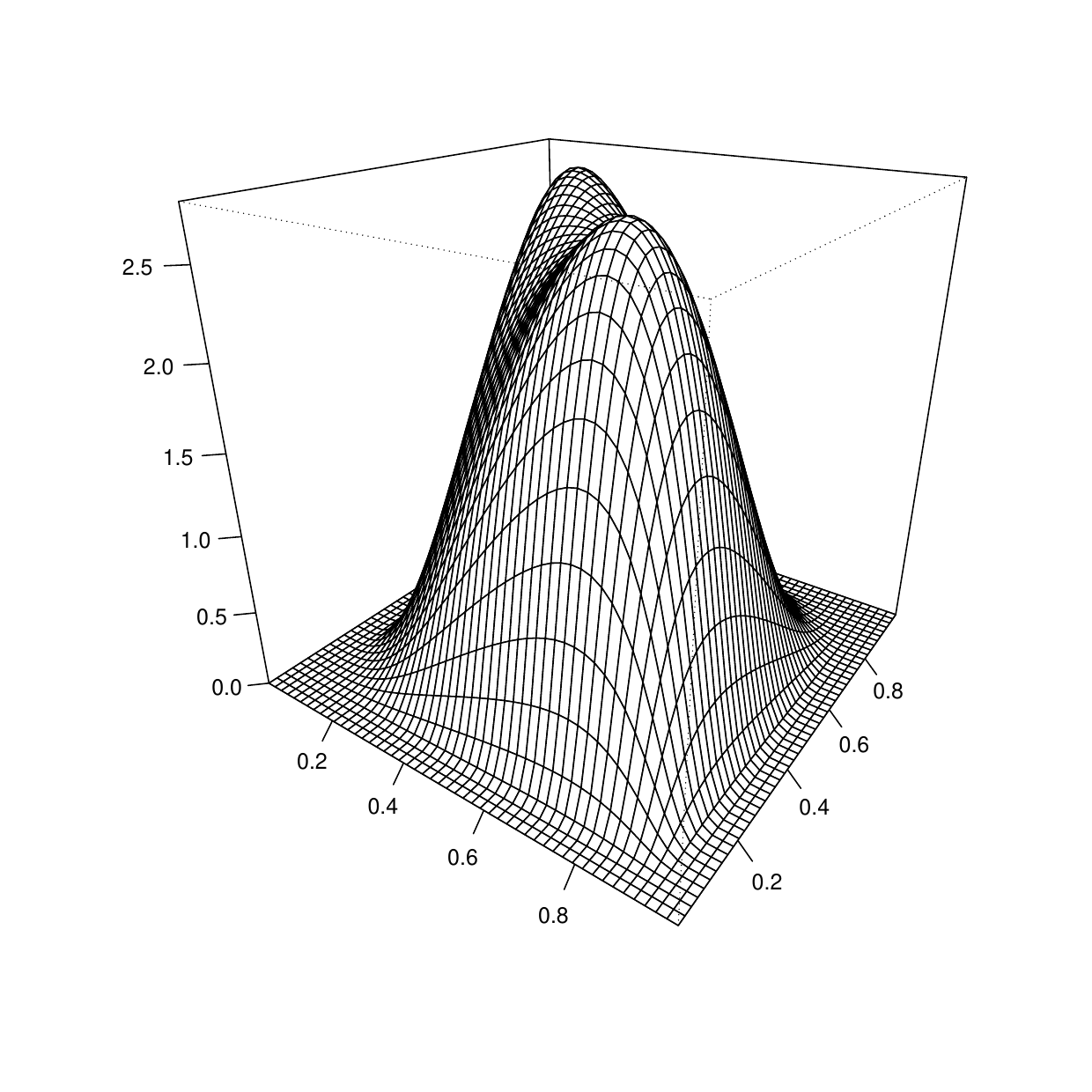}~
        \includegraphics[width = 6.5cm]{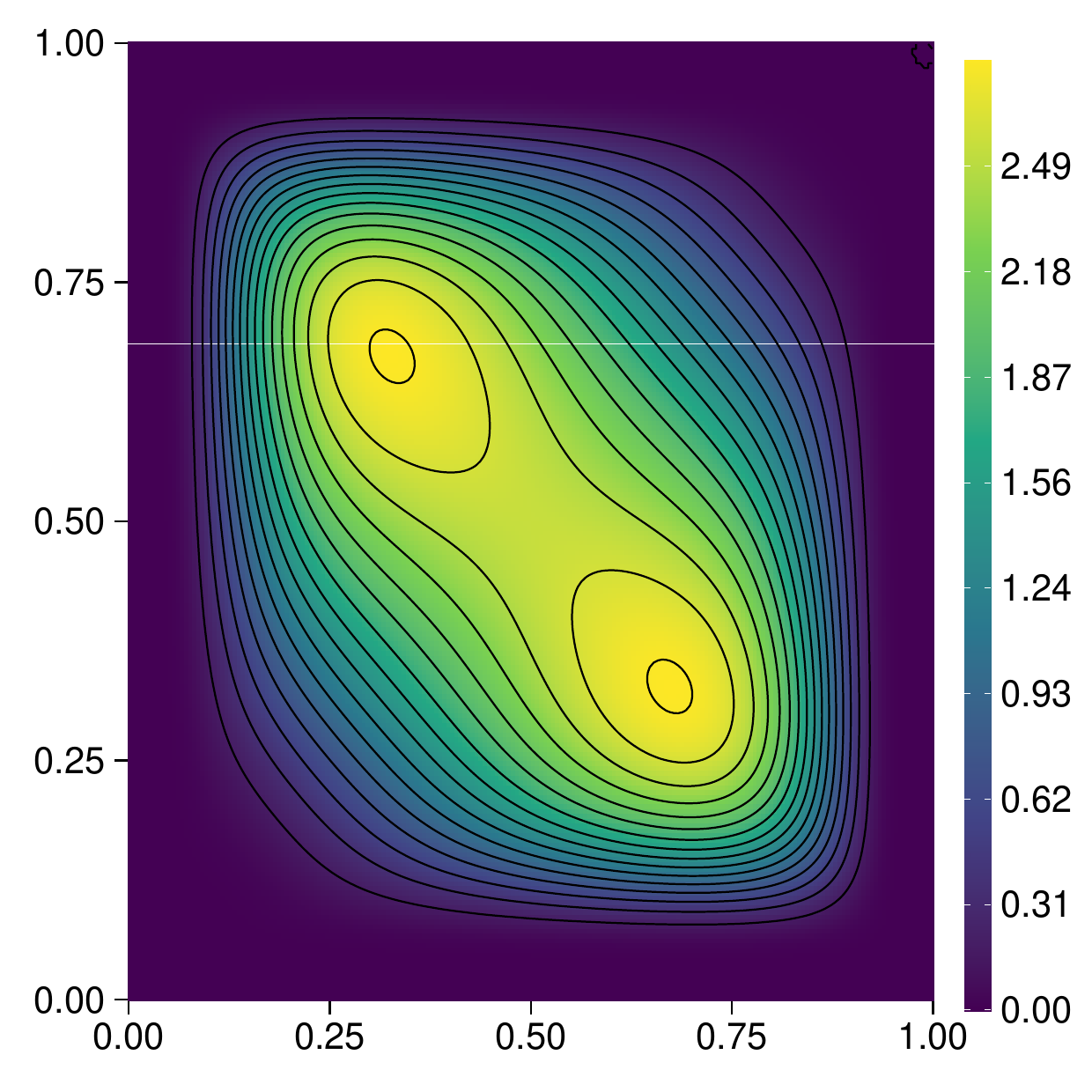}\\
        \includegraphics[width = 6.5cm]{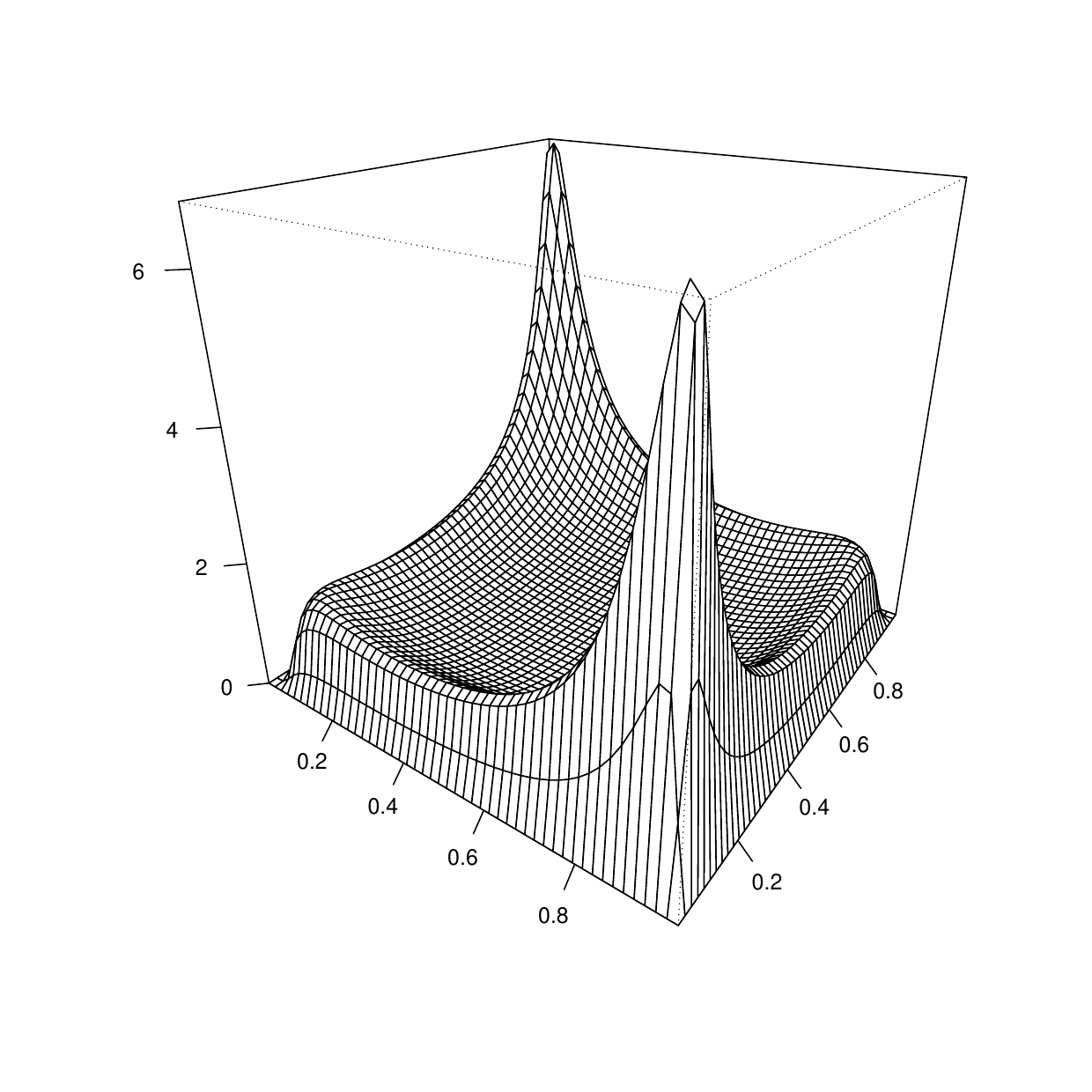}~
        \includegraphics[width = 6.5cm]{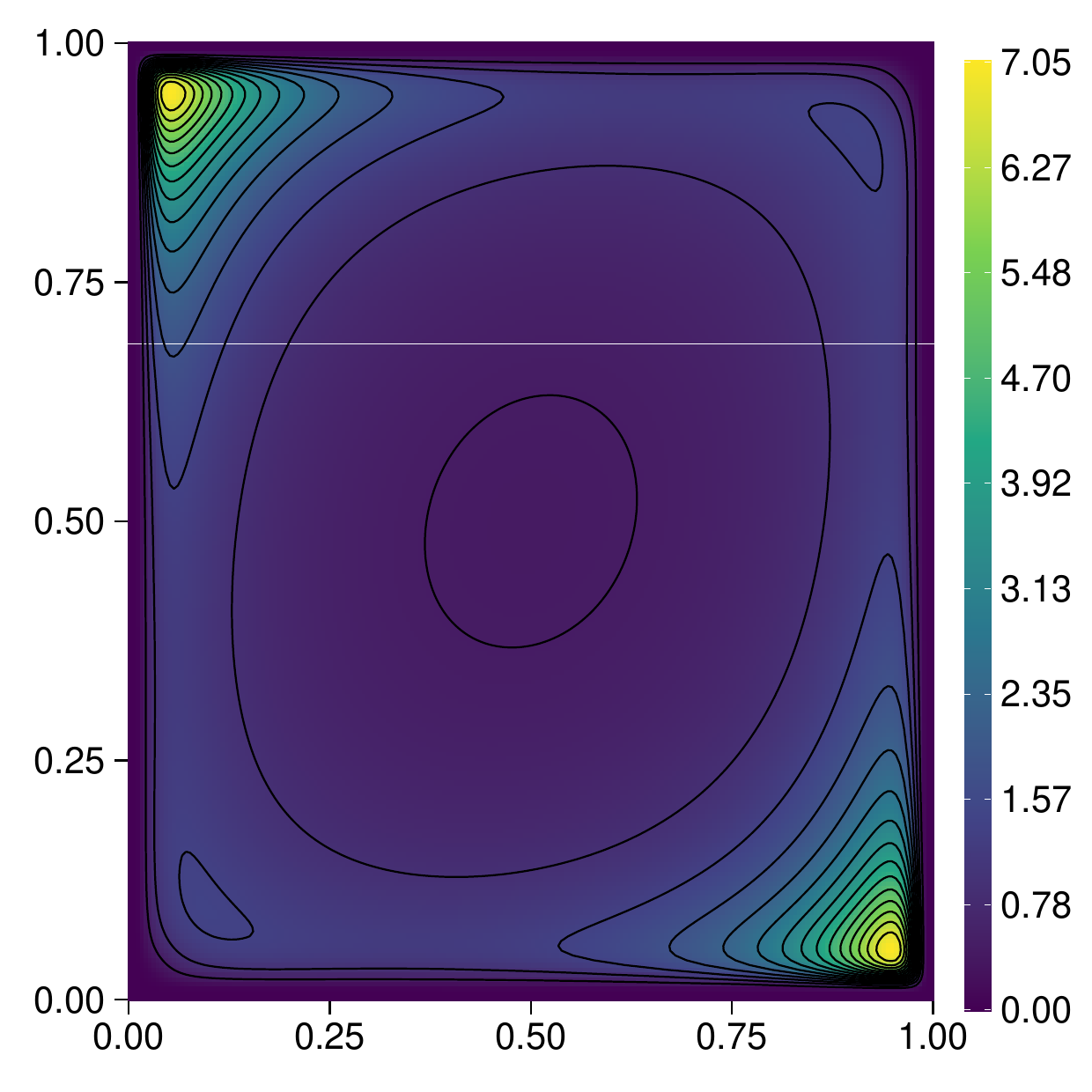}\\
        \includegraphics[width = 6.5cm]{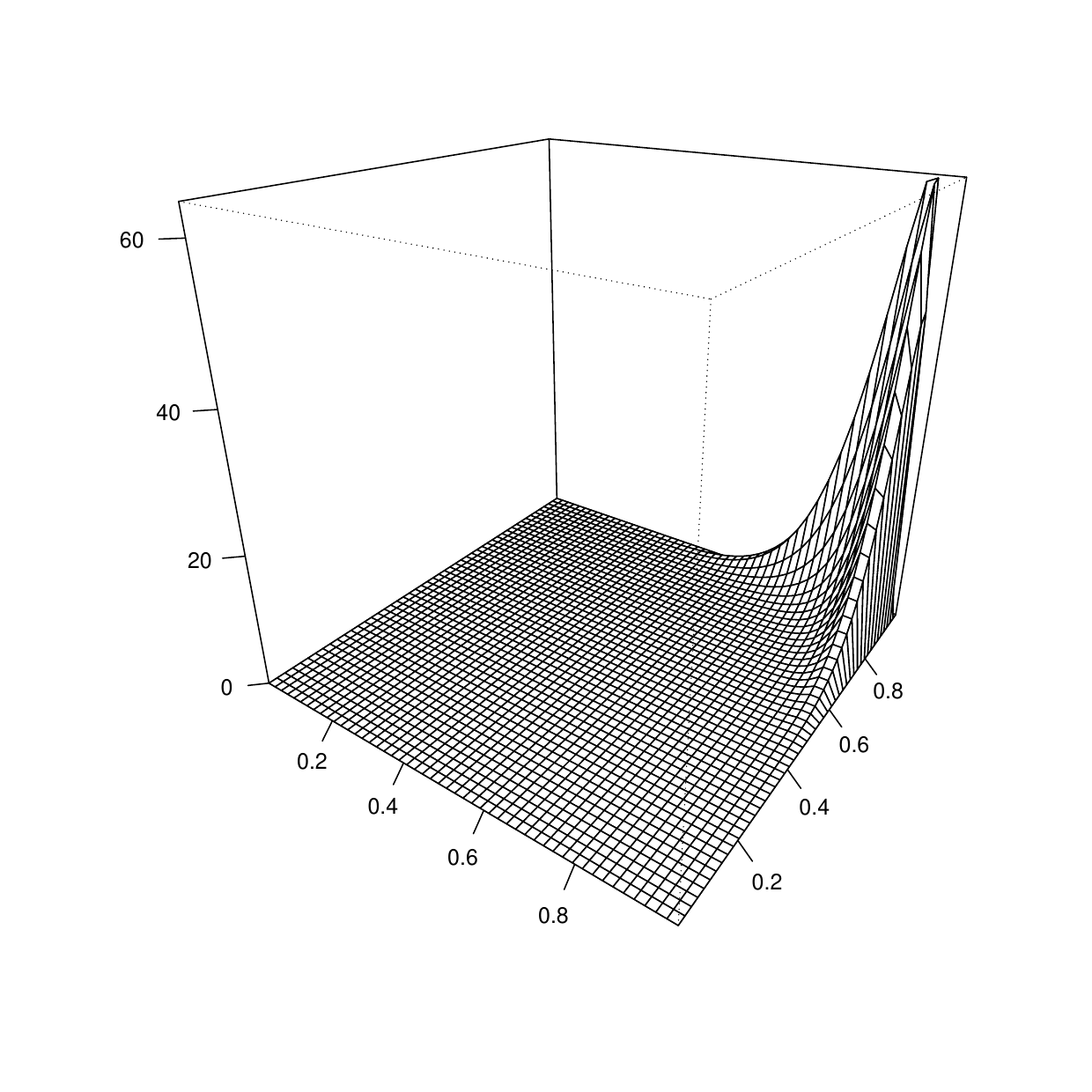}~
        \includegraphics[width = 6.5cm]{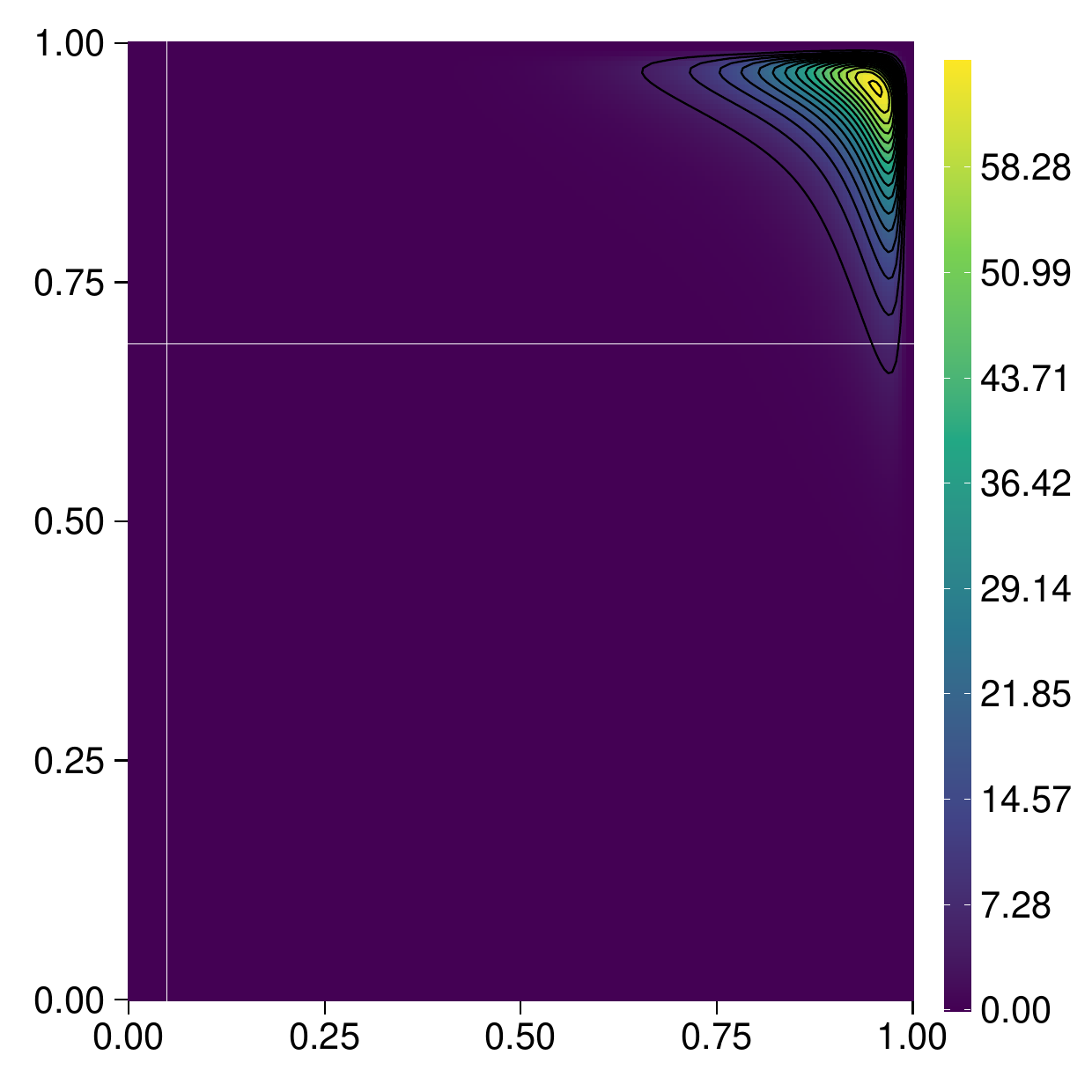}
    \caption{Surface and contour graphs for
    $\boldsymbol{\theta_1}$ = (0.5, 0.5, 2, 2, -1), 
    $\boldsymbol{\theta_2}$ = (0.5, 0.5, 5, 5, -1) and 
    $\boldsymbol{\theta_3}$ = (0.9, 0.9, $\sqrt{11}$, $\sqrt{11}$, -1), scenario 2.}
    \label{fig:superficie-e-contono-2}
\end{figure}
\begin{table}[h!]
\centering{
    \caption{Mean, Bias, RMSE and Coverage of $95\%$ confidence intervals ($\%$) for $\bm{\theta} = (0.5, 0.5, 2, 2,-1)^{\top}$, scenario 2.} 
\vspace{0.5cm}
\begin{small}
\begin{tabular}{clRRRRR}
\hline
$n$  &   Medida   &  \mu_1  & \mu_2 & \sigma_{1}^2 & \sigma_{2}^2 & \lambda \\
\hline
\multicolumn{1}{c}{\multirow{4}{*}{50}}     
    &   Mean   &   0.500   &   0.501   &   1.956   &   1.964   &  -0.855   \\ 
    &   Bias    &   0.000   &   0.001   &  -0.044   &  -0.036   &   0.145   \\ 
    &   RMSE    &   0.027   &   0.026   &   0.193   &   0.198   &   0.272   \\ 
    &   Coverage      &   94.6    &   94.5    &   92.8    &   93.0    &   98.4    \\ 
\hline
\multicolumn{1}{c}{\multirow{4}{*}{100}}     
    &   Mean   &   0.500   &   0.499   &   1.985   &   1.973   &  -0.910   \\ 
    &   Bias    &  -0.000   &  -0.001   &  -0.015   &  -0.027   &   0.090   \\ 
    &   RMSE    &   0.019   &   0.018   &   0.138   &   0.141   &   0.171   \\ 
    &   Coverage      &   94.2    &   94.6    &   95.1    &   93.6    &   98.6    \\ 
\hline
\multicolumn{1}{c}{\multirow{4}{*}{150}}     
    &   Mean   &   0.500   &   0.500   &   1.984   &   1.987   &  -0.924   \\ 
    &   Bias    &  -0.000   &  -0.000   &  -0.016   &  -0.013   &   0.076   \\ 
    &   RMSE    &   0.016   &   0.015   &   0.115   &   0.115   &   0.137   \\ 
    &   Coverage      &   95.1    &   94.9    &   93.9    &   93.4    &   98.8    \\ 
\hline
\multicolumn{1}{c}{\multirow{4}{*}{200}}     
    &   Mean   &   0.500   &   0.500   &   1.986   &   1.984   &  -0.940   \\ 
    &   Bias    &  -0.000   &   0.000   &  -0.014   &  -0.016   &   0.060   \\ 
    &   RMSE    &   0.012   &   0.013   &   0.102   &   0.099   &   0.112   \\ 
    &   Coverage      &   96.1    &   96.1    &   93.5    &   94.1    &   99.0    \\ 
\hline
\multicolumn{1}{c}{\multirow{4}{*}{1000}}     
    &   Mean   &   0.500   &   0.500   &   1.994   &   1.997   &  -0.975   \\ 
    &   Bias    &   0.000   &  -0.000   &  -0.006   &  -0.003   &   0.025   \\ 
    &   RMSE    &   0.006   &   0.006   &   0.046   &   0.046   &   0.047   \\ 
    &   Coverage      &   93.7    &   94.1    &   94.0    &   93.8    &   99.3    \\ 
\hline
\end{tabular}
\end{small}
\label{tab:scenario-results-2.1}}
\end{table}
\begin{table}[h!]
\centering{
    \caption{Mean, Bias, RMSE and Coverage of $95\%$ confidence intervals ($\%$) for $\bm{\theta} = (0.5, 0.5, 5, 5,-1)^{\top}$, scenario 2.} 
\vspace{0.5cm}
\begin{small}
\begin{tabular}{clRRRRR}
\hline
$n$  &   Medida   &  \mu_1  & \mu_2 & \sigma_{1}^2 & \sigma_{2}^2 & \lambda \\
\hline
\multicolumn{1}{c}{\multirow{4}{*}{50}}     
    &   Mean   &   0.500   &   0.499   &   4.870   &   4.888   &  -0.843   \\ 
    &   Bias    &   0.000   &  -0.001   &  -0.130   &  -0.112   &   0.157   \\ 
    &   RMSE    &   0.039   &   0.037   &   0.523   &   0.521   &   0.285   \\ 
    &   Coverage      &   93.3    &   94.1    &   86.8    &   86.0    &   97.6    \\ 
\hline
\multicolumn{1}{c}{\multirow{4}{*}{100}}     
    &   Mean   &   0.501   &   0.499   &   4.957   &   4.936   &  -0.903   \\ 
    &   Bias    &   0.001   &  -0.001   &  -0.043   &  -0.064   &   0.097   \\ 
    &   RMSE    &   0.027   &   0.025   &   0.343   &   0.355   &   0.186   \\ 
    &   Coverage      &   94.2    &   95.4    &   88.1    &   85.6    &   97.8    \\ 
\hline
\multicolumn{1}{c}{\multirow{4}{*}{150}}     
    &   Mean   &   0.499   &   0.500   &   4.968   &   4.944   &  -0.922   \\ 
    &   Bias    &  -0.001   &   0.000   &  -0.032   &  -0.056   &   0.078   \\ 
    &   RMSE    &   0.022   &   0.021   &   0.297   &   0.297   &   0.146   \\ 
    &   Coverage      &   93.6    &   94.8    &   85.4    &   83.8    &   98.4    \\ 
\hline
\multicolumn{1}{c}{\multirow{4}{*}{200}}     
    &   Mean   &   0.500   &   0.499   &   4.969   &   4.961   &  -0.935   \\ 
    &   Bias    &   0.000   &  -0.001   &  -0.031   &  -0.039   &   0.065   \\ 
    &   RMSE    &   0.019   &   0.018   &   0.252   &   0.255   &   0.118   \\ 
    &   Coverage      &   93.7    &   95.1    &   86.8    &   84.2    &   98.9    \\  
\hline
\multicolumn{1}{c}{\multirow{4}{*}{1000}}     
    &   Mean   &   0.500   &   0.500   &   4.990   &   4.982   &  -0.970   \\ 
    &   Bias    &   0.000   &   0.000   &  -0.010   &  -0.018   &   0.030   \\ 
    &   RMSE    &   0.008   &   0.008   &   0.111   &   0.112   &   0.053   \\ 
    &   Coverage      &   94.1    &   94.1    &   88.9    &   87.5    &   98.4    \\ 
\hline
\end{tabular}
\end{small}
\label{tab:scenario-results-2.2}}
\end{table}
\begin{table}[h!]
\centering{
    \caption{Mean, Bias, RMSE and Coverage of $95\%$ confidence intervals ($\%$) for $\bm{\theta} = (0.9, 0.9, \sqrt(11), \sqrt(11), -1)^{\top}$, scenario 2.} 
\vspace{0.5cm}
\begin{small}
\begin{tabular}{clRRRRR}
\hline
$n$  &   Medida   &  \mu_1  & \mu_2 & \sigma_{1}^2 & \sigma_{2}^2 & \lambda \\
\hline
\multicolumn{1}{c}{\multirow{4}{*}{50}}     
    &   Mean   &   0.900   &   0.901   &   3.238   &   3.252   &  -0.867   \\ 
    &   Bias    &   0.000   &   0.001   &  -0.079   &  -0.064   &   0.133   \\ 
    &   RMSE    &   0.011   &   0.011   &   0.344   &   0.336   &   0.251   \\ 
    &   Coverage      &   93.9    &   91.3    &   91.3    &   92.0    &   98.6    \\ 
\hline
\multicolumn{1}{c}{\multirow{4}{*}{100}}     
    &   Mean   &   0.900   &   0.900   &   3.269   &   3.275   &  -0.907   \\ 
    &   Bias    &   0.000   &   0.000   &  -0.047   &  -0.041   &   0.093   \\ 
    &   RMSE    &   0.008   &   0.008   &   0.234   &   0.238   &   0.173   \\ 
    &   Coverage      &   93.4    &   94.3    &   93.6    &   93.2    &   98.5    \\ 
\hline
\multicolumn{1}{c}{\multirow{4}{*}{150}}     
    &   Mean   &   0.900   &   0.900   &   3.298   &   3.278   &  -0.922   \\ 
    &   Bias    &   0.000   &   0.000   &  -0.019   &  -0.039   &   0.078   \\ 
    &   RMSE    &   0.006   &   0.006   &   0.188   &   0.196   &   0.143   \\ 
    &   Coverage      &   93.5    &   94.1    &   95.0    &   91.6    &   98.7    \\
\hline
\multicolumn{1}{c}{\multirow{4}{*}{200}}     
    &   Mean   &   0.900   &   0.900   &   3.303   &   3.304   &  -0.933   \\ 
    &   Bias    &   0.000   &   0.000   &  -0.014   &  -0.013   &   0.067   \\ 
    &   RMSE    &   0.006   &   0.005   &   0.164   &   0.163   &   0.121   \\ 
    &   Coverage      &   94.2    &   94.1    &   94.5    &   94.7    &   98.9    \\  
\hline
\multicolumn{1}{c}{\multirow{4}{*}{1000}}     
    &   Mean   &   0.900   &   0.900   &   3.309   &   3.310   &  -0.971   \\ 
    &   Bias    &   0.000   &   0.000   &  -0.008   &  -0.006   &   0.029   \\ 
    &   RMSE    &   0.003   &   0.002   &   0.074   &   0.074   &   0.053   \\ 
    &   Coverage      &   94.0    &   94.8    &   94.5    &   95.4    &   97.8    \\  
\hline
\end{tabular}
\end{small}
\label{tab:scenario-results-2.3}}
\end{table}

\begin{figure}[h!]
\centering
        \includegraphics[width = 6.5cm]{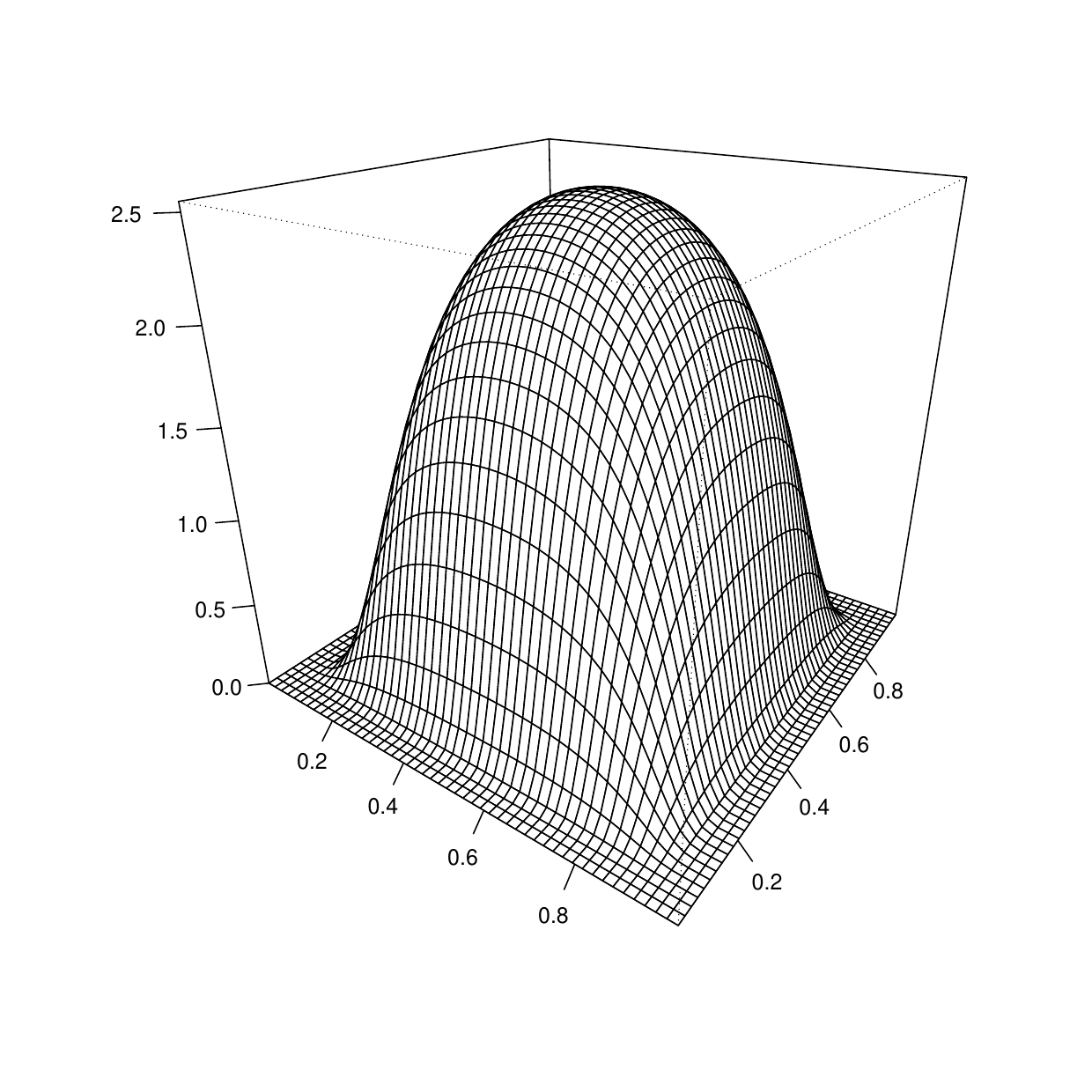}
        \includegraphics[width = 6.5cm]{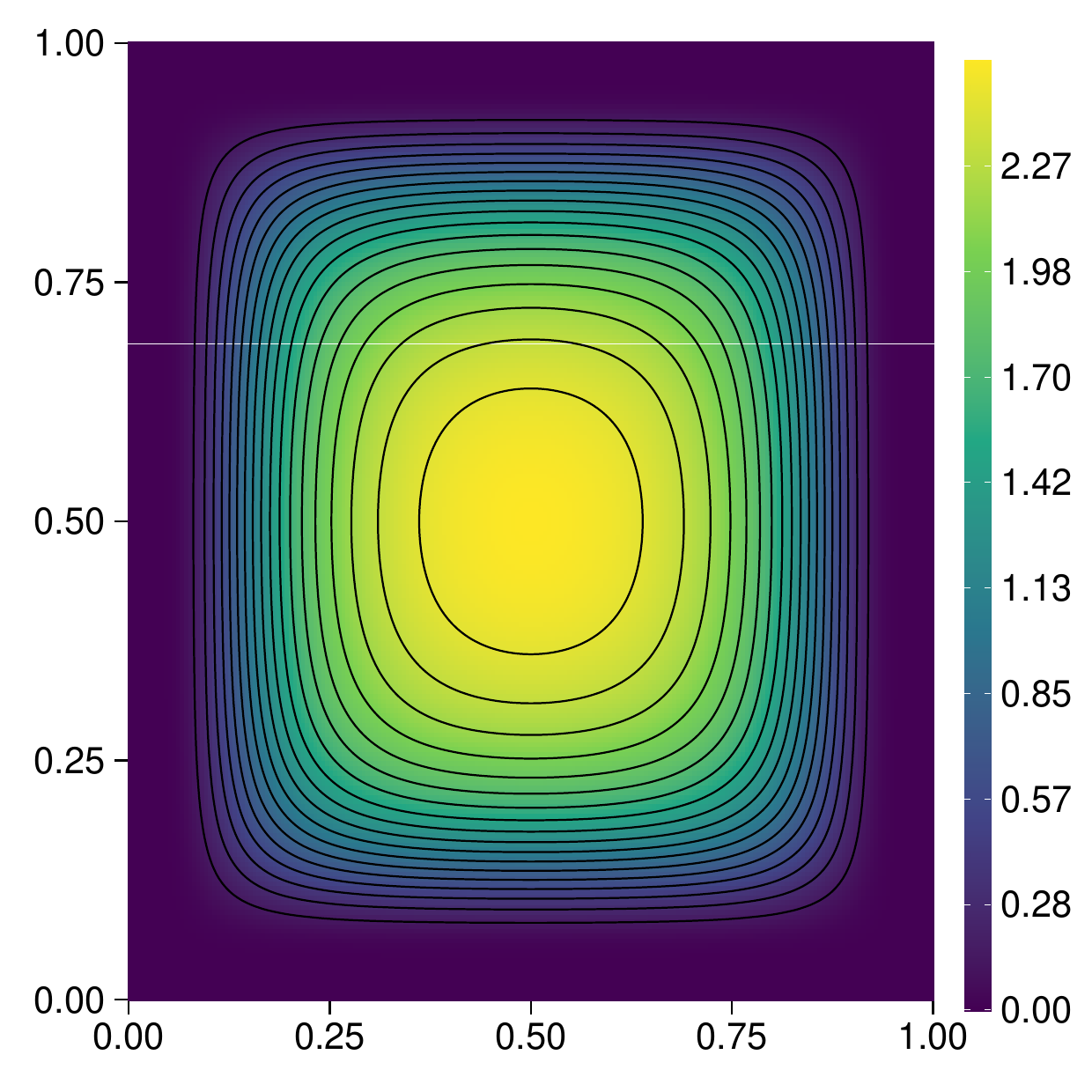}
        \includegraphics[width = 6.5cm]{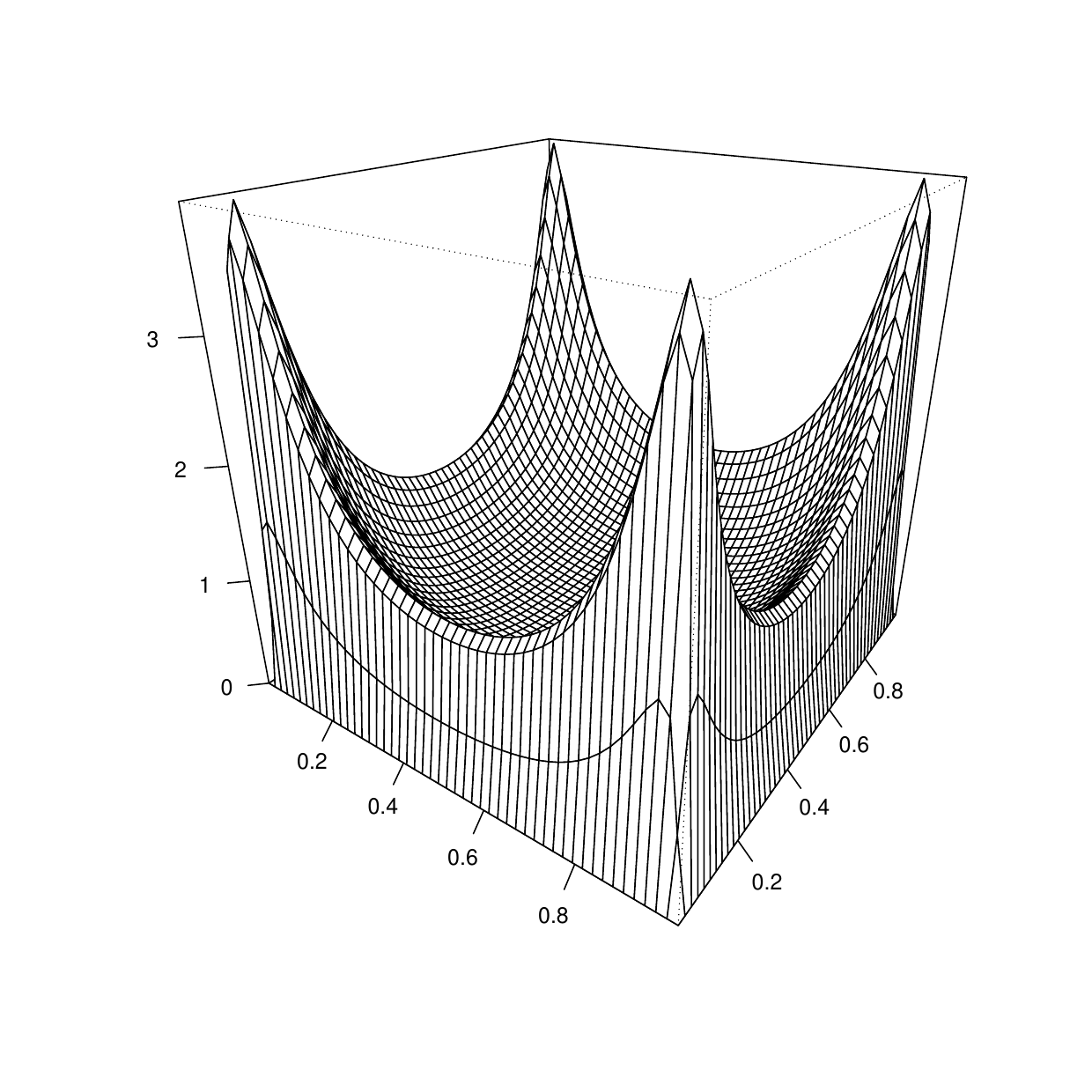}
        \includegraphics[width = 6.5cm]{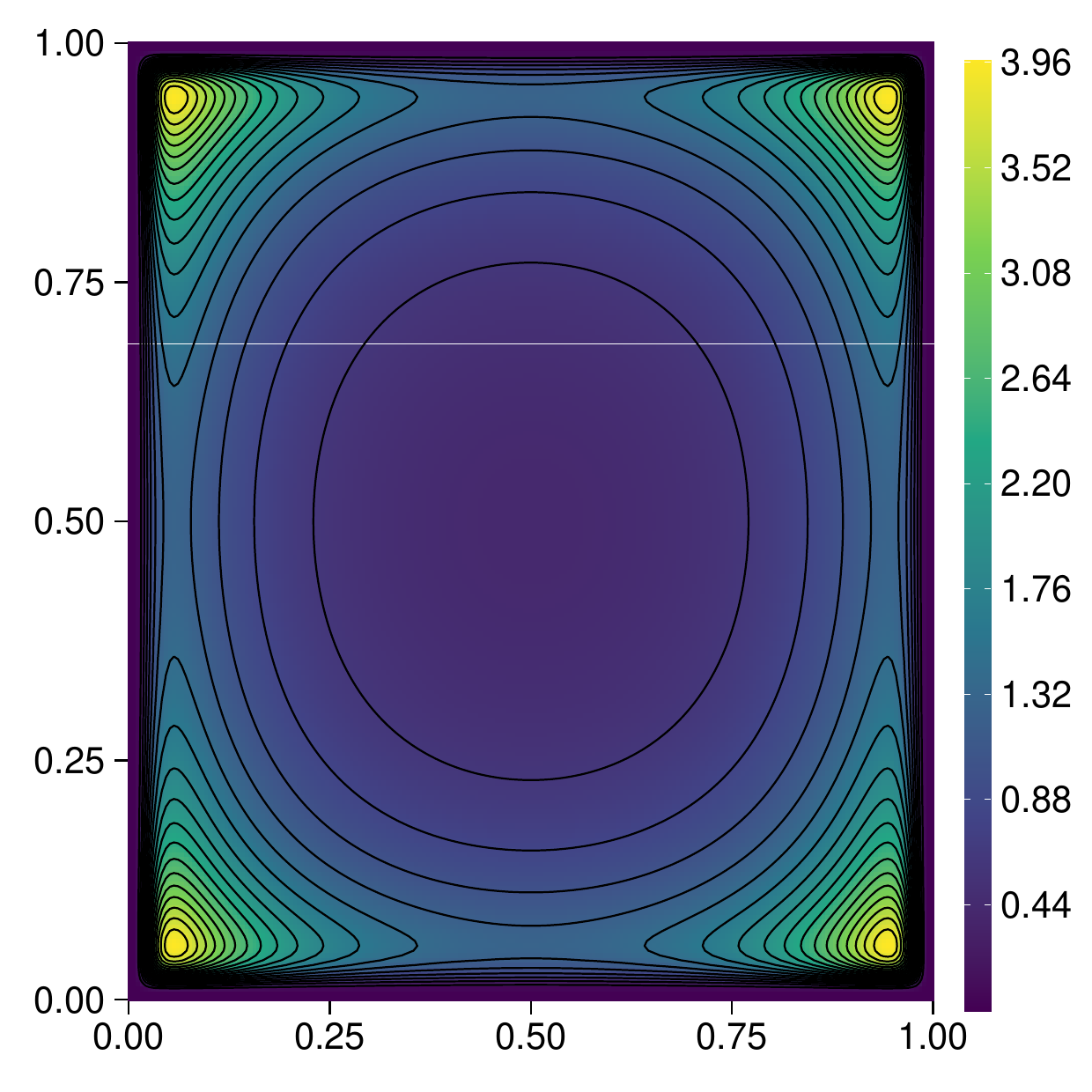}
        \includegraphics[width = 6.5cm]{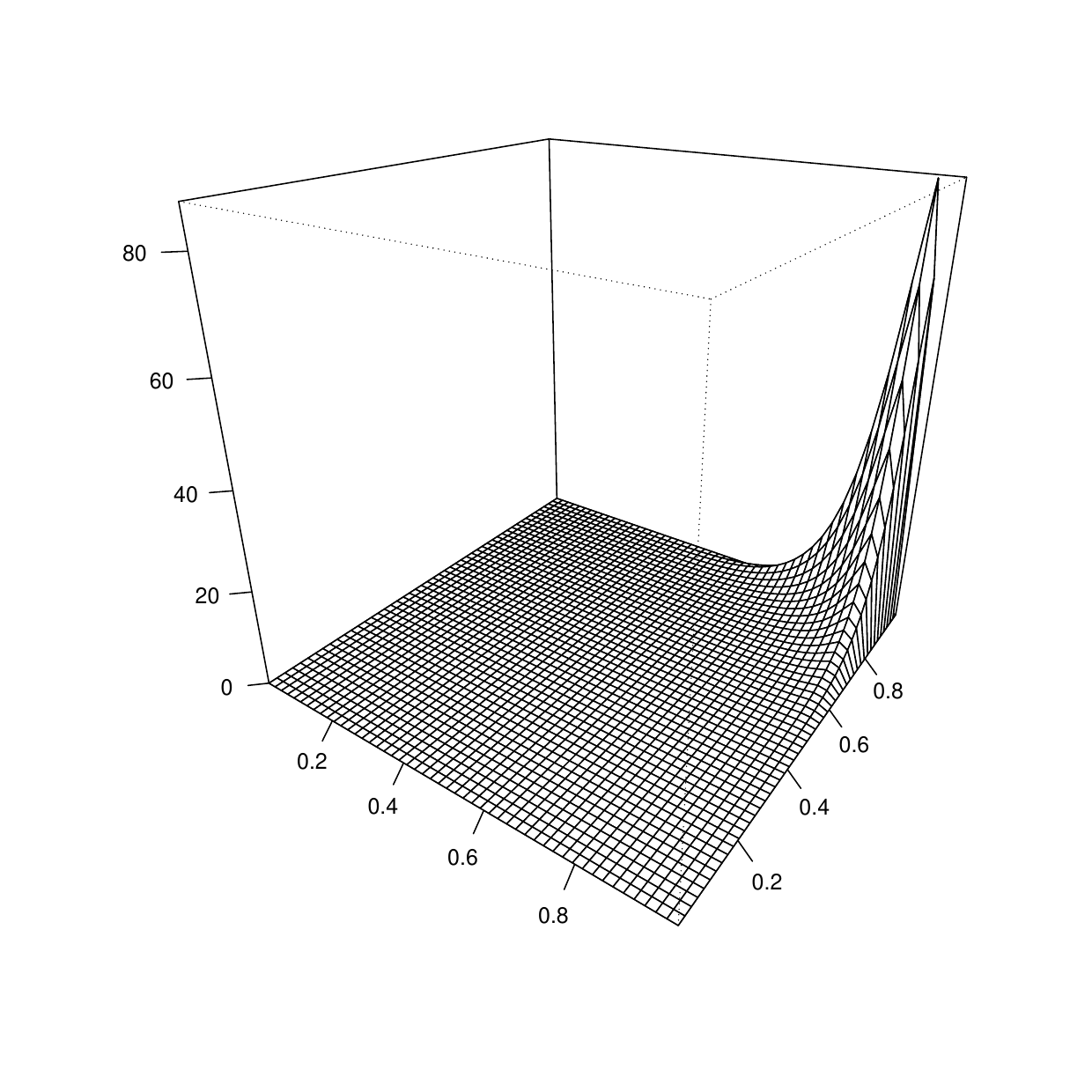}
        \includegraphics[width = 6.5cm]{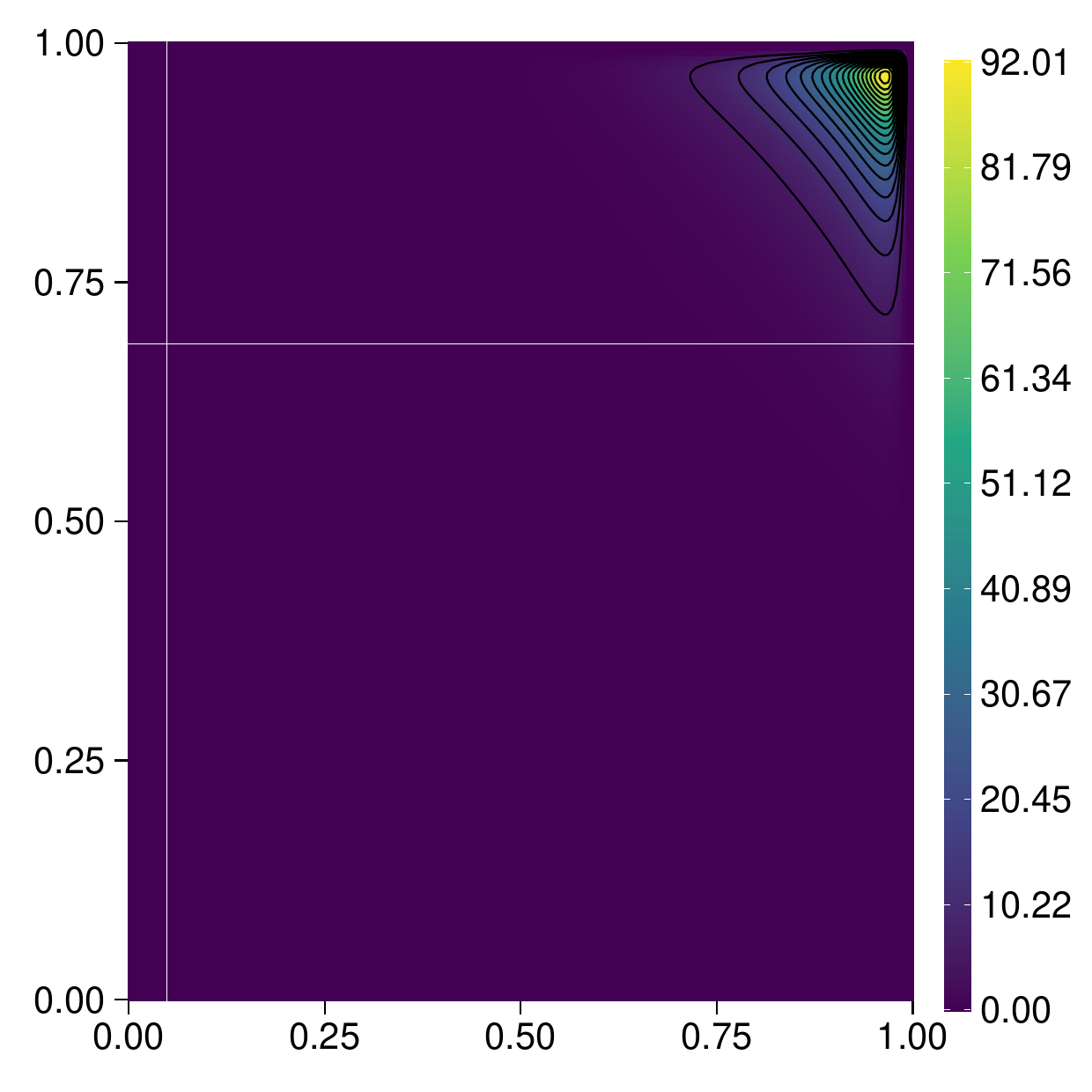}
    \caption{Surface and contour graphs for
    $\boldsymbol{\theta_1}$ = (0.5, 0.5, 2, 2, 0), 
    $\boldsymbol{\theta_2}$ = (0.5, 0.5, 5, 5, 0) and 
    $\boldsymbol{\theta_3}$ = (0.9, 0.9, $\sqrt{11}$, $\sqrt{11}$,0), scenario 3.}
    \label{fig:superficie-e-contono-3}
\end{figure}
\begin{table}[h!]
\centering{
    \caption{Mean, Bias, RMSE and Coverage of $95\%$ confidence intervals ($\%$) for $\bm{\theta} = (0.5, 0.5, 2, 2, 0)^{\top}$, scenario 3.} 
\vspace{0.5cm}
\begin{small}
\begin{tabular}{clRRRRR}
\hline
$n$  &   Medida   &  \mu_1  & \mu_2 & \sigma_{1}^2 & \sigma_{2}^2 & \lambda \\
\hline
\multicolumn{1}{c}{\multirow{4}{*}{50}}     
    &   Mean   &   0.501   &   0.500   &   1.963   &   1.984   &  -0.017   \\ 
    &   Bias    &   0.001   &   0.000   &  -0.037   &  -0.016   &  -0.017   \\ 
    &   RMSE    &   0.026   &   0.027   &   0.197   &   0.204   &   0.439   \\ 
    &   Coverage      &   94.3    &   94.0    &   92.9    &   93.1    &   93.0    \\ 
\hline
\multicolumn{1}{c}{\multirow{4}{*}{100}}     
    &   Mean   &   0.499   &   0.500   &   1.978   &   1.984   &   0.001   \\ 
    &   Bias    &  -0.001   &   0.000   &  -0.022   &  -0.016   &   0.001   \\ 
    &   RMSE    &   0.019   &   0.019   &   0.143   &   0.137   &   0.303   \\ 
    &   Coverage      &   94.9    &   95.6    &   93.8    &   94.7    &   93.2    \\ 
\hline
\multicolumn{1}{c}{\multirow{4}{*}{150}}     
    &   Mean   &   0.499   &   0.500   &   1.991   &   1.994   &   0.001   \\ 
    &   Bias    &  -0.001   &   0.000   &  -0.009   &  -0.006   &   0.001   \\ 
    &   RMSE    &   0.015   &   0.015   &   0.114   &   0.115   &   0.244   \\ 
    &   Coverage      &   95.3    &   95.4    &   93.5    &   95.8    &   93.0    \\
\hline
\multicolumn{1}{c}{\multirow{4}{*}{200}}     
    &   Mean   &   0.500   &   0.500   &   1.991   &   1.987   &  -0.009   \\ 
    &   Bias    &   0.000   &   0.000   &  -0.009   &  -0.013   &  -0.009   \\ 
    &   RMSE    &   0.014   &   0.013   &   0.098   &   0.097   &   0.222   \\ 
    &   Coverage      &   95.0    &   94.7    &   94.2    &   94.9    &   93.1    \\  
\hline
\multicolumn{1}{c}{\multirow{4}{*}{1000}}     
    &   Mean   &   0.500   &   0.500   &   1.997   &   1.999   &  -0.006   \\ 
    &   Bias    &   0.000   &   0.000   &  -0.003   &  -0.001   &  -0.006   \\ 
    &   RMSE    &   0.006   &   0.006   &   0.044   &   0.044   &   0.098   \\ 
    &   Coverage      &   94.8    &   95.6    &   95.2    &   94.6    &   93.6    \\  
\hline
\end{tabular}
\end{small}
\label{tab:scenario-results-3.1}}
\end{table}

\begin{table}[h!]
\centering{
    \caption{Mean, Bias, RMSE and Coverage of $95\%$ confidence intervals ($\%$) for $\bm{\theta} = (0.5, 0.5, 5, 5, 0)^{\top}$.} 
\vspace{0.5cm}
\begin{small}
\begin{tabular}{clRRRRR}
\hline
$n$  &   Medida   &  \mu_1  & \mu_2 & \sigma_{1}^2 & \sigma_{2}^2 & \lambda \\
\hline
\multicolumn{1}{c}{\multirow{4}{*}{50}}     
    &   Mean   &   0.501   &   0.500   &   4.923   &   4.936   &   0.018   \\ 
    &   Bias    &   0.001   &   0.000   &  -0.077   &  -0.064   &   0.018   \\ 
    &   RMSE    &   0.039   &   0.039   &   0.537   &   0.478   &   0.428   \\ 
    &   Coverage      &   93.3    &   92.5    &   88.7    &   91.9    &   93.2    \\ 
\hline
\multicolumn{1}{c}{\multirow{4}{*}{100}}     
    &   Mean   &   0.500   &   0.501   &   4.983   &   4.953   &  -0.005   \\ 
    &   Bias    &   0.000   &   0.001   &  -0.017   &  -0.047   &  -0.005   \\ 
    &   RMSE    &   0.026   &   0.027   &   0.358   &   0.355   &   0.302   \\ 
    &   Coverage      &   94.8    &   93.9    &   91.7    &   91.5    &   94.4    \\ 
\hline
\multicolumn{1}{c}{\multirow{4}{*}{150}}     
    &   Mean   &   0.501   &   0.499   &   4.974   &   4.978   &  -0.004   \\ 
    &   Bias    &   0.001   &  -0.001   &  -0.026   &  -0.022   &  -0.004   \\ 
    &   RMSE    &   0.023   &   0.022   &   0.295   &   0.298   &   0.247   \\ 
    &   Coverage      &   94.0    &   93.7    &   91.0    &   91.7    &   94.2    \\
\hline
\multicolumn{1}{c}{\multirow{4}{*}{200}}     
    &   Mean   &   0.500   &   0.500   &   4.986   &   4.986   &  -0.004   \\ 
    &   Bias    &   0.000   &   0.000   &  -0.014   &  -0.014   &  -0.004   \\ 
    &   RMSE    &   0.019   &   0.018   &   0.255   &   0.240   &   0.210   \\ 
    &   Coverage      &   94.9    &   95.5    &   93.1    &   93.9    &   94.9    \\  
\hline
\multicolumn{1}{c}{\multirow{4}{*}{1000}}     
    &   Mean   &   0.500   &   0.500   &   4.995   &   4.997   &  -0.006   \\ 
    &   Bias    &   0.000   &   0.000   &  -0.005   &  -0.003   &  -0.006   \\ 
    &   RMSE    &   0.008   &   0.008   &   0.112   &   0.113   &   0.093   \\ 
    &   Coverage      &   95.8    &   94.1    &   94.5    &   93.8    &   94.9    \\  
\hline
\end{tabular}
\end{small}
\label{tab:scenario-results-3.2}}
\end{table}

\begin{table}[h!]
\centering{
    \caption{Mean, Bias, RMSE and Coverage of $95\%$ confidence intervals ($\%$) for $\bm{\theta} = (0.9, 0.9, \sqrt{11}, \sqrt{11}, 0)^{\top}$.} 
\vspace{0.5cm}
\begin{small}
\begin{tabular}{clRRRRR}
\hline
$n$  &   Medida   &  \mu_1  & \mu_2 & \sigma_{1}^2 & \sigma_{2}^2 & \lambda \\
\hline
\multicolumn{1}{c}{\multirow{4}{*}{50}}     
    &   Mean   &   0.901   &   0.900   &   3.270   &   3.274   &  -0.008   \\ 
    &   Bias    &   0.001   &   0.000   &  -0.046   &  -0.043   &  -0.008   \\ 
    &   RMSE    &   0.011   &   0.012   &   0.330   &   0.348   &   0.423   \\ 
    &   Coverage      &   93.5    &   93.1    &   94.1    &   91.6    &   92.7    \\ 
\hline
\multicolumn{1}{c}{\multirow{4}{*}{100}}     
    &   Mean   &   0.900   &   0.900   &   3.297   &   3.290   &  -0.011   \\ 
    &   Bias    &   0.000   &   0.000   &  -0.020   &  -0.027   &  -0.011   \\ 
    &   RMSE    &   0.008   &   0.008   &   0.237   &   0.240   &   0.298   \\ 
    &   Coverage      &   93.7    &   93.9    &   93.2    &   93.6    &   94.2    \\ 
\hline
\multicolumn{1}{c}{\multirow{4}{*}{150}}     
    &   Mean   &   0.900   &   0.900   &   3.311   &   3.300   &  -0.016   \\ 
    &   Bias    &   0.000   &   0.000   &  -0.006   &  -0.016   &  -0.016   \\ 
    &   RMSE    &   0.007   &   0.006   &   0.192   &   0.186   &   0.253   \\ 
    &   Coverage      &   94.6    &   94.7    &   94.2    &   95.2    &   93.0    \\
\hline
\multicolumn{1}{c}{\multirow{4}{*}{200}}     
    &   Mean   &   0.900   &   0.900   &   3.301   &   3.312   &  -0.013   \\ 
    &   Bias    &   0.000   &   0.000   &  -0.015   &  -0.005   &  -0.013   \\ 
    &   RMSE    &   0.005   &   0.006   &   0.168   &   0.164   &   0.207   \\ 
    &   Coverage      &   95.0    &   93.4    &   94.7    &   95.4    &   94.4    \\  
\hline
\multicolumn{1}{c}{\multirow{4}{*}{1000}}     
    &   Mean   &   0.900   &   0.900   &   3.312   &   3.310   &  -0.007   \\ 
    &   Bias    &   0.000   &   0.000   &  -0.004   &  -0.007   &  -0.007   \\ 
    &   RMSE    &   0.003   &   0.003   &   0.072   &   0.072   &   0.096   \\ 
    &   Coverage      &   94.7    &   94.6    &   96.1    &   95.6    &   94.3    \\  
\hline
\end{tabular}
\end{small}
\label{tab:scenario-results-3.3}}
\end{table}
\end{document}